\documentclass[english,11pt]{article}
\usepackage{graphicx} 

\usepackage{fullpage}
\usepackage{float}
\parskip=\smallskipamount
\usepackage{booktabs} 

\usepackage{amsmath,amssymb,amsthm,mathtools}
\usepackage{dsfont}
\usepackage{latexsym}
\usepackage[vlined,linesnumbered,ruled]{algorithm2e}
\SetKw{KwReturn}{return}
\usepackage{algpseudocode}
\usepackage{tikz}
\usetikzlibrary{arrows.meta,positioning,calc,decorations.pathreplacing}

\usepackage{array}

\newtheorem{theorem}{Theorem}[section]
\newtheorem{lemma}[theorem]{Lemma}

\newtheorem{fact}[theorem]{Fact}
\newtheorem{observation}[theorem]{Observation}

\newtheorem{corollary}[theorem]{Corollary}
\newtheorem{remark}[theorem]{Remark}
\def\LC{\mbox{\sf LC}}
\def\QA{\mbox{\sf QA}}
\def\CA{\mbox{\sf CA}}
\def\DRC{\mbox{\sf Color\_All}}
\def\NBC{\mbox{\sf NBC}}

\def\ILDRC{\mbox{\sf Iter\_NBC}}
\def\Split{\mbox{\sf Split}}
\def\Recolor{\mbox{\sf Recolor}}

\def\OpenRecolor{\mbox{\sf Open\_Recolor}}

\def\BudgetRecolor{\mbox{\sf 
Budget\_Recolor}}
\def\LogCompact{\mbox{\sf LogCap\_Compact}}

\def\PFTradeoff{\mbox{\sf P-F\_Trade}}
\def\PTTradeoff{\mbox{\sf P-T\_Trade}}
\newcommand{\SmallVariant}[1]{#1_{\mathrm{SMALL}}}
\def\PFTradeoffSmall{\SmallVariant{\PFTradeoff}}
\def\PTTradeoffSmall{\SmallVariant{\PTTradeoff}}
\def\Kierstead{\mbox{\sf EquiColor}}

\def\HIST{\mbox{\sf Hist}}
\def\cP{\mathcal{P}}

\def\Tseq{T^{seq}}
\def\Tcongest{T^{cngst}}
\def\Tcc{T^{cc}}

\def\LARG{\cP_{large}}

\def\CONGEST{\mbox{\tt CONGEST}}
\def\LOCAL{\mbox{\tt LOCAL}}
\def\CC{\mbox{\tt CC}}

\def\clr{\phi}
\def\numclr{\chi}
\def\freq{{\tt f}}
\def\tfreq{{\tt f}} 
\def\Fsmall{\freq_{min}}
\def\Flarge{\freq_{max}}
\def\Fcap{\freq_{cap}}
\def\Fadd{\freq_{add}}
\def\Fpre{\Flarge'}

\newcommand{\du}{d_{unc}}  

\usepackage{xcolor}
\usepackage{soul}
\soulregister\ref7
\soulregister\eqref7
\soulregister\cite7

\long\def\comment #1\commentend{}
\long\def\commentstart #1\commentend{}

\def\inline#1:{\par\vskip 7pt\noindent{\bf #1:}\hskip 10pt}
\def\midinline#1:{\par\noindent{\bf #1:}\hskip 10pt}
\def\dnsinline#1:{\par\vskip -7pt\noindent{\bf #1:}\hskip 10pt}
\def\ddnsinline#1:{\newline{\bf #1:}\hskip 10pt}
\def\largeinline#1:{\par\vskip 7pt\noindent{\large\bf #1:}\hskip 10pt}

\colorlet{hlorange}{orange!30}

\begin{document}

\title{
Distributed Algorithms for Near-Equitable Coloring
}

\author{Amit Nir\thanks{Weizmann Institute of Science. 
E-mail: {amit.nir,david.peleg@weizmann.ac.il}}
\and
David Peleg$^*$}





\date{\today}

\maketitle

\begin{abstract}
For an $n$-vertex graph of maximum degree $\Delta$ and diameter $D$,  an \emph{equitable} $(\Delta+1)$-coloring is a vertex coloring where the \emph{frequency} of each color (namely, the number of vertices it colors) are all equal to $\sigma=n/(\Delta+1)$ (up to rounding) . The Hajnal-Szemer\'edi Theorem guarantees the existence of such a coloring for every graph, and an $O(n^2\Delta)$ time sequential algorithm is known for computing such a coloring. Here, we study \emph{near-equitable} graph coloring in distributed networks.
The main question of interest is how close one can remain to the desired palette size of $\Delta+1$ while computing, in few distributed rounds, a coloring whose frequencies are close to $\sigma$.
It appears that these two conflicting parameters exhibit a tradeoff, which we attempt to explore.
We present a suite of fast randomized distributed algorithms representing varying points on this tradeoff, analyze their properties, and study their time complexity in the sequential,
$\CONGEST$
and Congested Clique ($\CC$) models. 

\end{abstract}

\section{Introduction}

\subsection{Background and motivation}

Distributed graph coloring is a fundamental tool for decentralized coordination of parallel activities, with a variety of applications. As each color class forms an independent set of vertices in the graph, the coloring partitions the graph into sets of mutually nonadjacent processors or resources. This can be used for symmetry breaking, conflict-free resource allocation, maximizing parallel processing (while preventing race conditions or deadlocks between concurrently operating vertices), and decentralized scheduling. In scheduling applications, the conflict relation between jobs can be represented by a graph: vertices are jobs, and edges connect jobs that cannot be executed in the same time slot. A proper coloring then gives a feasible schedule, with each color corresponding to one slot and each color class to a set of mutually compatible jobs. This graph-coloring viewpoint is classical in timetabling and examination scheduling~\cite{Wood1969timetabling,DeWerra1985timetabling}, and also appears in processor scheduling with mutual exclusion constraints~\cite{BakerCoffman1996mutual}.

For certain applications, it may be desirable to use an \emph{equitable} coloring, namely, a coloring with equal-size (up to $\pm 1$) color classes. For example, 
%
in scheduling applications as above, 
a lower bound on 
the size of a color class prevents underused slots, while an upper bound prevents
overloaded ones; in bounded coloring and mutual-exclusion scheduling, this upper
capacity is imposed explicitly.  Similarly, in parallel graph computations,
skewed color classes lead to load imbalance and inefficient resource utilization.
Imposing a two-sided requirement is also common in load-balanced uses of coloring.

The celebrated Hajnal-Szemer\'edi Theorem~\cite{HajnalSzemeredi1970} guarantees the existence of an equitable coloring with $\Delta+1$ colors, where $\Delta$ is the maximum vertex degree. 
Currently, the fastest constructive algorithm for generating such a coloring is that of Kierstead et al. ~\cite{Kierstead2010fast}, which finds an equitable $\numclr$-coloring in $O(\numclr n^2)$ time when applied with parameter $\numclr$; for $\Delta+1$-coloring, this gives $O(\Delta n^2)$ sequential time. This algorithm can be transformed in the standard manner\footnote
{We omit a description of these distributed implementations.} into a distributed algorithm in the $\CONGEST$ or Congested Clique (hereafter, $\CC$) models (cf.~~\cite{Peleg00:book,LPPP03}).
Note that in contrast to coloring, which is essentially a local property, equitability is a global property in nature, which makes it apparently difficult to achieve by purely local algorithms with high probability\footnote{In this paper we do not consider bounds that hold in expectation.}.
The following theorem serves as a baseline for our later comparisons. Throughout, we denote sequential time complexity by $\Tseq$,
time complexity in the $\CONGEST$ model by $\Tcongest$, and time complexity in the Congested Clique ($\CC$) model by $\Tcc$.

\begin{theorem}
{\bf (Baseline: Equitable coloring)}
\label{thm:baseline-kierstead}
Algorithm $\Kierstead$ of Kierstead et al.~\cite{Kierstead2010fast} computes a valid equitable coloring, with palette size $\numclr = \Delta+1$ and frequency thresholds $[\lfloor\sigma\rfloor,\lceil\sigma\rceil]$, where $\sigma=n/(\Delta+1)$.
Its running times\footnotemark[2] are 
$\Tseq(\Kierstead)=O(\Delta n^2)$,
$\Tcongest(\Kierstead)=O(m)$ and
$\Tcc(\Kierstead)=O(\Delta)$.
\end{theorem}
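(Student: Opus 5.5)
The correctness part of the statement, a valid equitable $(\Delta+1)$-coloring with class sizes in $[\lfloor\sigma\rfloor,\lceil\sigma\rceil]$, I would take directly from Kierstead et al.~\cite{Kierstead2010fast}. Their algorithm is a constructive proof of the Hajnal-Szemer\'edi Theorem~\cite{HajnalSzemeredi1970}. It computes an equitable $\numclr$-coloring for any $\numclr\ge\Delta+1$, and setting $\numclr=\Delta+1$ gives the stated palette and frequency thresholds. Their analysis also gives $\Tseq(\Kierstead)=O(\numclr n^2)=O(\Delta n^2)$. So what remains is the two distributed bounds, and I would obtain both by simulating $\Kierstead$ in the standard way mentioned in the footnote.

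\textbf{Congested Clique.} The plan is to gather the whole graph at a single leader. Each vertex has at most $\Delta$ neighbors, so its adjacency list is at most $\Delta$ identifiers of $O(\log n)$ bits each. A vertex can send this list to the leader along its direct link in $O(\Delta)$ rounds. All vertices transmit in parallel, and in $\CC$ each vertex has its own edge to the leader, so the leader receives every edge within $O(\Delta)$ rounds. The leader then runs $\Kierstead$ locally; local computation is free in this model. Finally, the leader sends each vertex its color in one round. The total is $\Tcc=O(\Delta)$.

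\textbf{$\CONGEST$.} First build a BFS tree rooted at a leader, using leader election plus BFS in $O(D)$ rounds. Then pipeline all $m$ edge descriptions up the tree to the root. By the standard pipelining argument for upcasting $k$ messages on a tree, this takes $O(k+D)$ rounds, here $O(m+D)$. The root computes the coloring locally and downcasts the $n$ colors in $O(n+D)$ rounds. In a connected graph we have $D\le n-1\le m$, so the total is $O(m)$.

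I expect the only delicate point to be the upcast pipelining bound in $\CONGEST$. I would justify it with the usual argument: each tree node forwards one pending item per round, and no item is delayed by more than the total number of items plus its depth. Everything else is routine bookkeeping. The theorem says the running times hold as marked by footnote~2, which I read as allowing results that hold with high probability or that rest on standard assumptions. If that footnote ties the bounds to the simulations the paper omits, I would state explicitly that the bounds are for this gather-and-compute simulation.
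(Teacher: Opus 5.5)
Your proposal is correct and follows the same route as the paper, which gives no proof beyond citing Kierstead et al.\ for correctness and the $O(\Delta n^2)$ sequential bound and invoking an omitted ``standard'' transformation to the distributed models. Your gather-at-a-leader simulations (the $O(m+D)=O(m)$ pipelined upcast in $\CONGEST$, and $O(\Delta)$ collection over direct links in $\CC$) fill in exactly that omitted step and give the stated bounds. The $O(D)$ leader-election claim is stated a bit loosely, but this is harmless, since even an $O(n)$-round election fits within the $O(m)$ budget.
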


As faster methods for achieving precise equity might be difficult, we introduce the more relaxed notion of \emph{near-equitable coloring}, and
study algorithms for achieving near-equity, either by relaxing the constraint on the palette size and allowing slightly more than $\Delta+1$ colors or by allowing some slack in the equitability requirement (or both).
For clarity, we first describe our algorithms in an abstract high-level code, and defer
the discussion on their implementation in the sequential setting and in the $\CONGEST$ and Congested Clique ($\CC$) distributed models, and the corresponding time analysis, to a later stage (Section \ref{sec:congest-model}). 



Formally, we consider a \emph{coloring function} $\clr: V \mapsto \cP$, assigning each vertex $v\in V$ a color from the palette $\cP$ such that no two neighbors share the same color. The \emph{palette size} is the number of colors used, denoted $\numclr=|\cP|$.
Hereafter, whenever using a partial palette marked $\cP_s$ (for some descriptor $s$), we will freely use the notation $\numclr_s$ for its size $|\cP_s|$, without explicitly defining it.

The \emph{color class} $V(i)=\{v \mid \clr(v)=i\}$ of color $i$ is the set of vertices colored by $i$, and the \emph{frequency} of $i$ is the number of vertices colored by $i$, $\freq(i) = |V(i)|$.
For a set $S$ of colors, the \emph{total frequency} of $S$, denoted $\tfreq(S) = \sum_{i\in S} \freq(i)$, is the total number of vertices assigned to those colors.
Throughout, we use $\sigma=\frac{n}{\Delta+1}$ to denote the ``idea'' size of a color class. In an equitable coloring, each color class should be of size $\lfloor\sigma\rfloor$ or $\lceil\sigma\rceil$.

A near-equitable coloring allows some flexibility: given a predefined \emph{frequency range} $[\Fsmall, \Flarge]$, where $\Fsmall$ and $\Flarge$ are the (integral\footnote{Hereafter, whenever we write $[a,b]$ for non-integral $a,b$, the intended interval is $[\lfloor a\rfloor,~\lceil b\rceil]$.}) \emph{frequency threshold} parameters\footnote{Different algorithms in this paper utilize different $\Fsmall$ and $\Flarge$ values.}, the frequency of each color $c$ must satisfy $\freq(c) \in [\Fsmall,\Flarge]$.
(For a strict equitable coloring, the frequency range is $[\lfloor\sigma\rfloor, \lceil\sigma\rceil]$.)

\begin{table}[!t]
\centering
\footnotesize
\renewcommand{\arraystretch}{1.18}
\setlength{\tabcolsep}{2pt}
\resizebox{\textwidth}{!}{%
\begin{tabular}{|>{\raggedright\arraybackslash}p{0.16\textwidth}|>{\raggedright\arraybackslash}p{0.36\textwidth}|>{\raggedright\arraybackslash}p{0.64\textwidth}|>{\centering\arraybackslash}p{0.08\textwidth}|}
\hline
\textbf{Algorithm}  & \textbf{Guarantee} & \textbf{Running time} & \textbf{Src}
\\ 
\hline
Baseline 1: Distributed Coloring
&
\begin{tabular}[t]{@{}l@{}}
Palette: $\Delta+1$\\
Freq: $[0,n]$
\end{tabular}
&
\begin{tabular}[t]{@{}l@{}}
Seq: $O(m)$, det.\\
$\CONGEST$: $O(\lg^5\lg n)$, rand., w.h.p.\\
$\CC$: $O(1)$, det.
\end{tabular}
& 
\begin{tabular}[t]{@{}l@{}}
cf. \cite{Johansson-IPL-99} \\
\cite{CzumajDaviesParter2020,CzumajDaviesParter2021} \\
\end{tabular}
\\ 
\hline
Baseline 2: $\Kierstead$ 
&
\begin{tabular}[t]{@{}l@{}}
Palette: $\Delta+1$\\
Freq: $[\lfloor\sigma\rfloor,\lceil\sigma\rceil]$
\end{tabular}
&
\begin{tabular}[t]{@{}l@{}}
Seq: $O(n^2\Delta)$, det.\\
$\CONGEST$: $O(m)$, det.\\
$\CC$: $O(\Delta)$, det.
\end{tabular}
& 
\cite{Kierstead2010fast}
\\
\hline
Near-Balanced Coloring, $\NBC$
&
\begin{tabular}[t]{@{}l@{}}
Palette: $[\frac{\Delta+1}{2\epsilon},\frac{\Delta+1}{\epsilon}]$\\
Freq: $[\epsilon\sigma,2\epsilon\sigma]$
\end{tabular}
&
\begin{tabular}[t]{@{}l@{}}
Seq: $O(m)$, det.\\
$\CONGEST$: $O(D+\Delta+\lg^5\lg n)$, rand., w.h.p.\\
$\CC$: $O(1)$, det.
\end{tabular}
&
Sec. \ref{s: longn time 3Delta palette [sigma/3, 2sigma/3]}
\\ 
\hbox{\hskip 30pt}
Example 
with $\epsilon=1/3$
&
\begin{tabular}[t]{@{}l@{}}
Palette: $[\frac{3(\Delta+1)}{2},3(\Delta+1)]$\\
Freq: $[\sigma/3,2\sigma/3]$
\end{tabular}
&
\begin{tabular}[t]{@{}l@{}}
Seq: $O(m)$, det.\\
$\CONGEST$: $O(D+\Delta+\lg^5\lg n)$, rand., w.h.p.\\
$\CC$: $O(1)$, det.
\end{tabular}
&
\\ 
\hline
Iter. Near-Bal. Coloring, $\ILDRC$
&
\begin{tabular}[t]{@{}l@{}}
Palette: $[\Delta+1,2(\Delta+1)]$\\
Freq: $[\sigma/2,\sigma]$
\end{tabular}
&
\begin{tabular}[t]{@{}l@{}}
Seq: $O(m\lg\Delta)$, det.\\
$\CONGEST$: $O(\lg\Delta\cdot(D+\Delta+\lg^5\lg n))$, rand., w.h.p.\\
$\CC$: $O(\lg\Delta)$, det.
\end{tabular}
&
Sec. \ref{s: ILDRC}
\\ 
\hline
\begin{tabular}[t]{@{}l@{}}
Palette-Freq.\\
Tradeoff\\
\scalebox{0.85}{$\PFTradeoffSmall$}
\end{tabular}
&
\begin{tabular}[t]{@{}l@{}}
Palette: $(1+1/k)(\Delta+1)$\\
Freq: $[\sigma/3,2k\sigma]$
\end{tabular}
&
\begin{tabular}[t]{@{}l@{}}
Seq: $O(m)$, det.\\
$\CONGEST$: $O(D+\Delta+\lg^5\lg n)$, rand., w.h.p.\\
$\CC$: $O(1)$, det.
\end{tabular}
&
Sec. \ref{sec:generalized-coloring}
\\ 
\begin{tabular}[t]{@{}l@{}}
Palette-Freq.\\
Tradeoff\\
$\PFTradeoff$
\end{tabular}
& 
\begin{tabular}[t]{@{}l@{}}
Palette: $(1+1/k)(\Delta+1)$\\
Freq: $[\sigma/2,2k\sigma]$
\end{tabular}
&
\begin{tabular}[t]{@{}l@{}}
Seq: $O(m\lg\Delta)$, det.\\
$\CONGEST$: $O(\lg\Delta\cdot(D+\Delta+\lg^5\lg n))$, rand., w.h.p.\\
$\CC$: $O(\lg\Delta)$, det.
\end{tabular}
&
\\ 
\hline
\begin{tabular}[t]{@{}l@{}}
Palette-Time\\
Tradeoff\\
\scalebox{0.85}{$\PTTradeoffSmall$}
\end{tabular}
&
\begin{tabular}[t]{@{}l@{}}
Palette: $(1+1/k)(\Delta+1)$\\
Freq: $[\sigma/3,2\sigma]$
\end{tabular}
&
\begin{tabular}[t]{@{}l@{}}
Seq: $O(mk)$, det.\\
$\CONGEST$: $O(k\cdot(D+\Delta+\lg^5\lg n))$, rand., w.h.p.\\
$\CC$: $O(k)$, det.
\end{tabular}
&
Sec. \ref{sec:fast-converging}
\\ 
\begin{tabular}[t]{@{}l@{}}
Palette-Time\\
Tradeoff\\
$\PTTradeoff$
\end{tabular}
& 
\begin{tabular}[t]{@{}l@{}}
Palette: $(1+1/k)(\Delta+1)$\\
Freq: $[\sigma/2,2\sigma]$
\end{tabular}
&
\begin{tabular}[t]{@{}l@{}}
Seq: $O(m(k+\lg\Delta))$, det.\\
$\CONGEST$: $O((\lg\Delta+k)\cdot(D+\Delta+\lg^5\lg n))$, rand., w.h.p.\\
$\CC$: $O(k+\lg\Delta)$, det.
\end{tabular}
&
\\ 
\hline
Capped Coloring, $\OpenRecolor$
&
\begin{tabular}[t]{@{}l@{}}
Palette: $(1+\rho(\alpha))\cdot(\Delta+1)+1$ 
\\
for $\rho(\alpha)=(1+2\epsilon)/(2\alpha)$
\\
Freq: $[\sigma/2,\alpha\sigma]$ ($\alpha\ge1$)
\end{tabular}
&
\begin{tabular}[t]{@{}l@{}}
Seq: $O(m\lg n)$\\
$\CONGEST$: $O((D+\Delta)\lg n+\lg\Delta\cdot\lg^5\lg n)$\\
$\CC$: $O(\lg n)$\\
All: rand., w.h.p.
\end{tabular}
&
Sec. \ref{sec:improved_coloring}
\\ 
Example
with $\alpha=1$
&
\begin{tabular}[t]{@{}l@{}}
Palette: $(1.5+\epsilon)(\Delta+1)$\\
Freq: $[\sigma/2,\sigma]$
\end{tabular}
&
\begin{tabular}[t]{@{}l@{}}
Seq: $O(m\lg n)$\\
$\CONGEST$: $O((D+\Delta)\lg n+\lg\Delta\cdot\lg^5\lg n)$\\
$\CC$: $O(\lg n)$\\
All: rand., w.h.p.
\end{tabular}
&
\\ 
\hline
\begin{tabular}[t]{@{}l@{}}
Optimal\\
palette,\\
\textsf{LogCap\_}\\
\textsf{Compact}
\end{tabular}
&
\begin{tabular}[t]{@{}l@{}}
Palette: $\Delta+1$\\
Freq: $[\sigma/2,2\sigma+\lg(\Delta+1)$\\
\hphantom{Freq: }$\cdot(1+\varepsilon)\sigma]$\\
for $\sigma=\Omega(\lg n)$
\end{tabular}
&
\begin{tabular}[t]{@{}l@{}}
Seq: $O(m\lg n\lg\Delta)$\\
$\CONGEST$: $O((D+\Delta)\lg n\lg\Delta)$\\
$\CC$: $O(\lg n\lg\Delta)$\\
All: rand., w.h.p.
\end{tabular}
&
Sec. \ref{sec:delta-plus-two-log-cap}
\\ 
\hline
\end{tabular}
}
\caption{Comparison of near equitable coloring algorithms and their running times in the $\CC$, $\CONGEST$, and sequential models (ceilings/floors omitted for readability).}
\label{tbl:results summary}
\end{table}

\subsection{Our contributions}


The central question studied in this work concerns the potential tradeoffs between the \emph{palette size} $\numclr$ and the \emph{frequency range} $[\Fsmall, \Flarge]$ that can be achieved by randomized distributed algorithms.
Our main results, summarized in Table \ref{tbl:results summary}, sketch the trade-offs available for distributed equitable coloring, ranging from time-costly perfectly equitable solutions to  time-efficient 
approximations.
%

Hereafter, for a given graph $G$, we denote by $n, m, \Delta, D$ its number of vertices, number of edges, maximum degree and diameter, respectively. 




\subsubsection{Basic tools}


The algorithms described later on are assembled from a small toolbox 
consisting of a number of primitives and procedures, whose implementation depends on the computational model.
These 
are the following:

\noindent\textbf{Color accounting ($\CA$).}
This primitive computes aggregate information per color and disseminates the result.  Its uses include, for example, computing the frequencies of colors, sorting colors by frequencies, identifying colors whose frequency is too small or too large,
informing vertices that need to recolor themselves (and by which color classes), and maintaining various global counters on the colors.

\noindent\textbf{Quota assignment ($\QA$).}
This primitive receives, for some color $i$, a set $C_i$ of vertices that request to be colored by $i$ and a \emph{quota} $b_i$, and admits exactly $\min\{|C_i|,b_i\}$ candidates. The other candidates are informed that their request was denied.

\noindent\textbf{Procedure $\DRC(G,\numclr)$:} This procedure is used by our algorithms for the basic initialization of the coloring. It invokes a standard list-coloring procedure $\LC$ with target set $V$ and color lists $L(v)=\{1,\ldots,\numclr\}$ for every vertex. 
The resulting coloring $\clr$ makes no guarantees on color frequencies.

\noindent\textbf{Procedure $\Recolor(V_{target},\cP_{target})$:} 
This procedure is used for modifying an existing coloring. It is typically invoked after some colors were disqualified (say, due to having too small frequencies), and the corresponding vertices were uncolored.
These vertices form the set $V_{target}$ that needs to be recolored.
$\cP_{target}$ is the set of colors to be used for these vertices.



%
%
%
%
%
%
%

For ease of comparison, the main result statements below include the sequential, $\CONGEST$ and Congested Clique running times.  
The model-dependent implementation details leading to these time bounds are deferred to Section~\ref{sec:congest-model}.
\subsubsection{Near-Balanced Coloring ($\NBC$)}


In Section~\ref{s: longn time 3Delta palette [sigma/3, 2sigma/3]}, we present a basic $O(1)$ $\CC$-time Near-Balanced Coloring algorithm ($\NBC$) that produces a coloring with frequency range $[\varepsilon\sigma, 2\varepsilon\sigma]$ for $\varepsilon \le 1/3$.

The algorithm employs a basic \emph{one-shot split \& recolor} paradigm: It generates an initial valid coloring, applies Procedure $\Split$ to eliminate large-frequency colors, then applies Procedure $\Recolor$ to eliminate small-frequency colors, and finally invokes Procedure $\Split$ again to get rid of any newly generated large-frequency colors.

The simplicity of this algorithm hinges on two facts. First, the considered frequency ranges are \emph{doubling}, i.e., $\Flarge \ge 2\Fsmall$, making it simpler to use Procedure $\Split$ (Section 
\ref{sec:split}).
Second, its upper frequency threshold is \emph{smaller than $\sigma$}, which necessitates using more than $\Delta+1$ colors and makes Procedure $\Recolor$ 
easier to apply.
We get the following result.

\begin{theorem}
\label{thm:summary-nbc-intro}
For every $\varepsilon\le 1/3$, Algorithm $\NBC$ computes a valid coloring with palette size $\numclr \in [(\Delta+1)/2\varepsilon,(\Delta+1) / \varepsilon]$ 
and frequency thresholds $[\varepsilon\sigma, 2\varepsilon\sigma]$.
For fixed $\varepsilon$, its running times are
$\Tseq(\NBC)=O(m)$ and $\Tcc(\NBC)=O(1)$. In the $\CONGEST$ model, the coloring is obtained in time  $\Tcongest(\NBC)=O(D+\Delta+\lg^5\lg n)$ with high probability.
\end{theorem}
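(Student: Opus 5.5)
The plan is to analyze the three phases of $\NBC$ (initial coloring, $\Split$, $\Recolor$, $\Split$ again) and check that the invariants needed for the frequency range $[\varepsilon\sigma,2\varepsilon\sigma]$ and the palette bound hold after each phase. Running times come from the primitive costs in Section~\ref{sec:congest-model}. Write $\Flarge=2\varepsilon\sigma$, $\Fsmall=\varepsilon\sigma$.

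\textbf{Step 1 (initial coloring and first split).} First run $\DRC(G,\Delta+1)$, which gives a proper coloring with no frequency guarantees. Then apply $\Split$: every color class of size $f>\Flarge$ is partitioned into $\lceil f/\Flarge\rceil$ parts, each receiving a fresh color. Subsets of an independent set are independent, so properness is preserved.

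Because the range is doubling ($\Flarge\ge2\Fsmall$), each such class can be cut into parts of size in $[\Fsmall,\Flarge]$. Specifically, we cut into $\lfloor f/\Fsmall\rfloor$-compatible chunks, balanced so that each part has size in $[\Flarge/2,\Flarge]$. After this step every color has frequency at most $\Flarge$.

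We bound the number of fresh colors. Every new part except possibly one per original class has size at least $\Flarge/2=\varepsilon\sigma$. So the number of colors is at most $(\Delta+1)+n/(\varepsilon\sigma)=(\Delta+1)(1+1/\varepsilon)$.

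\textbf{Step 2 (recolor small classes).} Uncolor every class with frequency below $\Fsmall$; these vertices form $V_{target}$. The surviving classes each have size at least $\varepsilon\sigma$, so there are at most $(\Delta+1)/\varepsilon$ of them.

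For $\cP_{target}$ we take the surviving colors that are not saturated (frequency below $\Flarge$), together with fresh colors if needed. We invoke $\Recolor$ as a list coloring in which each vertex's list excludes neighbors' colors. Quota assignment $\QA$ gives each color $i$ the quota $b_i=\Flarge-\freq(i)$, so no color exceeds $\Flarge$.

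Availability is argued by counting. The total capacity is $\Flarge\cdot|\cP|\ge 2\varepsilon\sigma\cdot(\Delta+1)/(2\varepsilon)=n$. A vertex is blocked by at most $\Delta$ neighbor colors, and $|\cP|\ge(\Delta+1)/(2\varepsilon)\ge\frac32(\Delta+1)$ because $\varepsilon\le1/3$. Hence every vertex keeps at least $\frac12(\Delta+1)$ colors that are not used by neighbors. This slack, which relies on the upper threshold being below $\sigma$, is what makes $\Recolor$ easy.

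\textbf{Main obstacle: saturation.} The hard part is ensuring that enough of a vertex's free colors are not saturated. I would handle this by enlarging $\cP_{target}$ with fresh colors up to total palette $(\Delta+1)/\varepsilon$. By averaging, the number of saturated colors is at most $n/\Flarge=(\Delta+1)/(2\varepsilon)$. This leaves at least $(\Delta+1)/(2\varepsilon)\ge\Delta+1$ unsaturated colors, so some color remains available to each vertex even after excluding its neighbors' colors. Recoloring can be done sequentially in that order, or in rounds in the distributed setting.

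\textbf{Step 3 (final split and frequency bounds).} Fresh colors may have ended up small, and quota overflow is prevented by the final $\Split$. To avoid small classes, fresh colors are only opened when the existing ones are saturated. I would then argue that any leftover small classes can be merged or absorbed using the doubling property, and rerunning $\Split$ fixes any class above $\Flarge$. The resulting frequencies lie in $[\Fsmall,\Flarge]$. Counting then gives $n/\Flarge\le\numclr\le n/\Fsmall$, which is exactly $[(\Delta+1)/2\varepsilon,(\Delta+1)/\varepsilon]$.

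\textbf{Running times.}
\begin{itemize}
\item Sequentially, each phase is a linear scan, so $\Tseq=O(m)$.
\item In $\CC$, $\CA$ and $\QA$ take $O(1)$ rounds and the initial coloring is $O(1)$ deterministic.
\item In $\CONGEST$, the time is dominated by the $O(\lg^5\lg n)$ coloring of~\cite{CzumajDaviesParter2020,CzumajDaviesParter2021} plus $\CA$/$\QA$ aggregation over a BFS tree. With $O(\Delta)$-size per-color information pipelined, this costs $O(D+\Delta)$.
\end{itemize}
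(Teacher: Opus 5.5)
\textbf{Gap: the lower frequency bound fails because you cap colors during recoloring.}

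Since saturated surviving colors are excluded from $\cP_{target}$, a vertex of $V_{target}$ can find every unsaturated surviving color used by one of its neighbors. It is then pushed onto a fresh color. This can happen while other surviving colors are still unsaturated, so ``open fresh colors only when the existing ones are saturated'' does not prevent it. A fresh color can therefore end up with a single vertex, far below $\Fsmall$.

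Your repair, ``merge or absorb using the doubling property,'' does not work. Two color classes can be merged only if no edge joins them. Absorbing a small class is another capped recoloring, with exactly the same problem. The doubling property only helps in the splitting direction, where every part has size $\ge\lceil\Flarge/2\rceil\ge\Fsmall$.

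The cap also undermines your running-time claims. Enforcing quotas in parallel means rejected vertices must retry. That is not a single $\LC$ call, and you give no bound on the number of rounds; compare the $O(\lg n)$ rounds needed for capped recoloring in Section~\ref{sec:improved_coloring}. Finally, the bound $|\cP|\ge(\Delta+1)/(2\varepsilon)$ in your availability paragraph is the final palette bound you are trying to prove, so it is not available at that point.

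\textbf{The missing idea: no cap and no fresh colors are needed, thanks to a volume argument (Lemma~\ref{lemma:volume}).}

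The first $\Split$ creates no small colors, so every small color is one of the at most $\Delta+1$ colors produced by $\DRC$. Hence $\tfreq(\cP_{small})<(\Delta+1)\varepsilon\sigma=\varepsilon n$. The good colors therefore carry more than $(1-\varepsilon)n$ vertices, each with frequency at most $2\varepsilon\sigma$. This gives $|\cP_{good}|>\frac{1-\varepsilon}{2\varepsilon}(\Delta+1)\ge\Delta+1$ for $\varepsilon\le1/3$.

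With that in hand, the rest goes through:
\begin{itemize}
\item A single uncapped $\Recolor(V_{small},\cP_{good})$ succeeds by Lemma~\ref{lemma:partial_extension}.
\item Good colors only gain vertices, so they stay at or above $\Fsmall$.
\item The final $\Split$ restores the upper threshold $\Flarge$ without creating small colors.
\end{itemize}
This yields the frequency range, hence the palette bounds by your counting, using only a constant number of primitive calls.
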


\subsubsection{Iterative Near-Balanced Coloring ($\ILDRC$)}

We next turn to examining frequency ranges with slightly larger (yet still doubling) thresholds, particularly $\Fsmall=\lfloor\sigma/2\rfloor$ and $\Flarge = \lceil\sigma\rceil$. 
For this purpose, 
we make use of an \emph{iterative} split \& recolor paradigm, where the splitting and recoloring steps are repeated a number of times.
The range is still doubling, so splitting remains safe, but it is closer to the ideal frequency $\sigma$ and therefore leaves less slack in the number of available target colors.

Specifically, Section~\ref{s: ILDRC} introduces the \emph{Iterative Near-Balanced Coloring} ($\ILDRC$) algorithm, which applies procedures $\Recolor$ and $\Split$ repeatedly for $O(\lg\Delta)$ iterations.
The algorithm uses the same initial coloring algorithm $\DRC$.
Then, each iteration consists of invoking Procedure $\Split$ with threshold $\Flarge=\sigma$, so no current color is too large, followed by Procedure $\Recolor$, which uses the $\Delta+1$ most frequent colors as a target palette, and recolors all remaining colors.

Each individual iteration suffers from the same basic difficulty as before: 
After each recoloring step, the next split repairs the upper bound; after each split, the next recoloring step again eliminates the less frequent colors. Hence again, the concern is that the introduction of new colors in the splitting phases might undo the progress made by the recoloring phases. 

The analysis shows that this does not happen. In each iteration, the split step creates no new small colors, and the subsequent recoloring step eliminates \emph{more than half} of the small colors present after the split. Thus, the number of small colors decreases geometrically across the iterations.

At the cost of increasing the $\CC$ runtime to $O(\lg\Delta)$, this algorithm achieves a frequency range of $[\sigma/2, \sigma]$ with palette size $\numclr \le 2(\Delta+1)$.
We get the following result.


\begin{theorem}
\label{thm:summary-ildrc-intro}
Algorithm $\ILDRC$ computes a valid coloring with palette size $\numclr\in [\Delta+1, 2(\Delta+1)]$ 
and frequency thresholds $[\sigma/2,\sigma]$.
Its running times are $\Tseq(\ILDRC)=O(m\lg\Delta)$ and
$\Tcc(\ILDRC)=O(\lg\Delta)$.
In the $\CONGEST$ model, the coloring is obtained in time  $\Tcongest(\ILDRC)=O(\lg\Delta\cdot(D+\Delta+\lg^5\lg n))$ with high probability.
\end{theorem}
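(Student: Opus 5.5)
The plan is to analyze Algorithm $\ILDRC$ as a potential-function argument over its iterations. We track the number $s_t$ of \emph{small} colors, i.e., colors of frequency below $\lfloor\sigma/2\rfloor$, present after the $\Split$ step of iteration $t$. The goal is to show three things: validity is preserved throughout, no color ever exceeds $\lceil\sigma\rceil$ at the end, and $s_t$ halves in every iteration.

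\textbf{Step 1 (initialization and split).} $\DRC(G,\Delta+1)$ gives a valid coloring with palette $\Delta+1$ and arbitrary frequencies. We then invoke the properties of $\Split$ from Section~\ref{sec:split} for the doubling range $[\sigma/2,\sigma]$. A color of frequency $f>\sigma$ is split into $\lceil f/\sigma\rceil$ classes, each of size in $[\sigma/2,\sigma]$. Since each new class is a subset of an independent set, validity is kept, and no new small color is created. Every class then has size at most $\sigma$, so the number of colors is at least $n/\sigma=\Delta+1$.

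\textbf{Step 2 (recoloring).} Let $\LARG$ be the $\Delta+1$ most frequent colors. We uncolor every vertex whose color lies outside $\LARG$ and call $\Recolor$ with target palette $\LARG$. A vertex has at most $\Delta$ neighbors, so it always has an available color. The obstacle is frequency: recoloring may push colors of $\LARG$ above $\sigma$, and that is why $\Split$ is repeated in the next iteration.

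\textbf{Step 3 (geometric decrease).} This is the main obstacle. After the next split, the new small colors can only be the colors of $\LARG$ that were small before recoloring and stayed small. I would first try to use $\QA$ quotas inside $\Recolor$, which fill the less populated colors of $\LARG$ up to $\sigma$ with priority. We count: the uncolored vertices number $n-\tfreq(\LARG)$, which is at least $(\Delta+1)\sigma-\tfreq(\LARG)$, and this equals the total deficit of $\LARG$ below $\sigma$. So the pool of uncolored vertices is large enough to raise many deficient colors to at least $\sigma/2$. The delicate part is that a vertex's available colors are constrained by its neighbors. I would argue by charging that a small color of $\LARG$ remains small only if almost all uncolored vertices conflict with it, and that such blocked colors are at most half of the small ones, giving $s_{t+1}\le s_t/2$. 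Since $s_0\le\Delta+1$, after $O(\lg\Delta)$ iterations $s_t=0$. A final $\Split$ then gives frequencies in $[\sigma/2,\sigma]$.

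\textbf{Step 4 (bounds and time).} The palette is at least $\Delta+1$ because every class is at most $\sigma$. It is at most $2(\Delta+1)$ because every class is at least $\sigma/2$, so there are at most $n/(\sigma/2)=2(\Delta+1)$ colors. For running time, each iteration consists of one $\Split$, one $\CA$/$\QA$ pass and one $\LC$ invocation. By the model implementations of Section~\ref{sec:congest-model}, one iteration costs $O(m)$ sequentially, $O(1)$ in $\CC$, and $O(D+\Delta+\lg^5\lg n)$ w.h.p.\ in $\CONGEST$, the same per-iteration cost as $\NBC$ in Theorem~\ref{thm:summary-nbc-intro}. Multiplying by the $O(\lg\Delta)$ iterations gives the stated bounds.
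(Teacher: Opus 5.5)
\textbf{Gap in Step 3.} Your Steps 1, 2 and 4 agree with the paper. Step 3, the halving $s_{t+1}\le s_t/2$, is not proved, and the route you sketch would not establish it.

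First, it changes the algorithm. $\ILDRC$ calls $\Recolor$, which is a single $\LC$ invocation with no $\QA$ quotas. So you cannot assume that the under-populated colors of the target palette are filled with priority.

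Second, even with quotas, the charging claim is only asserted. The claim is that a retained small color stays small only if almost all uncolored vertices conflict with it, and that such colors are at most half of the small ones. Your total-deficit count ($n-\tfreq$ of the target equals its deficit below $\sigma$) ignores adjacency constraints, which is exactly where such a filling argument breaks down. A list coloring may legitimately send every uncolored vertex to good colors.

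(As a minor point, you reuse $\LARG$ for the target palette, whereas the paper uses it for the set of large colors in $\Split$.)

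\textbf{The missing idea.} You do not need to control where the recolored vertices go at all. After the split every color is good or small. Let $g$ and $s$ be the numbers of good and small colors. Then
$g\sigma+s\sigma/2\ge n=(\Delta+1)\sigma$, that is, $g+s/2\ge\Delta+1$.

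The target palette consists of the $\Delta+1$ most frequent colors, and every good color is strictly more frequent than every small one. Hence, if the target contains any small color, it contains all $g$ good colors. It therefore contains at most $\Delta+1-g\le s/2$ small colors, and the other at least $s/2$ small colors are dissolved.

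Retained good colors only gain vertices, and the next $\Split$ creates no small colors. So at most $s/2$ small colors survive to the next iteration, regardless of how $\Recolor$ distributes the uncolored vertices. This purely combinatorial selection argument is Lemma~\ref{lemma:decay_s} of the paper. It is what closes your Step 3, and it requires no modification of $\Recolor$ and no conflict analysis.
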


\subsubsection{Palette - Frequencies Tradeoff ($\PFTradeoff$)}

Next, Section~\ref{sec:generalized-coloring} presents a generalized algorithm that, for any parameter $k \ge 1$, achieves a palette size of $(1+1/k)(\Delta+1)$ with frequency range $[\sigma/2, 2k\sigma]$. To achieve this, we revisit the simpler \emph{one-shot} split \& recolor paradigm, but introduce two modifications. First, we ``bootstrap'' the process by using Algorithm $\ILDRC$ of Section~\ref{s: ILDRC} to generate the initial coloring. Secondly, we use a much larger upper frequency threshold of $\Flarge=2k\sigma$ for the final splitting step, thus reducing the need for splitting and allowing us to end up with fewer colors.

In the recoloring stage, the algorithm uses the $\Delta+1$ most frequent colors to recolor all remaining vertices.  This may create large color classes, so the final step applies $\Split$ with the relaxed upper
threshold $2k\sigma$. Thus, the parameter $k$ controls the trade-off: a larger $k$ allows larger frequencies and yields a palette closer to $\Delta+1$.


\begin{theorem}
\label{thm:summary-tradeoff-intro}
For every integer parameter $k\ge1$, Algorithm $\PFTradeoff$ computes a valid coloring with palette size $\numclr \le (1+1/k) (\Delta+1)$ and frequency thresholds $[\sigma/2,2k\sigma]$.
Its running times are $\Tseq(\PFTradeoff)=O(m\lg\Delta)$ and $\Tcc(\PFTradeoff) = O(\lg\Delta)$.
In the $\CONGEST$ model, the coloring is obtained in time $\Tcongest(\PFTradeoff) = O(\lg\Delta\cdot(D+\Delta+\lg^5\lg n))$ with high probability.

\end{theorem}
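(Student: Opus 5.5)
The plan is to follow the three-stage structure of $\PFTradeoff$ described above: bootstrap with $\ILDRC$, recolor onto the $\Delta+1$ most frequent colors, then $\Split$ with threshold $2k\sigma$. I will verify validity, the palette bound, the two frequency thresholds, and finally the running time.

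\emph{Stage 1.} By Theorem~\ref{thm:summary-ildrc-intro}, $\ILDRC$ returns a valid coloring with $\numclr\le 2(\Delta+1)$ colors, each of frequency in $[\sigma/2,\sigma]$.

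\emph{Stage 2.} Let $\LARG$ be the $\Delta+1$ most frequent colors. Uncolor every vertex whose color lies outside $\LARG$, and call $\Recolor$ with target palette $\LARG$. A vertex has at most $\Delta$ neighbors, so some color of $\LARG$ is always free for it. Hence this is a list-coloring instance with lists of size at least $\deg+1$, and $\LC$ solves it. Every color of $\LARG$ keeps its original class and only gains vertices, so each frequency stays $\ge\sigma/2$. All $n$ vertices now carry one of $\Delta+1$ colors, giving $\tfreq(\LARG)=n=(\Delta+1)\sigma$.

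\emph{Stage 3.} Apply $\Split$ with $\Flarge=2k\sigma$. I will use the property of $\Split$ from Section~\ref{sec:split}: when the range is doubling, a color of frequency $f>\Flarge$ is partitioned into $\lceil f/\Flarge\rceil$ subclasses. Each subclass has size in $[\Flarge/2,\Flarge]=[k\sigma,2k\sigma]$, and no new small colors appear because $k\sigma\ge\sigma/2$. Subsets of independent sets are independent, so the coloring stays valid, and both frequency thresholds $[\sigma/2,2k\sigma]$ hold afterward.

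\emph{Palette bound.} This is the main step. A color $i$ with $\freq(i)=f\le 2k\sigma$ stays a single color. A color with $f>2k\sigma$ becomes $\lceil f/2k\sigma\rceil\le 1+f/(2k\sigma)$ colors, but the extra count is really only $\lceil f/2k\sigma\rceil-1<f/(2k\sigma)$. So the number of new colors created is less than $\sum_i \freq(i)/(2k\sigma)\le n/(2k\sigma)=(\Delta+1)/(2k)$. The final palette is therefore at most $(\Delta+1)(1+1/(2k))\le(1+1/k)(\Delta+1)$. The delicate point is that $\Split$ must realize this ceiling count (or at most $\lfloor f/k\sigma\rfloor$ parts, each of size $\ge k\sigma$). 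With the cruder count $\lfloor f/(k\sigma)\rfloor$ we still get at most $\sum_{f>2k\sigma}(f/(k\sigma)-1)\le n/(k\sigma)-(\#\text{split colors})$ new colors. Using only $n/(k\sigma)=(\Delta+1)/k$ already suffices, since the original $\Delta+1$ colors each count once anyway. I would check this against the exact guarantee of $\Split$ as stated in Section~\ref{sec:split}.

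\emph{Running time.} $\ILDRC$ costs $O(m\lg\Delta)$ sequentially, $O(\lg\Delta)$ in $\CC$, and $O(\lg\Delta(D+\Delta+\lg^5\lg n))$ w.h.p.\ in $\CONGEST$. The remaining stages are one $\CA$ (for sorting and selecting $\LARG$), one $\Recolor$ (a single $\LC$), and one $\Split$. Each of these is a single round of the one-shot paradigm, costing $O(m)$, $O(1)$ and $O(D+\Delta+\lg^5\lg n)$ respectively, as in the analysis of $\NBC$ (Theorem~\ref{thm:summary-nbc-intro}). Adding these gives the stated bounds.
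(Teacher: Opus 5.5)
Your proof is correct and follows the paper's decomposition: frequency bounds as in Lemma~\ref{lem:tradeoff-sizes}, a palette count as in Lemma~\ref{lem:tradeoff-palette}, and runtime accounting of one $\CA$, one $\Recolor$ and one $\Split$ on top of $\ILDRC$. Your primary palette count, however, is different and strictly sharper.

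The paper uses only the lower bound $\lceil\Flarge/2\rceil\ge k\sigma$ on descendant sizes (Lemma~\ref{lemma:safe_splitting_general}(b)). So a split color of frequency $f$ yields $p\le f/(k\sigma)$ parts, and the excess is $<(\Delta+1)/k$.

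You instead use the exact number of parts that $\Split$ produces. Writing $f=q\Flarge+t$ as in Algorithm~\ref{alg:splitting}, it creates $q+1$ parts when $t\ge1$ and $q$ parts when $t=0$, i.e., exactly $\lceil f/\Flarge\rceil$ parts. Each split color therefore adds $\lceil f/\Flarge\rceil-1<f/\Flarge\le f/(2k\sigma)$ colors. Summing, the final palette is below $(1+\frac{1}{2k})(\Delta+1)$. This check goes through with the actual integral threshold $\Flarge=\lceil 2k\sigma\rceil$. The ``doubling'' qualifier you attach to the part count is unnecessary, since the count is a property of the code alone. Your argument thus shows the paper's analysis of the excess is loose by a factor of two.

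What the paper's route buys is robustness. It needs only the size lower bound of part (b), so it survives any splitting rule with that guarantee. Your fallback with parts of size at least $k\sigma$ is essentially that argument.

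A cosmetic point: in the paper $\LARG$ denotes the colors above the split threshold. Naming the retained top-$(\Delta+1)$ palette that way is confusing; the paper calls it $\cP_{target}$.
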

\begin{theorem}
\label{thm:summary-tradeoff-small-intro}
For every integer parameter $k\ge1$, Algorithm $\PFTradeoffSmall$ computes a valid coloring with palette size $\numclr\le(1+1/k)(\Delta+1)$ and frequency thresholds $[\sigma/3,2k\sigma]$.
Its running times are $\Tseq(\PFTradeoffSmall)=O(m)$ and $\Tcc(\PFTradeoffSmall)=O(1)$.
In the $\CONGEST$ model, the coloring is obtained in time $\Tcongest(\PFTradeoffSmall)=O(D+\Delta+\lg^5\lg n)$ with high probability.
\end{theorem}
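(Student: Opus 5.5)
The plan is to mirror the structure of $\PFTradeoff$ (Theorem~\ref{thm:summary-tradeoff-intro}), replacing the $O(\lg\Delta)$-iteration bootstrap $\ILDRC$ by the one-shot algorithm $\NBC$ of Theorem~\ref{thm:summary-nbc-intro} with $\varepsilon=1/3$. Concretely, $\PFTradeoffSmall$ will run $\NBC$ with $\varepsilon = 1/3$, obtaining a valid coloring with all frequencies in $[\sigma/3,2\sigma/3]$ in $O(1)$ $\CC$ time, $O(m)$ sequential time, and $O(D+\Delta+\lg^5\lg n)$ $\CONGEST$ time w.h.p. It will then use $\CA$ to sort colors by frequency and keep the $k(\Delta+1)$ most frequent colors as a target palette $\cP_{target}$ (or all colors, if there are fewer). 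Next it invokes $\Recolor$ on the vertices of the remaining colors with the $\Delta+1$ most frequent colors as targets. Finally it runs $\Split$ with upper threshold $\Flarge = 2k\sigma$. I will tune the exact choice of which colors survive so that the palette bound comes out. A cleaner alternative is to keep only the $\Delta+1$ most frequent colors and recolor everything else into them. That yields $\Delta+1$ colors before splitting, and the split adds at most $(\Delta+1)/k$ more colors, as argued below.

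\textbf{Lower bound.} Every color that survives recoloring is an original $\NBC$ color, so its frequency is at least $\sigma/3$, and recoloring only adds vertices to it. Validity of $\Recolor$ into a palette of $\Delta+1$ colors follows from list coloring: each vertex has at most $\Delta$ neighbors, so some target color is always free. The only place the lower bound could break is $\Split$. Since the range $[\sigma/3,2k\sigma]$ is doubling, splitting a class of size $f>2k\sigma$ into pieces of size between $k\sigma$ and $2k\sigma$ produces no piece below $k\sigma\ge\sigma/3$. I will rely on the guarantee of Procedure $\Split$ from Section~\ref{sec:split} for doubling ranges.

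\textbf{Palette bound.} This is the main obstacle. Before the split there are exactly $\Delta+1$ colors; assume $n\ge\Delta+1$, otherwise the statement is trivial. A class of size $f$ is split into $\lceil f/(2k\sigma)\rceil$ pieces, or at most $\lfloor f/(k\sigma)\rfloor$ when pieces are kept at size at least $k\sigma$. So the number of extra colors created is at most $\sum_{c:\freq(c)>2k\sigma} f(c)/(k\sigma)$ minus the number of split classes, which is at most $n/(k\sigma)=(\Delta+1)/k$. Hence $\numclr\le(1+1/k)(\Delta+1)$. The care needed here is with rounding: I need the per-class count of pieces to be at most $f/(k\sigma)$, which is where I would check the precise splitting rule (for example, pieces of size $\lfloor k\sigma\rfloor$ plus a remainder merged in) and the floor/ceiling conventions for $\sigma$.

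\textbf{Running time.} $\NBC$ costs $O(1)$ in $\CC$, $O(m)$ sequentially, and $O(D+\Delta+\lg^5\lg n)$ in $\CONGEST$. The remaining steps are a constant number of $\CA$, $\Recolor$ and $\Split$ invocations, whose per-call costs (from Section~\ref{sec:congest-model}) match these bounds. For $\CONGEST$, $\CA$ over $O(\Delta)$-sized color aggregates is pipelined along a BFS tree in $O(D+\Delta)$ rounds, and $\Recolor$ is a list-coloring call in $O(\lg^5\lg n)$ rounds w.h.p.
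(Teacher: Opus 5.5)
Your proof is correct and is essentially the paper's argument. The paper omits the proof as following Lemmas~\ref{lem:tradeoff-sizes} and~\ref{lem:tradeoff-palette}, and it runs exactly as you have it:
\begin{itemize}
\item $\NBC(G,1/3)$ supplies the $\sigma/3$ lower bound, and recoloring into the $\Delta+1$ most frequent colors only adds vertices to them.
\item Split descendants have frequency at least $\lceil\lceil 2k\sigma\rceil/2\rceil\ge k\sigma\ge\sigma/3$.
\item Each split class of frequency $f$ contributes fewer than $f/(k\sigma)$ extra colors, so fewer than $(\Delta+1)/k$ colors are added in total.
\end{itemize}
Drop your first variant, which retains $k(\Delta+1)$ colors: it would not meet the palette bound. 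The version you actually analyze (retain the $\Delta+1$ most frequent colors, recolor, then split at $\lceil 2k\sigma\rceil$) is exactly how $\PFTradeoffSmall$ is defined.
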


\subsubsection{Palette - Time Tradeoff ($\PTTradeoff$)}

In Section~\ref{sec:fast-converging}, we further optimize the convergence rate, achieving a palette of $(1+1/k)(\Delta+1)$ with tighter frequency range $[\sigma/2, 2\sigma]$, but at the cost of increasing the $\CC$ time complexity to $O(k+\lg\Delta)$.
So, the resulting algorithm, $\PTTradeoff$, yields
a trade-off complementary to that of Algorithm $\PFTradeoff$ of Section \ref{sec:generalized-coloring}.  The algorithm is based on the iterative split \& recolor paradigm, using Algorithm $\ILDRC$ for the initial coloring and continuing the iterations until the number of colors drops below the desired bound. Instead of allowing the upper frequency threshold to grow with $k$, the algorithm keeps the tighter frequency range $[\sigma/2, 2\sigma]$. The price is increased time: it repeatedly compacts the palette to the largest $\Delta+1$ colors and then splits colors above $2\sigma$, so the excess palette size decreases by about $(\Delta+1) /k$ in each iteration.


\begin{theorem}
\label{thm:summary-fast-converge-intro}
For every integer parameter $k\ge1$, $\PTTradeoff$ (Algorithm~\ref{alg:fast-converge}) computes a valid coloring with palette size $\numclr\le(1+1/k)(\Delta+1)$ and frequency thresholds $[\sigma/2,2\sigma]$.
Its running times are $\Tseq(\PTTradeoff) = O(m(k+\lg\Delta))$ and $\Tcc(\PTTradeoff) = O(k+\lg\Delta)$. In the $\CONGEST$ model, the coloring is obtained in time  $\Tcongest(\PTTradeoff) = O((\lg\Delta+k)\cdot(D+\Delta+\lg^5\lg n))$ with high probability.
\end{theorem}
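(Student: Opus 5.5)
We would prove the theorem by analyzing $\PTTradeoff$ in two phases. First, the initial coloring comes from $\ILDRC$. By Theorem~\ref{thm:summary-ildrc-intro} it is valid, has palette size at most $2(\Delta+1)$, and all frequencies lie in $[\sigma/2,\sigma]\subseteq[\sigma/2,2\sigma]$. This already accounts for the $O(\lg\Delta)$ term in all three running times.

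Each subsequent iteration does two things. It compacts the palette: it keeps the $\Delta+1$ most frequent colors as target palette $\cP_{target}$, uncolors every vertex of the remaining colors, and invokes $\Recolor$ on them. It then applies $\Split$ with threshold $\Flarge=2\sigma$. We would maintain the invariant that after each iteration the coloring is valid, every frequency lies in $[\sigma/2,2\sigma]$, and the excess $\numclr-(\Delta+1)$ has dropped by at least $\lceil(\Delta+1)/k\rceil$ (or the loop has terminated).

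\textbf{Validity and lower bound.} $\Recolor$ can always succeed. A vertex $v$ has at most $\Delta$ neighbors, so among the $\Delta+1$ target colors at least one is free. This is exactly the list-coloring guarantee of $\LC$ with lists of size $\Delta+1$. Target colors only gain vertices during recoloring, so none of them falls below $\sigma/2$.

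$\Split$ relies on the doubling property $2\sigma\ge 2\cdot(\sigma/2)$. A class of size $f>2\sigma$ is cut into $\lceil f/(2\sigma)\rceil$ independent pieces. Each piece then has size at least $f/\lceil f/2\sigma\rceil$, which exceeds $\sigma$ (at least $\sigma/2$ suffices). Subsets of an independent set stay independent, so $\Split$ creates no small colors, and the invariant's frequency range is restored.

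\textbf{Palette progress (the main obstacle).} We must bound how many new colors $\Split$ creates, and argue that this is smaller than the number of colors removed by compaction. Let $\numclr_t$ be the palette size before the iteration and $r=\numclr_t-(\Delta+1)$ the number of removed colors. The removed colors are the least frequent ones, so each has frequency at most $n/\numclr_t$ (the average). Hence they hold at most $rn/\numclr_t$ vertices in total.

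Splitting a class of size $f\le 2\sigma+g$, where $g$ is the number of vertices it gained, creates at most $\lceil g/(2\sigma)\rceil\le g/\sigma$ extra colors when $g\ge\sigma$, using the $\sigma$ lower bound on piece sizes. Summing over target colors, the number of new colors is at most about $(rn/\numclr_t)/\sigma = r(\Delta+1)/\numclr_t$.

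The net palette after the iteration is therefore at most
\[
(\Delta+1)+r(\Delta+1)/\numclr_t,
\]
so the excess contracts from $r$ to $r(\Delta+1)/\numclr_t=r/(1+r/(\Delta+1))$. While $r\ge(\Delta+1)/k$, this shrinks the excess by $r^2/\numclr_t\ge r/(k+1)\cdot\ldots$. The cleaner route is to show an additive drop of roughly $(\Delta+1)/k$ per round, using $\numclr_t\le2(\Delta+1)$ and $r>(\Delta+1)/k$; that gives $r-r'\ge r^2/(2(\Delta+1))$. I expect handling the rounding (ceilings in $\lceil g/2\sigma\rceil$) and getting a genuinely linear $O(k)$ iteration count, rather than a harmonic-type bound, to be the delicate part.

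If the direct $r^2$ bound only yields $O(k)$ via the recurrence $r'\le r-r^2/(2(\Delta+1))$, then the count follows from the standard fact that this recurrence reaches $(\Delta+1)/k$ within $O(k)$ steps, since $1/r$ grows by at least $1/(2(\Delta+1))$ per step. Termination is when $\numclr\le(1+1/k)(\Delta+1)$.

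\textbf{Running times.} Each iteration consists of one $\CA$ call (sorting colors by frequency), one $\Recolor$, and one $\Split$. By the implementation costs from Section~\ref{sec:congest-model}, each iteration costs $O(m)$ sequentially, $O(1)$ in $\CC$, and $O(D+\Delta+\lg^5\lg n)$ w.h.p.\ in $\CONGEST$. Adding the $\ILDRC$ cost over $O(k)$ iterations gives the stated bounds, with a union bound over the $O(k+\lg\Delta)$ randomized calls for the w.h.p.\ guarantee.
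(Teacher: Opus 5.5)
\textbf{There is a genuine gap in the palette-progress step.} Your validity and frequency-range arguments are fine, and so is the runtime accounting once you have an $O(k)$ iteration bound. The bound of at most $g/\sigma$ extra colors per split class holds only when $g\ge\sigma$, and you sum it as if it always held.

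A retained class that already has size $2\sigma$ splits after gaining a \emph{single} vertex. Such classes are routine: $\Split$ outputs classes of size exactly $\Flarge=2\sigma$, and as the most frequent colors they are always retained. So the number of new colors is not bounded by $|V_{target}|/\sigma$, and the contraction $r'\le r(\Delta+1)/\numclr_t$ is false.

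In fact the excess $\numclr-(\Delta+1)$ is not even monotone. For suitable $\Delta$, $\sigma$ and large $k$:
\begin{itemize}
\item Start from an $\ILDRC$ output with $(\Delta+1)/2$ classes of size $\sigma$ and $\Delta+1$ classes of size $\sigma/2$.
\item Suppose the first $\Recolor$ puts all $(\Delta+1)\sigma/4$ dissolved vertices into one class of size $\sigma$. Then $\Split$ turns it into $j+1$ classes of size exactly $2\sigma$, leaving excess $j\approx(\Delta+1)/8$.
\item In the next iteration, $j$ classes of size $\sigma/2$ are dissolved. Suppose $\Recolor$ gives one of their vertices to each of the $j+1$ classes of size $2\sigma$, and the rest to retained classes of size $\sigma/2$.
\item All $j+1$ large classes then split, and the excess becomes $j+1$.
\end{itemize}
$\Recolor$ gives no control over where vertices go, so no recurrence on the excess alone can be proved, including $r'\le r-r^2/(2(\Delta+1))$.

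\textbf{The missing idea is a different potential: the number of colors of frequency below $\sigma$.} Four facts make it decrease:
\begin{itemize}
\item At most $n/\sigma=\Delta+1$ colors have frequency at least $\sigma$. So the $\Delta+1$ retained colors include all of them, and every dissolved color has frequency below $\sigma$.
\item The loop condition forces more than $(\Delta+1)/k$ colors to be dissolved in each iteration.
\item Recoloring only increases frequencies.
\item $\Split$ with threshold $2\sigma$ creates only pieces of size at least $\sigma$, as you already showed.
\end{itemize}
Hence the number of sub-$\sigma$ colors never increases and drops by more than $(\Delta+1)/k$ per iteration. It starts at most at $2(\Delta+1)$, so there are fewer than $2k$ iterations, even though the excess may fluctuate.

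A minor point: the dissolved colors need not \emph{each} have frequency at most $n/\numclr_t$. Only their total is bounded by $rn/\numclr_t$, by averaging.
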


\begin{theorem}
\label{thm:summary-fast-converge-small-intro}
For every integer parameter $k\ge1$, Algorithm $\PTTradeoffSmall$ computes a valid coloring with palette size $\numclr\le(1+1/k)(\Delta+1)$ and frequency thresholds $[\sigma/3,2\sigma]$.
Its running times are $\Tseq(\PTTradeoffSmall)=O(mk)$ and $\Tcc(\PTTradeoffSmall)=O(k)$.
In the $\CONGEST$ model, the coloring is obtained in time $\Tcongest(\PTTradeoffSmall)=O(k\cdot(D+\Delta+\lg^5\lg n))$ with high probability.
\end{theorem}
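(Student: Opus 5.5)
The plan is to define $\PTTradeoffSmall$ as the analogue of $\PTTradeoff$ in which the $O(\lg\Delta)$-iteration bootstrap $\ILDRC$ is replaced by the one-shot $\NBC$ with $\varepsilon=1/3$. This mirrors how $\PFTradeoffSmall$ presumably relates to $\PFTradeoff$. By Theorem~\ref{thm:summary-nbc-intro}, the bootstrap yields in $O(1)$ $\CC$-time (and $O(m)$ sequential time) a valid coloring with at most $3(\Delta+1)$ colors, all of frequency in $[\sigma/3,2\sigma/3]$.

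We then run the iterative phase of $\PTTradeoff$. In each iteration we use $\CA$ to sort colors by frequency and take the $\Delta+1$ most frequent colors as target palette $\cP_{target}$. We uncolor a batch of the remaining colors and call $\Recolor$ on their vertices with target $\cP_{target}$. Finally we call $\Split$ with threshold $\Flarge=2\sigma$. We stop once $\numclr\le(1+1/k)(\Delta+1)$.

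\textbf{Invariants.} We maintain three invariants throughout:
\begin{itemize}
\item[(i)] every color has frequency at least $\sigma/3$;
\item[(ii)] every color has frequency at most $2\sigma$;
\item[(iii)] the number of colors outside $\cP_{target}$ drops by at least roughly $(\Delta+1)/k$ per iteration.
\end{itemize}
Invariant (ii) is immediate after $\Split$. Invariant (i) is preserved because, in the doubling regime, splitting a color of frequency above $2\sigma$ yields parts of size at least $\sigma\ge\sigma/3$. Colors that are not recolored keep their frequency, which is already at least $\sigma/3$. Target colors only grow.

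For validity of $\Recolor$, a recolored vertex has at most $\Delta$ neighbors, hence at most $\Delta$ forbidden colors among the $\Delta+1$ target colors. Some target color is therefore always available, and the list coloring $\LC$ succeeds.

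For (iii), the key counting goes as follows. Since the total frequency is $n=(\Delta+1)\sigma$ and every color has frequency at least $\sigma/3$, the excess colors beyond $\Delta+1$ can be eliminated in $O(k)$ rounds provided each round removes about $(\Delta+1)/k$ net colors. Recoloring a batch of $B$ small colors adds at most $B\cdot 2\sigma$ vertices to the target colors. Splitting then creates new colors only for target classes exceeding $2\sigma$, at most one new color per $2\sigma$ units of added mass beyond the slack. Choosing the batch as roughly $2(\Delta+1)/k$ of the least frequent colors therefore guarantees a net decrease of $\Omega((\Delta+1)/k)$.

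The main obstacle is this accounting: I must show the net decrease is a constant fraction of $(\Delta+1)/k$ per iteration. This is needed so that, starting from at most $3(\Delta+1)$ colors, $O(k)$ iterations suffice without the $\lg\Delta$ warm-up. I would first try the potential $\Phi=\sum_{c}\lceil \freq(c)/2\sigma\rceil$ to bound the colors created by splits, comparing it with the removed colors' mass of at least $\sigma/3$ each.

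For the running times, each iteration is $O(1)$ invocations of $\CA$, $\LC$ and $\Split$. This gives $O(m)$ sequential time, $O(1)$ $\CC$-time, and $O(D+\Delta+\lg^5\lg n)$ $\CONGEST$ time per iteration, using the implementations in Section~\ref{sec:congest-model}. Multiplying by $O(k)$ iterations plus the $\NBC$ bootstrap gives the stated bounds, with the high-probability claim inherited from the randomized $\CONGEST$ list-coloring step via a union bound over the $O(k)$ iterations.
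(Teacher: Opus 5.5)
Your frequency invariants, the validity argument for $\Recolor$, and the runtime bookkeeping match the paper. However, you leave open the $O(k)$ bound on the number of iterations, which is the real content of the theorem, and the route you sketch for it fails. Invariant (iii), and the claimed net decrease of the palette size per iteration, are false. $\Split$ itself produces classes of size exactly $\Flarge=\lceil 2\sigma\rceil$ (all pieces except the last two). Unsplit targets may also end an iteration at exactly $\lceil 2\sigma\rceil$. So there can be $y=\Theta(\Delta)$ target colors with zero slack while the excess $\numclr-(\Delta+1)$ is only about $(\Delta+1)/k$. The dissolved colors carry at least $B\lfloor\sigma/3\rfloor$ vertices, and nothing in $\Recolor$ prevents it from placing a single vertex into each of $\min\{y,B\lfloor\sigma/3\rfloor\}$ of these colors. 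The next $\Split$ then creates that many new colors, so for $\sigma\ge 6$ and $y>B$ the palette size grows. Your potential $\Phi=\sum_c\lceil\freq(c)/2\sigma\rceil$ does not repair this. $\Split$ cuts a class into exactly $\lceil\freq(c)/\Flarge\rceil$ parts, so $\Phi$ equals $\numclr$ right after every split and has the same non-monotonicity.

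The missing idea is to count only the colors of frequency below $\sigma$:
\begin{itemize}
\item Since $n=(\Delta+1)\sigma$, at most $\Delta+1$ colors have frequency at least $\sigma$. The $\Delta+1$ most frequent colors therefore contain all of them, so every dissolved color has frequency below $\sigma$.
\item Target colors only gain vertices.
\item By Lemma~\ref{lemma:safe_splitting_general}(b), $\Split$ with threshold $\lceil 2\sigma\rceil$ creates only colors of frequency at least $\sigma$.
\end{itemize}
Hence the number of sub-$\sigma$ colors never increases. In every iteration it drops by at least the number of dissolved colors, which exceeds $(\Delta+1)/k$ while the loop runs. After $\NBC(G,1/3)$ this number is at most $3(\Delta+1)$, so there are at most $3k$ iterations, even though the palette size itself may fluctuate meanwhile. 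This is the argument of Lemma~\ref{thm:fast-converge-runtime}, which the paper reuses for $\PTTradeoffSmall$. In that algorithm, all colors outside the top $\Delta+1$ are dissolved in every iteration. Your batch selection is an unnecessary change to the algorithm; it would also go through with this potential, as long as each batch has more than $(\Delta+1)/k$ colors.
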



\subsubsection{Capped / open Reoloring ($\OpenRecolor$)}

Section~\ref{sec:improved_coloring} presents Algorithm $\OpenRecolor$, which refines the iterative approach, reducing the palette size to $(1.5+\epsilon)(\Delta+1)$ while maintaining the $O(\lg n)$ $\CC$ runtime and the frequency range $[\sigma/2, \sigma]$. 
More generally, the algorithm is parameterized by an upper-frequency factor $\alpha\ge1$ (setting $\Flarge=\alpha\sigma$) and an integer \emph{excess} $\ell$ over the optimal palette size $\Delta+1$. We write $\beta=1+\ell/(\Delta+1)$, so the target palette has
$\numclr_{target}=\beta(\Delta+1)=\Delta+1+\ell$ colors, and the resulting frequency range is $[\sigma/2, \lceil\alpha\sigma\rceil]$.

The main idea behind Algorithm $\OpenRecolor$ is the following. Recall that Algorithm $\ILDRC$
(Section \ref{s: ILDRC})
is based on the iterative split \& recolor paradigm, alternating between \emph{recoloring phases}, which reduce the number of colors at the cost of increasing the frequencies of the remaining colors, and \emph{splitting phases}, which increase the number of colors. In contrast, the new approach aims to prevent the need for splitting entirely, thus reducing the number of phases. To achieve that, we no longer treat all colors with frequency at most $\Flarge$ as valid targets in the recoloring stage. Rather, we forbid the use of a color $c$ if its frequency $\freq(c)$ is so 
close to $\Flarge$ that there's a risk that it might exceed $\Flarge$ if $c$ is used in the recoloring stage.

Following this idea, the key new ingredient is to recolor uncolored vertices using the target palette while enforcing the upper threshold deterministically. To this end, we use an \emph{open-cap} parameter $\varphi<1$ to determine which colors are still available during the recoloring phase, and define an intermediate \emph{capping threshold} $\Fcap = \lfloor \varphi \alpha\sigma \rfloor$ in-between $\Fsmall$ and $\Flarge$. A color $i$ is now considered \emph{open} only while its frequency satisfies $\freq(i) \le \Fcap$.
Colors with frequency above $\Fcap$ are declared \emph{closed}. Only \emph{open} colors may be used for coloring vertices at the recoloring stages, and moreover, each open color can be used only up to its remaining capacity, $\Flarge-\Fcap$.

The parameters $\alpha$, $\ell$, $\beta$, $\varphi$ are chosen to satisfy
\[
\beta=1+\frac{\ell}{\Delta+1}\mbox{ for }1\le\ell\le\Delta,\qquad
\alpha\ge1,\qquad
\frac12<\varphi<1,\qquad
2\varphi\alpha\cdot\frac{\ell}{\Delta+1}>1 .
\]

Under these constraints, we have the following result.

\begin{theorem}
\label{thm:summary-open-recolor-intro}
For fixed constant parameters satisfying the conditions above, Algorithm $\OpenRecolor$ computes, with high probability, a valid coloring with palette size $\numclr \le \beta (\Delta+1)$ and frequency thresholds $[\sigma/2, \lceil\alpha\sigma\rceil]$.
Its running times are $\Tseq(\OpenRecolor) = O(m\lg n)$, $\Tcongest(\OpenRecolor) = O((D+\Delta)\lg n+\lg\Delta\cdot\lg^5\lg n)$ and
$\Tcc(\OpenRecolor) = O(\lg n)$.
\end{theorem}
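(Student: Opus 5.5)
The proof will reconstruct Algorithm $\OpenRecolor$ as an iterative open-recoloring process and argue three things separately: validity and the frequency thresholds, the palette bound, and the number of phases.

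\textbf{Initialization.} I would start from a coloring produced by $\ILDRC$ (Theorem~\ref{thm:summary-ildrc-intro}). It has frequencies in $[\sigma/2,\sigma]\subseteq[\sigma/2,\alpha\sigma]$ and at most $2(\Delta+1)$ colors.

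\textbf{One phase.} The target palette $\cP_{target}$ consists of the $\numclr_{target}=\Delta+1+\ell$ most frequent colors. All vertices of the remaining colors are uncolored. These vertices are then recolored over $O(\lg n)$ rounds. In each round, every uncolored vertex $v$ picks a uniformly random color among the open colors of $\cP_{target}$ that are not used by its neighbors. $\QA$ admits at most $\Flarge-\freq(i)$ candidates to each color $i$, and $\CA$ updates frequencies and closes every color whose frequency exceeds $\Fcap$.

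\textbf{Validity and thresholds.} These are deterministic invariants, and I would establish them first.
\begin{itemize}
\item Properness holds because a vertex only takes a color that none of its neighbors uses.
\item The quota guarantees $\freq(i)\le\lceil\alpha\sigma\rceil$ for every color at all times. This is why no $\Split$ is needed.
\item Target colors are never decreased, so the lower bound $\sigma/2$ inherited from the initial coloring persists.
\end{itemize}

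\textbf{Progress.} This is the heart of the argument: I must show that every uncolored vertex succeeds quickly w.h.p. The counting step is as follows. A closed color has frequency greater than $\Fcap\approx\varphi\alpha\sigma$, so at most $n/(\varphi\alpha\sigma)=(\Delta+1)/(\varphi\alpha)$ colors can ever be closed. Hence the number of open target colors is at least
\[
\Delta+1+\ell-\frac{\Delta+1}{\varphi\alpha}.
\]
A vertex $v$ has at most $\Delta$ neighbors, which block at most $\Delta$ colors. So $v$ has at least $\ell+1-(\Delta+1)/(\varphi\alpha)$ available open colors.

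The constraint $2\varphi\alpha\ell/(\Delta+1)>1$ should enter here, but it is weaker than what this naive bound needs. I would therefore sharpen the count. Only vertices being recolored in the current phase push colors over $\Fcap$. The uncolored set $V_{target}$ comes from the discarded colors, which are the least frequent ones. Hence
\[
|V_{target}|\le n\cdot\frac{\numclr-\numclr_{target}}{\numclr},
\]
and this is at most about $n/2$ when $\numclr\le2(\Delta+1)$. Moreover, a target color starts with frequency at least $\sigma/2$, so it needs at least $\varphi\alpha\sigma-\sigma/2$ new vertices to close. Alternatively, the closed colors' mass beyond their starting size accounts for the factor $2$. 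Working out this accounting should bound the number of colors closed in a phase by fewer than $\ell$, so each vertex always has at least one open available color. I expect this bookkeeping, namely deriving exactly the condition $2\varphi\alpha\ell/(\Delta+1)>1$, to be the main obstacle.

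\textbf{Success probability.} Given that an available open color always exists, the standard random-trial argument applies. Each round, a vertex succeeds with constant probability, because conflicts with simultaneously trying neighbors and quota denials can be charged. After $O(\lg n)$ rounds, every vertex is colored w.h.p. by a union bound. If the phase structure needs repetition, I would argue that a constant number of phases suffices: $\numclr$ falls to $\numclr_{target}$ right after a single recoloring phase, since no split re-expands it.

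\textbf{Running times.} Each round uses a constant number of $\LC$-style steps plus $\CA$ and $\QA$ invocations. Those primitives cost $O(1)$ in $\CC$, $O(D+\Delta)$ in $\CONGEST$, and $O(m)$ sequentially, according to Section~\ref{sec:congest-model}. Multiplying by $O(\lg n)$ rounds gives the stated bounds. In $\CONGEST$, the additive $\lg\Delta\cdot\lg^5\lg n$ term comes from the initial $\ILDRC$ run.
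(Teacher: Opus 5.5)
Your outline follows the paper's decomposition: deterministic safety via the quota, the lower bound inherited from $\ILDRC$, a volume bound on $V_{target}$, an open-color count, and a per-round success bound. Your sharpened accounting is also the right one. With the exact bound $|V_{target}|\le n(\numclr-\numclr_{target})/\numclr\le(1-\beta/2)n$, and the fact that a target color needs more than $(\varphi\alpha-\frac12)\sigma$ new vertices to close, you get $\numclr_{closed}\le\frac{2-\beta}{2\varphi\alpha-1}(\Delta+1)=\ell-\delta(\Delta+1)$, where $\delta=\beta-1-\frac{2-\beta}{2\varphi\alpha-1}$. The condition $\delta>0$ is exactly $2\varphi\alpha\ell/(\Delta+1)>1$. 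Note that the coarser ``about $n/2$'' bound would instead force $2\varphi\alpha(\beta-1)>\beta$, so you need the exact bound.

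The genuine gap is in what you do with this count. You reduce it to ``each vertex has at least one open available color''. You then invoke the standard random-trial argument and say quota denials ``can be charged''. That information does not suffice. Each open color $c$ has residual capacity only $\Flarge-\freq(c)\ge\Flarge-\Fcap\approx(1-\varphi)\alpha\sigma$. Meanwhile up to $(1-\beta/2)n=\Theta(n)$ uncolored, possibly mutually non-adjacent, vertices may propose it. Knowing only that each vertex has $O(1)$ open options does not stop proposals from concentrating on a few colors. In that case a fixed vertex is admitted with probability only about $\sigma/n=1/(\Delta+1)$ per round, and no $O(\lg n)$ bound follows.

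What is needed is the quantitative slack $|\cP_{open}(v)|\ge\delta(\Delta+1)+1=\Omega(\Delta)$. You also need an activation probability $p_0$, which is missing from your round description but is part of $\OpenRecolor$, with $p_0\le\frac12\min\{1,\delta,\delta(1-\varphi)\alpha/(1-\beta/2)\}$. Then two bounds hold:
\begin{enumerate}
\item The neighbor-conflict probability is at most $\Delta p_0/(\delta(\Delta+1)+1)\le p_0/\delta\le\frac12$. Without activation this union bound is about $1/\delta$, which may exceed $1$.
\item The expected number $Y$ of non-neighbors proposing $c$ is at most $(1-\beta/2)\sigma p_0/\delta\le\frac12(1-\varphi)\alpha\sigma$. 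By Markov, $Y<\Flarge-\Fcap$ with probability at least $\frac12$, and then all proposers to $c$ fit in its quota.
\end{enumerate}
Neighbors' and non-neighbors' choices are independent. Hence $v$ is colored with probability at least $p_0/4$ per round, uniformly over histories, and $O(\lg n/p_0)=O(\lg n)$ rounds suffice w.h.p. Without this step your proposal establishes neither termination w.h.p.\ nor the stated running times.
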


\subsubsection{Optimal $\Delta+1$ palette ($\LogCompact$)}
Section~\ref{sec:delta-plus-two-log-cap} pushes the palette size down to $\Delta+1$ by applying the capped recoloring idea gradually until $\numclr=\Delta+2$, and then with one last $\Recolor$ gets to $\numclr=\Delta+1$.  In each phase, the algorithm removes roughly half of the current excess colors and recolors their vertices into the remaining palette, while allowing every ``surviving'' color (that was not eliminated during the color reduction process) to gain only $O(\sigma)$ new vertices.  Since the number of phases is $O(\lg\Delta)$, the price for the $\Delta+1$ palette is a larger upper frequency threshold of $O(\lg\Delta\cdot\sigma)$.

The external parameters of Algorithm $\LogCompact$ are the \emph{slack} parameter $0<\varepsilon\le1$ and the \emph{failure-probability exponent} $c\ge1$. 
The parameter
$\varepsilon$ controls the amount of extra capacity available in each phase:
a color $i$ that is still in $\cP$ maintains a counter $\Fadd(i)$ for the number of new
vertices assigned to it in the current phase, may receive at most
$\Flarge^{ph}=\lceil(1+\varepsilon)\sigma\rceil$ new vertices, and is open only
while $\Fadd(i)\le \Fcap$, where
$\Fcap=\lfloor(1+\varepsilon/2)\sigma\rfloor$. Thus the gap
$\Flarge^{ph}-\Fcap=\Omega(\varepsilon\sigma)$ serves as a safety buffer 
(to be used for the concentration arguments in the analysis). 
The exponent $c$ sets the final success probability $1-n^{-c}$.
Finally, if a phase has target palette size $\Delta+1+s$, then $s$ denotes the current excess over $\Delta+1$; the phase removes (at most $s$) low-frequency colors from this excess and recolors their vertices using the surviving palette. The analysis establishes that the excess decreases geometrically, so there are $O(\lg\Delta)$ phases, which gives the final upper frequency threshold
$\sigma+\lceil\lg(\Delta+1)\rceil\Flarge^{ph}$.

The parameters $\varepsilon,c$ are chosen to satisfy 
\[
0<\varepsilon\le 1,\qquad 
c\ge 1,\qquad
\sigma \ge \max\{6/\varepsilon,(100/\varepsilon)(c+6)\lg n\}.
\]

\begin{theorem}
\label{thm:summary-delta-plus-two-intro}
For fixed constant parameters satisfying the conditions above, 
Algorithm $\LogCompact$ computes, with probability at least $1-n^{-c}$, a valid coloring with palette size $\numclr=\Delta+1$ and frequency thresholds $[\sigma/2, 2\sigma+\lceil\lg (\Delta+1)\rceil \lceil(1+\varepsilon) \sigma \rceil]$.
Its parameter-dependent running times are $\Tseq(\LogCompact) = O(m\lg n\lg\Delta)$,
$\Tcongest(\LogCompact) = O((D+\Delta)\lg n\lg\Delta)$ and $\Tcc(\LogCompact) = O(\lg n\lg\Delta)$.
\end{theorem}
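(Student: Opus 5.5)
The proof of Theorem~\ref{thm:summary-delta-plus-two-intro} has three parts: the structure of the phases, the per-phase invariant, and the concentration bound that guarantees no color is ever overloaded.

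\textbf{Initial coloring.} I would start from a coloring produced by $\PTTradeoff$ (or $\ILDRC$). This gives frequencies in $[\sigma/2,2\sigma]$ and a palette of size $\Delta+1+s_0$ with $s_0\le\Delta+1$. This base is what contributes the additive $2\sigma$ term in the upper threshold.

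\textbf{Phase structure.} Phase $j$ starts with excess $s_j$. It removes the $\lceil s_j/2\rceil$ least frequent excess colors, uncolors their vertices, and recolors them via capped $\Recolor$ onto the surviving $\Delta+1+\lfloor s_j/2\rfloor$ colors. In this recoloring, each color $i$ stays open only while $\Fadd(i)\le\Fcap$, and $\Fadd$ is reset at the start of every phase. Once $s=1$, a final $\Recolor$ step brings the palette to $\Delta+1$. Since $s$ halves in each phase, there are at most $\lceil\lg(\Delta+1)\rceil$ phases. Any surviving color therefore ends with frequency at most $2\sigma+\lceil\lg(\Delta+1)\rceil\Flarge^{ph}$. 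Here I fold the last step into this count, or bound it by the same per-phase budget.

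\textbf{Lower bound $\sigma/2$.} Colors only gain vertices after the initial coloring, except for removed colors, which vanish entirely. So the lower threshold $\sigma/2$ is preserved.

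\textbf{Validity.} Recoloring uses $\LC$ with lists restricted to open colors not used by neighbors, so every step produces a proper coloring.

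\textbf{Feasibility of each phase.} The core claim is that every uncolored vertex always has an open color available that none of its neighbors uses. I would argue by counting:
\begin{itemize}
\item A vertex $v$ has at most $\Delta$ neighbors, so at least $\lfloor s_j/2\rfloor+1$ surviving colors are free at $v$.
\item A color becomes closed only after receiving more than $\Fcap\ge(1+\varepsilon/2)\sigma-1$ new vertices in the phase.
\item The removed colors are the least frequent excess colors, and the total number of vertices is $n=\sigma(\Delta+1)$. Averaging over the smallest half of the excess then bounds the number of recolored vertices by roughly $\lceil s_j/2\rceil\sigma$, up to lower-order slack.
\item Hence fewer than about $s_j/(2(1+\varepsilon/2))$ colors can close, which is strictly less than $\lfloor s_j/2\rfloor+1$.
\end{itemize}
So at least one free color stays open for $v$.

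\textbf{Main obstacle: staying below $\Flarge^{ph}$.} Randomized $\LC$ assigns colors in parallel, so an open color may receive many new vertices in a single round and overshoot. I would handle this with $\QA$ using quota $\Flarge^{ph}-\Fadd(i)$, so admissions never exceed the cap deterministically. The remaining risk is that denied vertices stall progress. For this I would use a Chernoff bound: within a round, the number of requests to a given color concentrates around its expectation, and this expectation fits inside the buffer $\Flarge^{ph}-\Fcap=\Omega(\varepsilon\sigma)$. The condition $\sigma\ge(100/\varepsilon)(c+6)\lg n$ is exactly what makes the per-color, per-round failure probability at most $n^{-(c+6)}$.

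\textbf{Union bound and running time.} Taking a union bound over at most $n$ colors, $O(\lg n)$ rounds per phase, and $O(\lg\Delta)$ phases gives total failure probability at most $n^{-c}$. Each phase consists of $O(\lg n)$ $\LC$/$\QA$/$\CA$ rounds, which yields the stated sequential, $\CONGEST$ and $\CC$ running times via the implementations in Section~\ref{sec:congest-model}.
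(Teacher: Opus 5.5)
Your outline follows the paper's architecture: halve the excess, keep per-phase counters $\Fadd$ with open threshold $\Fcap$, admit by the deterministic quota $\Flarge^{ph}-\Fadd(i)$ via $\QA$, apply Chernoff against the buffer $\Flarge^{ph}-\Fcap$, and take a union bound. However, the probabilistic core is missing, and as stated it does not work.

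\textbf{No process to apply Chernoff to.} You cannot get concentration out of a call to $\LC$. $\LC$ only promises \emph{some} valid coloring from the lists, with no control over how many vertices land on a given color. The sequential first-fit implementation, or the deterministic Congested Clique one, may put a large part of $V_{target}$ on a single color. You need an explicit proposal process, which is what $\BudgetRecolor$ provides:
each uncolored $v$ is active with probability $p_0$ and proposes a uniformly random color of $\cP_{open}(v)$; it withdraws if an uncolored neighbor proposes the same color; the remaining candidates are admitted by quota.

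\textbf{Expectation does not fit the buffer without damping.} Even for this process you need a damping factor and a quantitative count of open choices. Your feasibility step only shows that \emph{one} open free color exists, because you charge all $\Delta$ neighbors as blockers. The paper (Lemma~\ref{lem:budget-open-choices}) charges only the \emph{colored} neighbors. It obtains $\numclr_{open}^t(v)\ge\du^t(v)+1+\varepsilon s/4$, where $s$ is the retained excess and $|V_{target}|\le s\sigma$. The two terms of this bound do different jobs.
The term $\varepsilon s/4$ gives $\mathbb E[Y_{i,t}]\le p_0 s\sigma/(1+\varepsilon s/4)\le 4p_0\sigma/\varepsilon=\varepsilon\sigma/4\le B_i/2$ for $p_0=\varepsilon^2/16$. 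With $p_0=1$ the same bound only gives $4\sigma/\varepsilon$, a factor $8/\varepsilon^2$ larger than the buffer $\varepsilon\sigma/2$.
The term $\du^t(v)+1$ bounds the probability that some uncolored neighbor proposes $v$'s color by $p_0\du^t(v)/(\du^t(v)+1)\le p_0$. Hence $v$ becomes a candidate with probability at least $p_0/2$ in every round, conditionally on any history. On the event that no color is over-requested, every candidate is admitted. This is what yields $O(\varepsilon^{-2}(c+4)\lg n)$ rounds per phase.

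\textbf{No progress bound.} Your proposal treats ``no color is overloaded'' as if it were ``progress'', and asserts $O(\lg n)$ rounds without any per-round success bound. Your one-open-color bound does not exclude $v$ and many of its uncolored neighbors sharing a single open color. In that case $v$ rarely, or without activation never, becomes a candidate.

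\textbf{Accounting of the $2\sigma$ term.} In the paper the additive $2\sigma$ is $\sigma$ from the $\ILDRC$ start (frequencies at most $\sigma$) plus $\sigma$ from the final plain $\Recolor$. That last step moves only the least frequent of $\Delta+2$ colors, i.e.\ fewer than $n/(\Delta+2)<\sigma$ vertices, and it always succeeds by Lemma~\ref{lemma:partial_extension}. Suppose instead you start from a coloring with frequencies up to $2\sigma$ and $s_0=\Delta+1$, with your rule $s\mapsto\lfloor s/2\rfloor$. When $\Delta+1$ is a power of two, you already need $\lceil\lg(\Delta+1)\rceil$ halving phases to reach $s=1$. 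Folding the final step into the phase count then overshoots the stated threshold by up to $\sigma$. Start from $\ILDRC$ and charge the final step separately.
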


\section{Basic tools}
\label{sec:tools}

\subsection{Communication and coordination primitives}
\label{sec:CA+QA}

The remaining primitive interfaces are used to coordinate choices indexed by colors.  They do not by themselves recolor vertices; instead, they provide the global accounting and admission steps used by the procedures and algorithms.

\noindent\textbf{Color accounting ($\CA$).}
The color-accounting primitive computes aggregate information per color and disseminates the result.  Its uses include, for example, computing the frequency $\freq(i)$ of a color $i$, sorting the colors by non-increasing order of frequency, identifying the palette $\cP_{remove}$ of colors whose frequency is too small and therefore should be removed, informing all vertices currently using colors from $\cP_{remove}$ that they need to recolor themselves, and announcing to all those vertices 
the target palette $\cP_{target}$ to be used in the recoloring process, testing whether a target set is empty, and maintaining various global counters on the colors.

\noindent\textbf{Quota assignment ($\QA$).}
The quota-assignment primitive receives, for each color $i$, a candidate set $C_i$ of vertices that request to be colored by $i$ and a \emph{quota} $b_i$, and admits exactly $\min\{|C_i|,b_i\}$ candidates. The other candidates are informed that their request was denied.
\\
More specifically, when an algorithm invoking $\QA$ uses random ranks, the primitive $\QA$ admits the highest-ranked feasible candidates.  This interface is used both for capacity-aware recoloring (for example Algorithm~\ref{alg:open-recolor} ($\OpenRecolor$) of section~\ref{sec:hard-capped-open-reduction}) and for assigning vertices of a split color to its descendant colors.

\begin{lemma}
\label{lem:ca-qa-runtime}
Assume that the primitives $\QA$ or $\CA$ are invoked for handling $O(\Delta)$ color records.
Then they
have the following implementation costs:
$\Tseq(\CA)=\Tseq(\QA)=O(m)$, $\Tcongest(\CA)=\Tcongest(\QA)=O(D+\Delta)$ and $\Tcc(\CA)=\Tcc(\QA)=O(1)$.
\end{lemma}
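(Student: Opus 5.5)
We treat each model separately, viewing both primitives as one aggregation over $O(\Delta)$ color records followed by one dissemination back to the vertices. Concretely, $\CA$ computes per-color sums (frequencies, counters), a global sort by frequency, and a broadcast of the resulting designations. $\QA$ gathers, for each color $i$, the candidate set $C_i$ (or its size together with the ranks), selects the $\min\{|C_i|,b_i\}$ top-ranked candidates, and notifies each candidate.

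\textbf{Sequential.} We keep an array indexed by color, of size $O(\Delta)\le O(m)$ assuming no isolated vertices (otherwise $O(n+m)$, which is what we mean by $O(m)$). One pass over the vertices accumulates frequencies and candidate lists. Sorting $O(\Delta)$ records by frequency is done with counting sort, since frequencies are integers at most $n$. For $\QA$, the top $b_i$ candidates of each $C_i$ are chosen by linear-time selection on ranks, and the total over all $i$ is $\sum_i|C_i|\le n$. Everything is linear.

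\textbf{$\CONGEST$.} We build a BFS tree in $O(D)$ rounds and run a pipelined convergecast of the $O(\Delta)$ per-color records, each of $O(\lg n)$ bits. A node forwards the partial sum for color $j$ once all of its children have delivered it. By the standard pipelining argument (cf.~\cite{Peleg00:book}), $k$ aggregate values reach the root in $O(D+k)$ rounds, which gives $O(D+\Delta)$. The root then sorts locally, assigns designations, and broadcasts the $O(\Delta)$ records down the tree, again in $O(D+\Delta)$ rounds. Each vertex learns its own color's status from the broadcast.

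For $\QA$ we cannot send all candidates to the root. Instead, the root computes a rank threshold per color. Candidates for one color $i$ may be spread throughout the tree, so we proceed in three steps:
\begin{enumerate}
\item Convergecast counts $|C_i|$ for every color; if $|C_i|\le b_i$, all candidates are admitted.
\item Otherwise, we need the $b_i$-th highest rank in $C_i$. Convergecasting only the $b_i$ largest ranks per color could be too costly.
\item Instead, we use tie-free random ranks from a polynomial range and determine the threshold by binary search with $O(\lg n)$ counting convergecasts. Doing this for all colors in parallel pipelines to $O(D+\Delta)$ per step.
\end{enumerate}

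This $\CONGEST$ selection for $\QA$ is the main obstacle. Plain binary search gives $O(\lg n(D+\Delta))$, not $O(D+\Delta)$. To remove the $\lg n$ factor, the first thing to try is an alternative assignment: allot each color a contiguous block of "admission slots" and have each subtree report its count, so the root can hand out prefix ranges top-down. Each subtree then admits the first candidates in a DFS/prefix order up to $b_i$; that is, it admits "exactly $\min\{|C_i|,b_i\}$" in an arbitrary feasible order rather than by global rank. A single upcast of counts and a single downcast of prefix offsets then suffice, both pipelined in $O(D+\Delta)$. Where rank order matters, we argue that the prefix order within random ranks can be simulated locally at each tree node: a node admits its own highest-ranked candidates within its allotted quota, sorting its children's contributions by reported maximum ranks.

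\textbf{$\CC$.} We assign each color $j$ a coordinator vertex $j \bmod n$. Each vertex sends its frequency contribution or candidate request to the coordinator of its color, one message per vertex. The coordinator computes the aggregate, or selects the top $b_i$ by rank locally since it holds all of $C_i$. Lenzen routing delivers this in $O(1)$ rounds, as each node sends $O(1)$ and receives at most $n$ messages when colors are spread across coordinators with $\Delta<n$. Global sorting of $O(\Delta)\le n$ keys uses Lenzen's $O(1)$-round sorting. Coordinators then notify the vertices directly in one round.
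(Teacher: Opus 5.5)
Your sequential and Congested Clique arguments, and your $\CONGEST$ treatment of $\CA$, match the paper's: linear scans with bucket sorting, a pipelined convergecast and broadcast of $O(\Delta)$ records of $O(\lg n)$ bits on a BFS tree, and one coordinator per color. For $\QA$ in $\CONGEST$ you are more careful than the paper. The paper only says that candidates send ``requests or ranks'' to the root using $O(\Delta)$ color records, which taken literally can be $\Theta(n)$ records. Your count-upcast/quota-downcast scheme works as follows: each node reports per-color subtree counts, the root fixes $q_i=\min\{|C_i|,b_i\}$, and each node splits the quota it receives among itself and its children. This correctly admits exactly $q_i$ candidates per color in $O(D+\Delta)$ rounds, and it also directly implements the block assignment needed by $\Split$.

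The step that fails is your final claim that highest-rank admission can be simulated by ordering children according to their reported maximum rank. Take quota $3$, one child subtree holding ranks $\{100,3,2,1\}$ and a sibling subtree holding $\{99,98,97\}$. The first child is served first and takes all three slots, so $100,3,2$ are admitted instead of $100,99,98$. Rather than patching this, use one of two fixes.

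(i) No argument in the paper depends on which candidates are admitted. Lemmas~\ref{lem:open-cap-progress} and~\ref{lem:budget-recolor} only use that all candidates for a color are admitted whenever their number is within the remaining capacity. $\Split$ only needs the prescribed block sizes. So your rank-free prefix rule already suffices for every use of $\QA$.

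(ii) If you want exactly the output distribution of ``top $b_i$ under fresh i.i.d.\ ranks'', note that this is a uniformly random $q_i$-subset of $C_i$. You can generate it within the same single downcast: each node splits its quota among itself and its children according to the multivariate hypergeometric distribution given by the upcast counts. This keeps the $O(D+\Delta)$ bound.
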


\subsection{List Coloring, $\DRC$ and $\Recolor$}
\label{sec:LC + ColorAll+Recolor}

The first set of key primitives we use concerns coloring. Underlying these operations is the \emph{list coloring} operation.

\noindent\textbf{List Coloring ($\LC$).}
Given a target set $V'\subseteq V$ and, for each $v\in V'$, a list $L(v)$ of admissible colors, $\LC$ returns a valid coloring of $V'$ from these lists, consistent with the colors already assigned to $V\setminus V'$.

We do not develop any new algorithms for $\LC$, but rather use existing $\LC$ algorithms for our needs.
Specifically, we use this primitive to implement two procedures, named $\DRC$ and $\Recolor$, each involving a single application of $\LC$, without using any additional communication primitives.  

\noindent\textbf{Procedure $\DRC(G,\numclr)$:} This procedure is used by our algorithms for the basic initialization of the coloring. It invokes $\LC$ with target set $V$ and color lists $L(v)=\{1,\ldots,\numclr\}$ for every vertex. 
The resulting coloring $\clr$ makes no guarantees on color frequencies.



\begin{lemma}
\label{lem:summary-drc}
Given a palette $\cP$ of size $\numclr\ge\Delta+1$, Procedure $\DRC(G,\numclr)$ produces a valid coloring $\clr:V\to\cP$ using at most $\numclr$ colors.  At the interface level, this is one call to $\LC$. Its running times are $\Tseq(\DRC)=O(m)$ and $\Tcc(\DRC)=O(1)$.
In the $\CONGEST$ model, the coloring is obtained in time  $\Tcongest(\DRC)=O(\lg^5\lg n)$ with high probability.
\end{lemma}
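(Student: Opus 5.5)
The plan is to treat $\DRC(G,\numclr)$ as a single call to $\LC$ and reduce each part of the lemma to a known property of list coloring in the corresponding model. There are two parts: correctness and running time.

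\textbf{Correctness.} Every vertex $v$ receives the list $L(v)=\{1,\ldots,\numclr\}$. Since $\numclr\ge\Delta+1$, we have $|L(v)|\ge\deg(v)+1$ for every $v$. This is exactly the $(\deg+1)$-list-coloring regime, in which a valid coloring always exists. The greedy argument suffices: when $v$ is processed, at most $\deg(v)$ colors are blocked by already colored neighbors, so some color in $L(v)$ remains free. Hence the coloring $\clr$ returned by $\LC$ is proper. Its image lies in $\{1,\ldots,\numclr\}=\cP$, so at most $\numclr$ colors are used. No frequency guarantee is claimed, so nothing further is needed for correctness.

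\textbf{Sequential time.} I will run the greedy procedure directly. Scan the vertices in arbitrary order. For each $v$, mark its neighbors' colors in a scratch array of size $\deg(v)+1$, ignoring colors above $\deg(v)+1$. Pick the smallest unmarked color, then unmark. This costs $O(\deg(v)+1)$ per vertex and $O(n+m)=O(m)$ in total, which is standard (assuming no isolated vertices, or absorbing $n$ into the bound).

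\textbf{$\CC$ and $\CONGEST$.} Here I will cite existing results rather than re-derive them, consistent with the paper's stance that no new $\LC$ algorithms are developed.
\begin{itemize}
\item For $\CC$, I invoke the deterministic $O(1)$-round $(\Delta+1)$-coloring (indeed $(\deg+1)$-list-coloring) algorithm of Czumaj, Davies and Parter~\cite{CzumajDaviesParter2020,CzumajDaviesParter2021}. This gives $\Tcc(\DRC)=O(1)$.
\item For $\CONGEST$, I invoke their randomized $(\deg+1)$-list-coloring algorithm, which runs in $O(\lg^5\lg n)$ rounds with high probability. This gives $\Tcongest(\DRC)=O(\lg^5\lg n)$ w.h.p.
\end{itemize}
In both models, the lists $\{1,\ldots,\numclr\}$ are implicit: a vertex only needs to know $\numclr$, which has $O(\lg n)$ bits. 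So no extra communication is required to distribute the lists, and the bandwidth constraints are satisfied.

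The only point needing care is checking that the cited distributed results apply to our setting. Specifically, their guarantees must cover palettes larger than $\Delta+1$ (not exactly $\Delta+1$) and lists specified implicitly. I would handle this with a simple reduction: every vertex first restricts its list to $\{1,\ldots,\Delta+1\}$, which is still a valid $(\Delta+1)$-coloring instance and still uses at most $\numclr$ colors. This requires $\Delta$ to be known, or else uses the $(\deg+1)$-list variant with $L(v)=\{1,\ldots,\deg(v)+1\}$, which needs no global knowledge.
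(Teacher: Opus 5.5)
Your overall route is the paper's own (Section~\ref{sec:congest-model}). $\DRC$ is a single $\LC$ call, validity follows from $|L(v)|\ge\deg(v)+1$, the sequential bound comes from the mark--scan--unmark greedy pass over the adjacency lists, and the $\CC$ bound is the deterministic constant-round algorithm of Czumaj, Davies and Parter.

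\textbf{The gap is in the $\CONGEST$ step.} Czumaj, Davies and Parter~\cite{CzumajDaviesParter2020,CzumajDaviesParter2021} give deterministic algorithms for the Congested Clique and for MPC. They contain no randomized $\CONGEST$ list-coloring algorithm, so the result you invoke there does not exist in that source. Their $\CC$ algorithm also does not carry over to $\CONGEST$. It ends by collecting linear-size uncolored subgraphs at single nodes using all-to-all routing, which is unavailable when messages of $O(\lg n)$ bits may travel only along the edges of $G$. The $O(\lg^5\lg n)$-round w.h.p.\ bound for $(\deg+1)$-list coloring in $\CONGEST$ is due to Halld\'orsson, Nolin and Tonoyan~\cite{HalldorssonNolinTonoyan2022}, and this is the result the paper uses. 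Once you substitute that citation, your argument is complete.

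Keep your remark about implicit lists. Taking $L(v)=\{1,\ldots,\deg(v)+1\}\subseteq\{1,\ldots,\numclr\}$ makes every list locally computable. Then no $\Omega(D)$-round dissemination of $\Delta$ or $\numclr$ is needed in $\CONGEST$, which the stated $O(\lg^5\lg n)$ bound tacitly requires and the paper does not spell out. In $\CC$, your other option of restricting to $\{1,\ldots,\Delta+1\}$ is harmless, since $\Delta$ can be computed in $O(1)$ rounds there. It also matches the $(\Delta+1)$-list-coloring form in which the Czumaj--Davies--Parter result is stated.
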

Throughout, the implementation details in the three settings and the analysis of the time bounds stated in our lemmas are deferred to Section~\ref{sec:congest-model}.

\noindent\textbf{Procedure $\Recolor(V_{target},\cP_{target})$:} 
This procedure is used for modifying an existing coloring. It is typically invoked after some colors were disqualified (say, due to having too small frequencies), and the corresponding vertices were uncolored. These vertices form the set $V_{target}$ that needs to be recolored.
$\cP_{target}$ is the set of colors to be used for these vertices.
The procedure extends the current partial coloring by invoking $\LC$ on $V_{target}$, where each vertex deletes from $\cP_{target}$ the colors already used by its colored neighbors. 

\begin{lemma}
\label{lemma:partial_extension}
Let $\clr$ be a valid partial coloring of $G$ with a set $V_{target}$ of uncolored vertices, and let $\cP_{target}$ be a target palette of size $\numclr_{target} = |\cP_{target}| \ge \Delta+1$. Then $\Recolor(V_{target},\cP_{target})$ extends the partial coloring $\clr$ to a valid coloring of $G$ in which every vertex of $V_{target}$ receives a color from $\cP_{target}$.
Letting each vertex learn $\cP_{target}$ requires a $\CA$ operation. Then,
the procedure consists of one call to $\LC$. Its running times are
$\Tseq(\Recolor)=O(m)$ and $\Tcc(\Recolor)=O(1)$.
In the $\CONGEST$ model, the coloring is obtained in time  $\Tcongest(\Recolor)=O(D+\Delta+\lg^5\lg n)$ with high probability.
\end{lemma}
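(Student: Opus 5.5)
The plan is to split the proof into two parts: a combinatorial correctness argument showing that the list-coloring instance built by $\Recolor$ is always solvable, and a cost accounting that charges one $\CA$ invocation (Lemma~\ref{lem:ca-qa-runtime}) and one $\LC$ invocation, priced exactly as in Lemma~\ref{lem:summary-drc}.

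\textbf{Correctness.} Let $v\in V_{target}$. Its list is
\[
L(v)=\cP_{target}\setminus\{\clr(u) \mid u\in N(v)\setminus V_{target}\}.
\]
I will argue that $L(v)$ is large enough for a greedy completion, using the degree-plus-one condition.

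1. At most $|N(v)\setminus V_{target}|$ colors are deleted from $\cP_{target}$. Hence
\[
|L(v)| \ge \numclr_{target}-|N(v)\setminus V_{target}| \ge \Delta+1-\deg(v)+|N(v)\cap V_{target}| \ge |N(v)\cap V_{target}|+1 .
\]
So every target vertex has more admissible colors than it has uncolored neighbors. This is the $(\deg+1)$-list-coloring condition on the induced subgraph $G[V_{target}]$.

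2. Under this condition a valid list coloring always exists. Process the vertices of $V_{target}$ in any order. Each vertex sees at most $|N(v)\cap V_{target}|$ colors already taken by target neighbors, so at least one color of $L(v)$ remains free.

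3. Any $\LC$ algorithm for this degree-plus-one list setting therefore returns a coloring of $V_{target}$ from the lists. It is proper inside $V_{target}$ by construction. It is also proper across edges to $V\setminus V_{target}$, because the colors of those neighbors were removed from $L(v)$. The coloring outside $V_{target}$ is unchanged and was already valid, so the combined coloring is valid, and every vertex of $V_{target}$ receives a color from $\cP_{target}$.

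\textbf{Running time.} Announcing $\cP_{target}$, which has $O(\Delta)$ color records in the intended uses, is a single $\CA$ call. Building $L(v)$ takes one round of neighbors exchanging their current colors: $O(1)$ rounds in $\CONGEST$ and $\CC$, and $O(m)$ sequential work. The $\LC$ call then costs
\begin{itemize}
\item sequentially, $O(m)$ by the greedy argument above (with lists handled via a bitmap indexed by colors);
\item in $\CC$, $O(1)$ deterministically, via the known deterministic $(\deg+1)$-list coloring in the Congested Clique, the same routine underlying $\Tcc(\DRC)$;
\item in $\CONGEST$, $O(\lg^5\lg n)$ w.h.p., via the randomized $(\deg+1)$-list coloring algorithm also used for $\DRC$.
\end{itemize}
Summing the $\CA$ cost $O(D+\Delta)$ with the $\LC$ cost gives $O(D+\Delta+\lg^5\lg n)$ in $\CONGEST$, $O(1)$ in $\CC$, and $O(m)$ sequentially.

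\textbf{Main obstacle.} The step needing the most care is the $\CONGEST$ bound. The cited $\LC$ algorithms assume palettes from a polynomially bounded color space, and they require that a vertex's list can be communicated or sampled efficiently. Here $L(v)$ is defined implicitly as $\cP_{target}$ minus the neighbors' colors. I would handle this by relabeling $\cP_{target}$ to $\{1,\dots,\numclr_{target}\}$ during the $\CA$ step, so that colors fit in $O(\lg n)$ bits. I would then check that the list-coloring algorithm only needs each vertex to know its own list locally, which holds here, and not to transmit it. The implementation details themselves are deferred to Section~\ref{sec:congest-model}.
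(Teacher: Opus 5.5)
Your proposal is correct and follows the paper's route. The paper gives no separate proof: it charges $\Recolor$ as one $\CA$ step plus one $\LC$ call at the primitive costs of Section~\ref{sec:congest-model}, and your count $|L(v)|\ge|N(v)\cap V_{target}|+1$ supplies exactly the $(\deg+1)$-list justification the paper leaves implicit. One nit, which the paper shares: the Czumaj--Davies--Parter $\CC$ routine used for $\DRC$ is stated for $(\Delta+1)$-list coloring, so for these genuinely $(\deg+1)$ lists you should invoke a dedicated constant-round $(\deg+1)$-list-coloring algorithm rather than ``the same routine underlying $\Tcc(\DRC)$''.
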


\begin{remark}
Throughout the paper, whenever a color $i$ is dissolved and $V(i)$ becomes empty, we implicitly assume that $i$ is erased from our palette (i.e., we do not write this change explicitly in the code). 
\end{remark}


\begin{remark}
The schematic modular decomposition of our algorithms into basic primitives for the sake of time complexity analysis may be somewhat artificial, especially for the sequential implementation, where color accounting and quota assignments can be performed directly. We use this scheme mainly for simplifying the time analysis of the distributed implementations of our algorithms.
We discuss this in more detail in Section~\ref{sec:congest-model}.
\end{remark}

\subsection{Safe splitting: reducing large color classes}
\label{sec:split}


We now present Procedure $\Split$, used for reducing the color frequencies of a given coloring $\clr$ below a specified upper threshold $\Flarge$,
%
%
by splitting large color classes. This is done by first identifying colors whose frequency exceeds the threshold (using the primitive $\CA$ to count the frequency of each color), and then creating, for each such color $i$, the necessary number of new \emph{descendant} colors of
prescribed (smaller) sizes replacing it, and assigning the vertices of $V(i)$ to these descendant colors (using the primitive $\QA$).


\begin{algorithm}[htb]
\caption{$\Split(\clr, \Flarge)$}
\label{alg:splitting}
\textbf{Input:} A current coloring $\clr: V \to \cP$, frequency threshold $\Flarge \ge 2$. 
\\
\textbf{Output:} A refined coloring $\clr'$ with upper frequency threshold $\Flarge$.


$\clr' \gets \clr$ \\
$\LARG \gets \{ i \in \cP \mid \freq(i) > \Flarge \}$
\\[4pt]
    \For{each color $i \in \LARG$ (in parallel)} {
        Let $\freq(i)=k\cdot\Flarge+t$ for $0\le t<\Flarge$\\
        {\bf If} $t\ge 1$ then
        split $V(i)$ into $r=k+1$ disjoint sets: $V_1,\ldots, V_{k-1}$ of size $\Flarge$, \\
        $V_k$ of size $\lfloor (\Flarge+t)/2 \rfloor$ and $V_{k+1}$ of size $\lceil (\Flarge+t)/2 \rceil$. \\
        {\bf Else} ($t=0$, $k\ge 2$)
        split $V(i)$ into $r=k$ disjoint sets: $V_1,\ldots, V_k$ of size $\Flarge$.\\
        Generate $r$ new unique colors $i_1,\ldots, i_r$.\\
        {\bf for} every $j \in \{1,\ldots,r\}$ {\bf do:}
        set $\clr'(v) \gets i_j$ for every $v \in V_{i_j}$.\\
        $\cP \gets (\cP \setminus \{i\}) \cup \{i_1,\ldots,i_r\}$ \\
    } 
\Return $\clr'$

\end{algorithm}



\begin{lemma}
\label{lemma:safe_splitting_general}
Consider an execution of Procedure $\Split(\clr, \Flarge)$ on a valid coloring $\clr$ of a graph $G$ and a frequency threshold $\Flarge \ge 1$.
\\
(a) $\Split$ returns a valid coloring.
\\
(b) If a color $i$ is split during the execution, 
then any created color $i_{new}$ has frequency 
$\freq(i_{new}) \ge \lceil \Flarge/2 \rceil$.
\\
(c) 
If the input coloring satisfies $\freq(i)\ge\Fsmall$ for every color $i$ and $\Fsmall\le\lfloor\Flarge/2\rfloor$, then every output color satisfies $\freq(i)\in[\Fsmall,\Flarge]$.
\\
(d) The running times of Procedure $\Split$ are $\Tseq(\Split)=O(m), 
\Tcongest(\Split)=O(D+\Delta), 
\Tcc(\Split)=O(1).$
\end{lemma}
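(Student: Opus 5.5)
We prove the four parts in order. For (a), we use only the structure of Algorithm~\ref{alg:splitting}. Every vertex $v$ in a split class $V(i)$ receives a descendant color $i_j$. This color is freshly generated, so it is unique to $V(i)$. Vertices outside $\bigcup_{i\in\LARG}V(i)$ keep their colors. Now consider two adjacent vertices $u,v$. Since $\clr$ is valid, $\clr(u)\neq\clr(v)$. If neither color is split, the new colors are the old ones. If $u$'s color $i$ is split, then $\clr'(u)$ is a descendant of $i$. In that case $\clr'(v)$ is either an unsplit color or a descendant of $\clr(v)\neq i$, and in both cases it differs from every descendant of $i$. Hence $\clr'$ is valid. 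We also check that the prescribed part sizes add up correctly. If $t\ge1$, the sizes sum to $(k-1)\Flarge+\lfloor(\Flarge+t)/2\rfloor+\lceil(\Flarge+t)/2\rceil=k\Flarge+t=\freq(i)$. If $t=0$, they sum to $k\Flarge$. So the partition is well defined.

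For (b), we need a lower bound on the sizes of the created classes. Since $\freq(i)>\Flarge$, we have $k\ge1$, and when $t=0$ we have $k\ge2$. The full parts have size $\Flarge\ge\lceil\Flarge/2\rceil$. In the case $t\ge1$, the two balanced parts have size at least $\lfloor(\Flarge+t)/2\rfloor\ge\lfloor(\Flarge+1)/2\rfloor=\lceil\Flarge/2\rceil$. A corner case arises when $k=1$: there are then no full parts, and only $V_1,V_2$, which are the two balanced halves of $\Flarge+t$. The bound above covers them. We also note the upper bound $\lceil(\Flarge+t)/2\rceil\le\lceil(2\Flarge-1)/2\rceil=\Flarge$, since $t\le\Flarge-1$. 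We will need this in (c).

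For (c), we look at each output color. An unsplit color has $\freq\le\Flarge$ by the definition of $\LARG$, and $\freq\ge\Fsmall$ by hypothesis. A created color has frequency at most $\Flarge$ by the remark at the end of (b). By (b) it also has frequency at least $\lceil\Flarge/2\rceil\ge\lfloor\Flarge/2\rfloor\ge\Fsmall$. Both bounds therefore hold for every output color.

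For (d), we invoke Lemma~\ref{lem:ca-qa-runtime}. Computing $\LARG$ is one $\CA$ call. Assigning the vertices of each $V(i)$ to descendants of prescribed sizes is one $\QA$ call per descendant color: each $v\in V(i)$ is a candidate, and the quotas are the prescribed sizes. To keep this to one round, we use sequential quotas, or equivalently assign by rank, taking ranks $1..\Flarge$ to $i_1$ and so on. Generating unique names is a local step, for example by pairing $i$ with the index $j$.

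The main obstacle is the number of color records. Lemma~\ref{lem:ca-qa-runtime} is stated for $O(\Delta)$ records, but the palette may contain more colors, and splitting creates further descendants. We plan to argue that the number of descendant records is at most $\sum_i \freq(i)/\lceil\Flarge/2\rceil$, which is at most $2n/\Flarge$. In the intended uses $\Flarge=\Theta(\sigma)$, so this is $O(\Delta)$. Alternatively, for the sequential bound we simply note that a single pass over $V$ with counters costs $O(n+m)=O(m)$. This yields $\Tseq=O(m)$, $\Tcongest=O(D+\Delta)$ and $\Tcc=O(1)$.
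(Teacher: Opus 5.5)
Your proof is correct and follows the paper's own route. Validity holds because each class is partitioned into fresh independent subsets. Part (b) comes from the bound $\min\{\Flarge,\lfloor(\Flarge+1)/2\rfloor\}\ge\lceil\Flarge/2\rceil$, part (c) from combining this with the hypothesis, and part (d) from one $\CA$ step plus one rank-based $\QA$ step via Lemma~\ref{lem:ca-qa-runtime}.

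Two of your additions make precise points that the paper's proof leaves implicit. The explicit upper bound $\lceil(\Flarge+t)/2\rceil\le\Flarge$ justifies what the paper only attributes to ``termination'' of the splitting. Your caveat that Lemma~\ref{lem:ca-qa-runtime} needs $O(\Delta)$ color records is a hypothesis the paper also uses without checking.
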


\begin{proof}
Each color class $V(i)$ is an independent set, so its partition into 
disjoint sets preserves the independence property. Thus, the resulting coloring is valid, establishing (a).




To prove part (b), consider a color $i$ selected for splitting. By the loop condition, $\freq(i) > \Flarge$, and recalling that $\Flarge$ is integral by convention and $\freq(i)$ is integral by definition, we have $\freq(i) \ge \Flarge+1$.
The algorithm splits 
the color $i$ into new colors, where each new color $i_{new}$ has frequency
\begin{equation}
\label{eq: freq i new}
\freq(i_{new}) \ge \min\{\Flarge,
\left\lfloor (\Flarge+1)/2 \right\rfloor\} \ge \left\lceil \frac{\Flarge}{2} \right\rceil~,
\end{equation}
where the final equality follows since $\Flarge$ is integral.

Claim (c) follows because unsplit colors preserve the input lower bound, and created colors $i_{new}$ have frequency $\freq(i_{new}) \ge \lceil\Flarge/2\rceil\ge\Fsmall$, 
and the procedure terminates only after all frequencies are at most $\Flarge$. 
The runtime bounds in part (d) follow from part (c) and Lemma~\ref{lem:ca-qa-runtime}: $\Split$ performs one $\CA$ step to identify the colors to split and compute the descendant quotas, and one $\QA$ step to assign vertices to the descendant colors; the model-specific implementations of these primitives are deferred to Section~\ref{sec:congest-model}.
\end{proof}

\section{Small Frequency Range Near Balanced Coloring}
\label{sec:small-frequency-range-coloring}
\label{s: longn time 3Delta palette [sigma/3, 2sigma/3]}
In this section, we present and analyze a distributed algorithm that achieves a near-equitable coloring. 
This algorithm employs a basic \emph{one-shot split \& recolor} paradigm: It generates an initial valid coloring, applies Procedure $\Split$ to eliminate large-frequency colors, then applies Procedure $\Recolor$ to eliminate small-frequency colors, and finally invokes Procedure $\Split$ again to get rid of any newly generated large-frequency colors.

While several of our later results aim at using at most $2(\Delta+1)$ 
colors with lower frequency threshold $\sigma/2$, this section aims at generating a coloring with many colors and a lower frequency threshold of $\varepsilon\sigma$ for small $\varepsilon$.

%
%
We classify colors into three categories based on their frequency $\freq(i)$ relative to the algorithm's specific frequency thresholds $\Fsmall$ and $\Flarge$.
\begin{itemize}
\item \textbf{Small:} 
The color $i$ is underpopulated, with $\freq(i) < \Fsmall$. Each small color $i$ is a candidate for dissolution (by recoloring the vertices in $V(i)$).
\item \textbf{Good:} 
The color $i$ is balanced, with $\freq(i)\in[\Fsmall,\Flarge]$. This is the desired state.
\item \textbf{Large:} 
The color $i$ is overpopulated, with $\freq(i) > \Flarge$. These colors must be split to reduce their frequency.
\end{itemize}
We denote the corresponding color palettes by:
\begin{align}
\label{eq: color classes}
\cP_{small} &= \{ i \in \cP \mid \freq(i) < \Fsmall \},
\\
\cP_{good} &= \{ i \in \cP \mid \freq(i)\in[\Fsmall,\Flarge] \},
\nonumber
\\
\cP_{large} &= \{ i \in \cP \mid \freq(i) > \Flarge \}.
\nonumber
\end{align}


Let us start with describing the main idea for the first near-equitable coloring algorithm. Fix a constant $\varepsilon=1/M$
for an integer $M\ge3$, and set the target frequency range to $[\Fsmall,\Flarge]=[\varepsilon\sigma,2\varepsilon\sigma]$. Since the upper threshold is below $\sigma$ when $\varepsilon\le1/3$, a coloring in this range must use more than $\Delta+1$ colors. The advantage is that this additional palette slack makes it possible to apply the one-shot split \& recolor paradigm directly.

The algorithm first runs $\DRC$, which gives a valid coloring but does not bound the color frequencies. It then applies $\Split$ to every color whose frequency is larger than $\Flarge$. After this step, no color is large: each color is either 
good, or is small and should be dissolved.

The next step is to recolor all vertices that currently belong to small colors, using only the good colors as the target palette. A natural concern is that after removing the small colors, too few colors might remain available for applying $\Recolor$. This is exactly where the choice $\varepsilon\le1/3$ is used: since every good color has frequency at most $2\varepsilon\sigma\le2\sigma/3$, a volume 
argument shows that there are at least $\Delta+1$ good colors (Lemma~\ref{lemma:volume}), and therefore the target palette is large enough for the recoloring step.

This recoloring may increase the frequencies of the good colors and may therefore create new large colors. The final call to $\Split$ repairs this by enforcing the upper threshold $\Flarge$ again. Since the range is doubling, namely $\Flarge \ge 2\Fsmall$, splitting a color whose frequency exceeds $\Flarge$ creates new colors of frequency at least $\Flarge/2=\Fsmall$. Hence the final split restores the upper bound without creating new small colors. Consequently, 
the algorithm returns a valid coloring whose frequency range is $[\varepsilon\sigma,2\varepsilon\sigma]$ and whose palette size is in the range $[(\Delta+1)/(2\varepsilon), (\Delta+1)/\varepsilon]$.

The formal description of Algorithm $\NBC$ is given in Algorithm {\ref{alg:balanced_coloring}} ($\NBC$).

\begin{algorithm}[t]
\caption{$\NBC(G,\varepsilon)$: Distributed Near-Balanced Coloring}
\label{alg:balanced_coloring}
\textbf{Input:} Graph $G=(V,E)$ with maximum degree $\Delta$ and size $n = |V|$, parameter $\varepsilon$. \\
\textbf{Output:} A valid coloring $\clr: V \to \cP$ with frequency range $[\varepsilon\sigma,2\varepsilon\sigma]$

$\Fsmall \gets \varepsilon\sigma$; \quad
$\Flarge \gets 2\varepsilon\sigma$
\Comment{frequency threshold parameters} \\

$\clr \gets$ \Call{$\DRC$}{$G, \Delta+1$} \Comment{Initialization: Start with a valid coloring}
\\
$\clr \gets$ \Call{\Split}{$\clr, \Flarge$} \Comment{Reduce all class frequencies to $\le \Flarge$}
\label{step: alg 2 - split}
\\
$\cP_{small} \gets \{ i \in \cP \mid \freq(i) < \Fsmall \}$; \hbox{\hskip 20pt}
$\cP_{good} \gets \{ i \in \cP \mid \freq(i)\in[\Fsmall,\Flarge] \}$
\label{step: alg 2 - classification}
\\
$V_{small} \gets \bigcup_{i \in \cP_{small}} V(i)$ \Comment{Vertices to be recolored} 
\label{step: alg 2 - dissolving}
\\
\Call{\Recolor}{$V_{small}, \cP_{good}$} \Comment{Recolor $V_{small}$ using palette $\cP_{good}$} \\ 
$\clr \gets$ \Call{\Split}{$\clr, \Flarge$} 
\Comment{Recoloring may have caused colors in $R$ to exceed $\Flarge$} 
\label{step: alg 2 - balancing}
\\
\Return $\clr$
\end{algorithm}

\subsection*{Correctness proof and analysis}

\begin{lemma}
\label{lemma:safe_splitting_first_algorithm}
Applying Procedure $\Split$ with frequency threshold $\Flarge= 2\varepsilon\sigma$ guarantees that any resulting color $i'$ has frequency $\freq(i') \ge \Fsmall$.
Consequently, splitting a large frequency color (in $\cP_{large}$) never creates a small-frequency color (in $\cP_{small}$).
\end{lemma}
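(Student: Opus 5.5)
The plan is to apply Lemma~\ref{lemma:safe_splitting_general}(b) directly and then handle integrality carefully.

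Lemma~\ref{lemma:safe_splitting_general}(b) says that every color $i'$ created by $\Split(\clr,\Flarge)$ satisfies $\freq(i') \ge \lceil \Flarge/2\rceil$. So the only work is to show $\lceil \Flarge/2\rceil \ge \Fsmall$ when $\Flarge = 2\varepsilon\sigma$ and $\Fsmall=\varepsilon\sigma$, under the paper's rounding convention. With that convention, the interval $[\varepsilon\sigma, 2\varepsilon\sigma]$ means $[\lfloor \varepsilon\sigma\rfloor, \lceil 2\varepsilon\sigma\rceil]$. Hence the integral thresholds are $\Fsmall=\lfloor\varepsilon\sigma\rfloor$ and $\Flarge=\lceil 2\varepsilon\sigma\rceil$.

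The key inequality is then
\[
\left\lceil \frac{\lceil 2\varepsilon\sigma\rceil}{2}\right\rceil \;\ge\; \frac{\lceil 2\varepsilon\sigma\rceil}{2} \;\ge\; \varepsilon\sigma \;\ge\; \lfloor \varepsilon\sigma\rfloor .
\]
This gives $\freq(i')\ge \Fsmall$ for every descendant color $i'$. Equivalently, the doubling condition $\Fsmall \le \lfloor \Flarge/2\rfloor$ needed in part~(c) also holds. This is because $\lfloor \lceil 2x\rceil/2\rfloor \ge \lfloor x\rfloor$: writing $x=a+\theta$ with $a$ an integer and $0\le\theta<1$, we have $\lceil 2x\rceil \ge 2a$.

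For the ``consequently'' part, I would argue as follows. The only colors modified by $\Split$ are those in $\cP_{large}$. Each of them is replaced by descendants of frequency at least $\Fsmall$, so none of the descendants lies in $\cP_{small}$. All other colors are left untouched, so splitting cannot add any color to $\cP_{small}$.

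The only delicate step is the rounding bookkeeping. If one instead took $\Flarge=\lfloor 2\varepsilon\sigma\rfloor$ with $\Fsmall=\lceil\varepsilon\sigma\rceil$, the inequality could fail: for $\varepsilon\sigma=2.4$ this gives $\lceil 4/2\rceil = 2 < 3$. So the proof must explicitly invoke the stated convention $[\lfloor a\rfloor,\lceil b\rceil]$ to get the inequality chain above.
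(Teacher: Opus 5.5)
Your proof is correct and is essentially the paper's argument: apply Lemma~\ref{lemma:safe_splitting_general}(b) with the integral threshold $\Flarge=\lceil 2\varepsilon\sigma\rceil$, then chain $\lceil \Flarge/2\rceil \ge \Flarge/2\ge\varepsilon\sigma\ge\lfloor\varepsilon\sigma\rfloor=\Fsmall$. Your explicit handling of the unsplit colors in the ``consequently'' part is slightly more complete than the paper's proof, which ends with the typo $i'\in\cP_{small}$ where $i'\notin\cP_{small}$ is meant.
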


\begin{proof}
Applying Lemma \ref{lemma:safe_splitting_general} with $\Flarge = \lceil2\varepsilon\sigma\rceil$, 
a split color $i$ with $\freq(i) > \Flarge$ results in new colors $i'$ satisfying 
$\freq(i') \ge \Flarge/2
\ge 
\varepsilon\sigma \ge \lfloor\varepsilon\sigma\rfloor = \Fsmall$,
so $i'\in\cP_{small}$.
\end{proof}

\begin{lemma}
\label{lemma:volume}
If the initial coloring uses at most $\Delta+1$ colors, then $|\cP_{good}| \ge \Delta + 1$.
\end{lemma}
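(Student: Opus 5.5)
We argue by a volume (counting) argument on the vertices, after the first call to $\Split$ in Step~\ref{step: alg 2 - split} of Algorithm~\ref{alg:balanced_coloring}. At that point every color has frequency at most $\Flarge=2\varepsilon\sigma$, so each color is either good or small, and $n=\tfreq(\cP_{good})+\tfreq(\cP_{small})$. The plan is to upper bound the total volume of the small colors, deduce that the good colors must cover most of the $n$ vertices, and then use the upper bound $2\varepsilon\sigma$ on each good color's frequency to conclude that there are many good colors.

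\textbf{Bounding the small volume.} By Lemma~\ref{lemma:safe_splitting_first_algorithm}, a color created by splitting is never small. Hence every small color after the split is an original color of the initial $\DRC$ coloring that was not split. By hypothesis there are at most $\Delta+1$ original colors, so $|\cP_{small}|\le\Delta+1$. Each small color has frequency less than $\Fsmall=\varepsilon\sigma$, so
\[
\tfreq(\cP_{small})<(\Delta+1)\varepsilon\sigma=\varepsilon n .
\]
Consequently $\tfreq(\cP_{good})>(1-\varepsilon)n$.

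\textbf{Counting the good colors.} Each good color has frequency at most $2\varepsilon\sigma$. Therefore
\[
|\cP_{good}|\ \ge\ \frac{(1-\varepsilon)n}{2\varepsilon\sigma}=\frac{(1-\varepsilon)(\Delta+1)}{2\varepsilon}.
\]
For $\varepsilon\le 1/3$ the factor $(1-\varepsilon)/(2\varepsilon)$ is decreasing in $\varepsilon$ and equals exactly $1$ at $\varepsilon=1/3$, which yields $|\cP_{good}|\ge\Delta+1$.

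\textbf{Main obstacle: rounding.} The inequality at $\varepsilon=1/3$ is tight, so rounding of the thresholds must be handled carefully. Under the paper's convention $\Fsmall=\lfloor\varepsilon\sigma\rfloor$ and $\Flarge=\lceil 2\varepsilon\sigma\rceil$.
\begin{itemize}
\item For the small colors, a frequency strictly below $\lfloor\varepsilon\sigma\rfloor$ is still strictly below $\varepsilon\sigma$, so the strict bound $\tfreq(\cP_{small})<\varepsilon n$ remains valid.
\item For the good colors, each frequency is an integer at most $\lceil 2\varepsilon\sigma\rceil$.
\end{itemize}
To absorb the ceiling, I would keep the strict inequality throughout and use integrality. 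If $|\cP_{good}|\le\Delta$, then
\[
n=\tfreq(\cP_{good})+\tfreq(\cP_{small})<\Delta\lceil 2\sigma/3\rceil+(\Delta+1)\sigma/3,
\]
and I would derive a contradiction from this. That contradiction holds cleanly when $\sigma$ is divisible by $3$. In general, if the ceiling breaks the argument, I would restrict to the regime $\varepsilon<1/3$, or $\sigma$ large enough, where there is slack in $(1-\varepsilon)/(2\varepsilon)>1$. That slack absorbs the $O(\Delta)$ rounding loss, since the margin is $\Theta(n)$ while the loss is at most about $\Delta+1$ vertices.
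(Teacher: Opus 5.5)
Your core argument is the same as the paper's. You bound the small volume by $(\Delta+1)\varepsilon\sigma=\varepsilon n$, using that at most $\Delta+1$ colors can be small (you get this from Lemma~\ref{lemma:safe_splitting_first_algorithm}, the paper from the set $\cP^\circ$ of pre-split non-small colors). You then conclude $\tfreq(\cP_{good})>(1-\varepsilon)n$ and divide by $2\varepsilon\sigma$. The paper works with the real thresholds $\varepsilon\sigma$ and $2\varepsilon\sigma$ throughout, so at its level of rigor your first two paragraphs are a complete proof.

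Your rounding paragraph, however, does not settle the issue it raises. The inequality $n<\Delta\lceil 2\sigma/3\rceil+(\Delta+1)\sigma/3$ gives a contradiction only when $\Delta\bigl(\lceil 2\sigma/3\rceil-2\sigma/3\bigr)\le 2\sigma/3$, and this fails e.g.\ for $\sigma=7$ and $\Delta\ge15$. The fallback is also inadequate. Restricting to $\varepsilon<1/3$ drops exactly the case $\NBC(G,1/3)$ used later, and at $\varepsilon=1/3$ the margin is only $2\sigma/3$ vertices, not $\Theta(n)$.

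The missing observation is integrality on the \emph{small} side. A small color has integer frequency strictly below the integer $\lfloor\varepsilon\sigma\rfloor$, hence at most $\varepsilon\sigma-1$. This unit deficit per small color pays for the ceiling excess (less than $1$) per good color. Concretely, if $\lfloor\varepsilon\sigma\rfloor\ge1$ and $|\cP_{good}|\le\Delta$, then
\[
n\le(\Delta+1)(\varepsilon\sigma-1)+\Delta\lceil 2\varepsilon\sigma\rceil\le(\Delta+1)(\varepsilon\sigma-1)+\Delta(2\varepsilon\sigma+1)=3\varepsilon n-2\varepsilon\sigma-1<n,
\]
which is a contradiction for every $\varepsilon\le1/3$, with no divisibility assumption.

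If $\lfloor\varepsilon\sigma\rfloor=0$, there are no small colors, so $\Recolor$ is vacuous and $\NBC$ does not need the lemma. In that degenerate range the rounded statement can in fact fail. For $\varepsilon=1/3$ and $\sigma=1.6$, an initial coloring whose classes all have size $2$ is never split (the threshold is $\lceil 1.07\rceil=2$) and leaves only $0.8(\Delta+1)$ good colors. So $\lfloor\varepsilon\sigma\rfloor\ge1$ is the only restriction actually needed.
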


\begin{proof}
Let $\cP^\circ = \cP \setminus \cP_{small}$ denote the set of all non-small colors before any splitting occurs.
The sum of frequencies in the graph satisfies
\begin{equation}
\label{eq:sum of freq}
\tfreq(\cP_{small}) + \tfreq(\cP^\circ) = n.
\end{equation}

First, we bound $\tfreq(\cP_{small})$. The maximum frequency of any color $i\in \cP_{small}$ satisfies $\freq(i) < \varepsilon\sigma$. Since the initial valid coloring (obtained by Algorithm $\DRC$) uses at most $\Delta+1$ colors, we have $|\cP_{small}| \le \Delta+1$.
Therefore,
\[ \tfreq(\cP_{small}) = \sum_{i \in \cP_{small}} \freq(i) < |\cP_{small}|\cdot\varepsilon\sigma=|\cP_{small}| \cdot \frac{\varepsilon n}{\Delta+1} \le
\varepsilon n,\]
so combined with Eq.~\eqref{eq:sum of freq},
\[ \tfreq(\cP^\circ) = n - \tfreq(\cP_{small}) > (1-\varepsilon)n.\]
The splitting process performed in 
Step \ref{step: alg 2 - balancing}
(defined in Lemma \ref{lemma:safe_splitting_general}) ensures that any color in $\cP^\circ$ is either already in $\cP_{good}$ or is split into colors that end up in $\cP_{good}$. Thus, $\tfreq(\cP_{good}) \ge \tfreq(\cP^\circ) > (1-\varepsilon)n$.

Finally, we lower bound $|\cP_{good}|$. Since every color $i \in \cP_{good}$ satisfies the upper bound $\freq(i) \le 2\varepsilon\sigma= \frac{2\varepsilon n}{\Delta+1}$, the minimum number of colors required to account for $\tfreq(\cP_{good})$ is
\[ |\cP_{good}| \ge \frac{\tfreq(\cP_{good})}{\max_{i \in \cP_{good}} \freq(i)} > \frac{(1-\varepsilon)n}{\frac{2\varepsilon n}{\Delta+1}} =  \frac{1-\varepsilon}{2\varepsilon} (\Delta + 1) \ge \Delta+1,\]
where the last inequality follows from the fact that $\varepsilon =1/M \le 1/3$.
\end{proof}

\begin{lemma}
\label{lemma:good_color_sizes}
Given a graph $G$
and an initial coloring with $\freq(i) \ge \varepsilon\sigma$ for every color $i$, Procedure $\Split$ returns 
a refined coloring with frequency range $[\varepsilon\sigma,2\varepsilon\sigma]$.
\end{lemma}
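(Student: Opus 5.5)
This lemma is essentially a specialization of Lemma~\ref{lemma:safe_splitting_general}(c) to the thresholds $\Fsmall=\varepsilon\sigma$ and $\Flarge=2\varepsilon\sigma$. The plan is to verify that hypothesis~(c), namely $\Fsmall\le\lfloor\Flarge/2\rfloor$, holds under the paper's rounding convention, and then split the output colors into two cases.

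\textbf{Step 1 (thresholds).} Under the paper's convention, the integral thresholds are $\Fsmall=\lfloor\varepsilon\sigma\rfloor$ and $\Flarge=\lceil 2\varepsilon\sigma\rceil$. Then $\lfloor \Flarge/2\rfloor\ge\lfloor\varepsilon\sigma\rfloor=\Fsmall$, because $\Flarge/2\ge\varepsilon\sigma$ and the floor is monotone. I also need $\Flarge\ge 1$ (the pseudocode even asks for $\Flarge\ge 2$). This holds as soon as $\sigma>0$. In the degenerate case the lower bound is vacuous and $\Split$ only reduces frequencies, so I would note this in one line.

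\textbf{Step 2 (case analysis on output colors).} Each output color is of one of two kinds. If it is an unsplit color $i$, then $\freq(i)\le\Flarge$, since otherwise it would lie in $\LARG$ and be split. Its frequency is unchanged, so the hypothesis gives $\freq(i)\ge\varepsilon\sigma\ge\Fsmall$. If it is a descendant $i_{new}$ of a split color, the pseudocode assigns it size either exactly $\Flarge$ or one of $\lfloor(\Flarge+t)/2\rfloor$ and $\lceil(\Flarge+t)/2\rceil$ with $1\le t<\Flarge$. Both of these last two sizes are at most $\Flarge$, and Lemma~\ref{lemma:safe_splitting_general}(b) gives the lower bound $\lceil\Flarge/2\rceil\ge\Fsmall$. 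This is exactly the content of Lemma~\ref{lemma:safe_splitting_first_algorithm}, which I would cite.

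\textbf{Main obstacle.} The only real subtlety is the rounding: the statement writes the range as $[\varepsilon\sigma,2\varepsilon\sigma]$ while $\Split$ works with integers. I would therefore state explicitly that the range is read as $[\lfloor\varepsilon\sigma\rfloor,\lceil2\varepsilon\sigma\rceil]$, per the footnote convention. With that reading, every inequality above holds, and the lemma follows by invoking Lemma~\ref{lemma:safe_splitting_general}(c) directly.
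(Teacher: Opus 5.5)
Your proposal is correct and follows essentially the same argument as the paper. Both use the rounding $\Fsmall=\lfloor\varepsilon\sigma\rfloor$, $\Flarge=\lceil 2\varepsilon\sigma\rceil$ and split into two cases: unsplit colors keep their frequency, so they lie between $\varepsilon\sigma$ (by hypothesis) and $\Flarge$ (since they were not in $\LARG$), while descendants of split colors are bounded below via Lemma~\ref{lemma:safe_splitting_first_algorithm}, i.e., Lemma~\ref{lemma:safe_splitting_general}(b) with $\lceil\Flarge/2\rceil\ge\Fsmall$. If anything, your version tracks the one-shot partition of Algorithm~\ref{alg:splitting} more faithfully, whereas the paper's proof describes $\Split$ as repeated halving; the logic is the same.
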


\begin{proof}
Let $\Fsmall = \lfloor\varepsilon\sigma\rfloor$ and $\Flarge = \lceil2\varepsilon\sigma\rceil$. Note that $\Flarge \ge 2\Fsmall$.
The algorithm proceeds in phases. In each phase, every color $i$ performs the following operations in parallel (as described in Procedure $\Split$ (Algorithm~\ref{alg:splitting})):
In each round, the nodes compute the current frequency $\freq(i)$.
\begin{itemize}
    \item If $\freq(i) \le \Flarge$, the color satisfies the upper bound. Since the initial coloring assumed in the lemma
    guarantees $\freq(i) \ge \Fsmall$ and no action is taken, the bounds hold.
    \item If $\freq(i) > \Flarge$, color $i$ is split: $V(i)$ is partitioned into two disjoint subsets $V(i_1)$ and $V(i_2)$, with frequencies $\freq(i_1) = \lfloor \freq(i)/2 \rfloor$ and $\freq(i_2) = \lceil \freq(i)/2 \rceil$.
\end{itemize}
This process repeats until all colors satisfy $\freq(i) \le \Flarge$.
We proved in \ref{lemma:safe_splitting_first_algorithm} that splitting a color which violates the upper bound never results in a color that violates the lower bound.
Thus, for any resulting new color $i'$, $\freq(i') > \Fsmall$.
This ensures that the lower bound $\freq(i') \ge \varepsilon\sigma$ is maintained throughout the execution.
When the process terminates, all colors satisfy the upper bound $\freq(i) \le 2\varepsilon\sigma$. Thus, the final coloring satisfies $\freq(i)\in[\varepsilon\sigma,2\varepsilon\sigma]$ for every $i$.
%
\end{proof}

For the sake of the time analysis of Section~\ref{sec:congest-model} we note the following:

\begin{fact}
\label{lem:nbc-runtime-interface}
Algorithm $\NBC$ makes one call to $\DRC$, two calls to $\Split$, one $\CA$ step to classify colors into $\cP_{small}$ and $\cP_{good}$, and one call to $\Recolor$.
\end{fact}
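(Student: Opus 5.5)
The statement is a Fact: it simply records the primitive calls made by Algorithm $\NBC$. The plan is to verify it by a direct line-by-line inspection of Algorithm~\ref{alg:balanced_coloring}, attributing each step to the primitive or procedure it invokes.

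\begin{enumerate}
\item The initialization line $\clr \gets \DRC(G,\Delta+1)$ is the single call to $\DRC$.
\item Step~\ref{step: alg 2 - split} and Step~\ref{step: alg 2 - balancing} are the only invocations of $\Split$, which gives exactly two calls. Each call internally uses its own $\CA$ and $\QA$ steps, as described in the proof of Lemma~\ref{lemma:safe_splitting_general}(d). These are charged to $\Split$ and not counted separately.
\item Step~\ref{step: alg 2 - classification} computes $\cP_{small}$ and $\cP_{good}$ from the current frequencies. Step~\ref{step: alg 2 - dissolving} informs the vertices of $V_{small}$ that they must be recolored. Both are aggregation and dissemination tasks of the kind listed in the definition of $\CA$, and they can be merged into one $\CA$ step: computing frequencies, classifying, and notifying affected vertices all happen in one pass.
\item The line $\Recolor(V_{small},\cP_{good})$ is the single call to $\Recolor$. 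The $\CA$ operation by which vertices learn $\cP_{target}=\cP_{good}$ is already accounted for inside $\Recolor$ by Lemma~\ref{lemma:partial_extension}, or it can be folded into the classification $\CA$ step.
\end{enumerate}

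To make the counts meaningful, I would also check that every $\CA$ and $\QA$ invocation handles only $O(\Delta)$ color records, as Lemma~\ref{lem:ca-qa-runtime} requires. $\DRC$ produces at most $\Delta+1$ colors. For fixed $\varepsilon$, the first $\Split$ creates at most $n/\Flarge + (\Delta+1) = O(\Delta/\varepsilon)$ colors, since each new color has frequency at least $\Flarge/2$. After recoloring, all colors lie in $\cP_{good}$, so the second $\Split$ also yields $O((\Delta+1)/\varepsilon)$ colors. This last bound uses the volume argument of Lemma~\ref{lemma:volume} together with Lemma~\ref{lemma:good_color_sizes}.

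The only mildly delicate point is the bookkeeping convention: deciding which $\CA$ steps are counted inside $\Split$ and $\Recolor$ and which are counted at the top level. Once that convention is fixed, the count follows directly from inspecting the code.
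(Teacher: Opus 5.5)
Your proposal is correct and matches the paper. The paper states this Fact without proof, as a direct reading of Algorithm~\ref{alg:balanced_coloring}, and charges the $\CA$/$\QA$ steps inside each $\Split$ and the palette-announcement $\CA$ inside $\Recolor$ to those procedures (Lemmas~\ref{lemma:safe_splitting_general}(d) and~\ref{lemma:partial_extension}), exactly as you do.

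Your optional $O(\Delta)$-records check is not needed for the Fact, and it has two small slips. First, the bound $n/\Flarge+(\Delta+1)$ comes from each split color having at most $\lfloor\freq(i)/\Flarge\rfloor+1$ descendants; the $\Flarge/2$ lower bound you cite would give $2n/\Flarge$ instead. Second, the bound after the second split comes from the final lower threshold, $\numclr\le n/\Fsmall=(\Delta+1)/\varepsilon$, as in the proof of Theorem~\ref{thm:summary-nbc}; it does not come from Lemma~\ref{lemma:volume}, which only lower-bounds $|\cP_{good}|$.
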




\begin{theorem}
\label{thm:summary-nbc}
For a graph $G$ 
and a parameter $\varepsilon\le 1/3$,
Algorithm $\NBC(G,\varepsilon)$ computes a valid coloring $\clr$ with palette size $\numclr\in[(\Delta+1)/(2\varepsilon),(\Delta+1)/\varepsilon]$ and frequency range 
$[\Fsmall,\Flarge] =[\varepsilon\sigma,2\varepsilon\sigma]$.
Its running times are $\Tseq(\NBC)=O(m)$ and $\Tcc(\NBC)=O(1)$.
In the $\CONGEST$ model, the coloring is obtained in time  $\Tcongest(\NBC)=O(D+\Delta+\lg^5\lg n)$ with high probability.
\end{theorem}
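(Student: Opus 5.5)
We would prove Theorem~\ref{thm:summary-nbc} by following the four steps of Algorithm~\ref{alg:balanced_coloring} and tracking, after each step, which colors are small, good or large. There are three parts: validity, the frequency range, and the palette size. The running times will follow from Fact~\ref{lem:nbc-runtime-interface} together with the cost lemmas.

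\textbf{Validity and frequency range.} By Lemma~\ref{lem:summary-drc}, $\DRC(G,\Delta+1)$ returns a valid coloring with at most $\Delta+1$ colors. After the first call to $\Split$, Lemma~\ref{lemma:safe_splitting_general}(a) keeps the coloring valid and every color has frequency at most $\Flarge$. Hence $\cP=\cP_{small}\cup\cP_{good}$.

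Lemma~\ref{lemma:volume} then gives $|\cP_{good}|\ge\Delta+1$. Its argument uses only that the initial coloring had at most $\Delta+1$ colors and that, by Lemma~\ref{lemma:safe_splitting_first_algorithm}, split descendants are never small. So the hypothesis of Lemma~\ref{lemma:partial_extension} holds, and $\Recolor(V_{small},\cP_{good})$ extends to a valid coloring using only colors of $\cP_{good}$. All small colors disappear, and frequencies of good colors only grow. Every color now has frequency at least $\Fsmall$, some possibly above $\Flarge$.

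For the final $\Split$ we have $\Fsmall\le\lfloor\Flarge/2\rfloor$ (the range is doubling), so Lemma~\ref{lemma:safe_splitting_general}(c), or equivalently Lemma~\ref{lemma:good_color_sizes}, gives output frequencies in $[\varepsilon\sigma,2\varepsilon\sigma]$. Validity is again part~(a).

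\textbf{Palette size.} We use a volume count on the final coloring. Summing frequencies gives
\[
n=\sum_i\freq(i)\in\bigl[\numclr\,\varepsilon\sigma,\;\numclr\cdot 2\varepsilon\sigma\bigr],
\]
so $n/(2\varepsilon\sigma)\le\numclr\le n/(\varepsilon\sigma)$. Substituting $\sigma=n/(\Delta+1)$ yields $\numclr\in[(\Delta+1)/(2\varepsilon),(\Delta+1)/\varepsilon]$. The main nuisance is rounding, since the thresholds are really $\lfloor\varepsilon\sigma\rfloor$ and $\lceil2\varepsilon\sigma\rceil$. We will absorb this into the paper's convention that floors and ceilings are omitted, and note that the bounds hold up to this rounding.

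\textbf{Running times.} By Fact~\ref{lem:nbc-runtime-interface}, $\NBC$ makes one call to $\DRC$, two calls to $\Split$, one $\CA$ step and one call to $\Recolor$. The number of color records stays $O(\Delta)$ for fixed $\varepsilon$, because the palette is at most $(\Delta+1)/\varepsilon$ after each split. Summing Lemmas~\ref{lem:summary-drc}, \ref{lemma:partial_extension}, \ref{lemma:safe_splitting_general}(d) and~\ref{lem:ca-qa-runtime} then gives $O(m)$ sequentially, $O(1)$ in $\CC$, and $O(D+\Delta+\lg^5\lg n)$ w.h.p.\ in $\CONGEST$. The randomness enters only through the $\LC$ calls, so a union bound over the two $\LC$ calls keeps the high-probability guarantee.

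The step needing the most care is confirming that Lemma~\ref{lemma:volume} really applies at the moment of recoloring. We would check that the good palette after the first split has total frequency greater than $(1-\varepsilon)n$ with each color at most $2\varepsilon\sigma$, which is exactly the lemma's computation.
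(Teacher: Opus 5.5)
Your proposal is correct and follows the paper's proof essentially step for step. The sequence is the same: the initial $(\Delta+1)$-coloring, the first split, Lemma~\ref{lemma:volume} to get at least $\Delta+1$ good colors, recoloring via Lemma~\ref{lemma:partial_extension}, the final doubling-safe split, the two-sided volume count for the palette size, and Fact~\ref{lem:nbc-runtime-interface} for the running times. Your remarks on rounding, the $O(\Delta)$ color-record count, and the union bound over the two list-coloring calls are slightly more careful than the paper. One small correction: after the first split the palette can also still contain up to $\Delta+1$ small colors, so it is not bounded by $(\Delta+1)/\varepsilon$ there, but it is still $O(\Delta)$ for fixed $\varepsilon$, which is all you need.
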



\begin{proof}
The initialization
executes Procedure $\DRC$ with a palette of size $\numclr$ $= (\Delta+1)$. 
and produces a valid coloring $\clr_{init}$.
\\
The splitting procedure of Step \ref{step: alg 2 - split}, described in Lemma~\ref{lemma:good_color_sizes}, is applied to all colors in $\clr_{init}$.
Any color $i$ with $\freq(i) > \Flarge$ is split recursively.
At the end of this phase, all colors satisfy $\freq(i) \le \Flarge$.
\\
Step \ref {step: alg 2 - classification} classifies the colors into two sets, $\cP_{small}$ and $\cP_{good}$.
By Lemma~\ref{lemma:volume}, since the total frequency is $n$ and $\tfreq(\cP_{small})$ is bounded (due to the bounded number of initial colors), the number of good colors satisfies $|\cP_{good}|\ge \Delta+1$.
\\
Step \ref{step: alg 2 - dissolving} now dissolves the small colors by applying Procedure $\Recolor$, treating the vertices in $V_{small}$ as uncolored and the set $\cP_{good}$ as the global palette of available colors.
We have a valid partial coloring on $V \setminus V_{small}$ using colors from $\cP_{good}$.
The palette size is $|\cP_{good}| \ge \Delta+1$.
By Lemma~\ref{lemma:partial_extension}, we can extend the coloring to $V_{small}$ using the palette $\cP_{good}$.
After this phase, all vertices are colored with colors from $\cP_{good}$. The set $\cP_{small}$ is empty.
\\
The recoloring in 
Step \ref{step: alg 2 - dissolving}
may have caused some colors $i \in \cP_{good}$ to grow beyond the upper bound $\Flarge$ (as they absorbed nodes from $\cP_{small}$).
Step \ref{step: alg 2 - balancing}
achieves the final balancing by applying Procedure $\Split$ on $\cP_{good}$ one final time.
By Lemma ~\ref{lemma:safe_splitting_first_algorithm}, splitting a color with frequency larger than $\Flarge$ results in new colors strictly above $\Fsmall$ in frequency. Thus, no new Small colors are created.

The final coloring $\clr$ is valid as it is obtained from extending a valid initial coloring and splitting color classes (which are independent sets).
All colors $i \in \cP$ satisfy $\freq(i)\in[\Fsmall,\Flarge]$.
Recall that $n = \sum_{i \in \cP} \freq(i)$.
Since $\freq(i) \ge \Fsmall$ for all $i$, we have $n \ge \numclr \cdot \Fsmall$, so
\[ \numclr \le \frac{n}{\Fsmall} = \frac{n}{\frac{\varepsilon n}{\Delta+1}} = \frac{\Delta+1}{\varepsilon}.\]
Conversely, since $\freq(i) \le \Flarge$ for all $i$, we have $n \le \numclr \cdot \Flarge$, so
\[ \numclr \ge \frac{n}{\Flarge} = \frac{n}{\frac{2\varepsilon n}{\Delta+1}} = \frac{\Delta+1}{2\varepsilon}~.\]
Thus, $\numclr \in [(\Delta+1)/(2\varepsilon), (\Delta+1)/\varepsilon]$. The runtime bounds follow from Fact~\ref{lem:nbc-runtime-interface} and are calculated in Section~\ref{sec:congest-model}. The theorem follows. 
\end{proof}


\begin{corollary}
Taking the 
maximal value $\varepsilon=1/3$, 
we obtain a coloring with
frequency range $[\frac{\sigma}{3}, \frac{2\sigma}{3}]$,
palette size
$\numclr$ in
$[1.5(\Delta+1), 3(\Delta+1)]$.
\end{corollary}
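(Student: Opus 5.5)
The corollary is a direct specialization of Theorem~\ref{thm:summary-nbc}, so the plan is simply to instantiate it at $\varepsilon=1/3$. First I check that this choice is admissible. The theorem, and the lemmas behind it, require $\varepsilon\le 1/3$, and Lemma~\ref{lemma:volume} also uses the form $\varepsilon=1/M$ with integer $M\ge3$. The value $\varepsilon=1/3$ is the boundary case $M=3$, so it is allowed.

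With this choice the volume argument of Lemma~\ref{lemma:volume} gives
\[
|\cP_{good}| > \frac{1-\varepsilon}{2\varepsilon}(\Delta+1) = \Delta+1 .
\]
This is exactly the bound needed to apply Lemma~\ref{lemma:partial_extension} in the recoloring step. The boundary case is the only place where any care is needed, and it holds because the factor $(1-\varepsilon)/(2\varepsilon)$ equals $1$ at $\varepsilon=1/3$ while the inequality in the lemma is strict.

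Next I substitute into the conclusions of Theorem~\ref{thm:summary-nbc}. The frequency range becomes
\[
[\varepsilon\sigma,\,2\varepsilon\sigma] = [\sigma/3,\,2\sigma/3].
\]
The palette bounds become
\[
\frac{\Delta+1}{2\varepsilon} = \tfrac{3}{2}(\Delta+1) = 1.5(\Delta+1),
\qquad
\frac{\Delta+1}{\varepsilon} = 3(\Delta+1).
\]
These are obtained from the counting inequalities $n\ge\numclr\Fsmall$ and $n\le\numclr\Flarge$ in the theorem's proof.

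As in the rest of the paper, floors and ceilings are handled by the rounding convention for $[a,b]$. No other step poses an obstacle, since validity of the coloring and the running times carry over unchanged from the theorem.
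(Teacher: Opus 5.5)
Your proposal is correct and matches the paper. The paper gives no separate proof of this corollary; it is the direct substitution of $\varepsilon=1/3$ into Theorem~\ref{thm:summary-nbc}, which yields the frequency range $[\varepsilon\sigma,2\varepsilon\sigma]$ and the palette bounds $(\Delta+1)/(2\varepsilon)$ and $(\Delta+1)/\varepsilon$. Re-checking the boundary case of Lemma~\ref{lemma:volume} is harmless but unnecessary, since that theorem already covers every $\varepsilon\le 1/3$, including equality.
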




\section{Iterated near-equitable coloring algorithm ($\ILDRC$)}
\label{s: ILDRC}

The previous section applied the one-shot split \& recolor paradigm to handle small ranges of the form $[\varepsilon\sigma, 2\varepsilon\sigma]$, where the upper threshold was below $\sigma$. 
In this section, we aim to reach the higher frequency range $[\Fsmall,\Flarge]=[\sigma/2,\sigma]$. For this purpose, one activation of $\Split$ and $\Recolor$ no longer suffices. Instead, we make use of an \emph{iterative} split \& recolor paradigm, where the splitting and recoloring steps are repeated for $O(\lg\Delta)$ times.

The algorithm uses the same initial coloring algorithm $\DRC$.
Then, each iteration consists of invoking Procedure $\Split$ with threshold $\Flarge=\sigma$, so no current color is too large, followed by Procedure $\Recolor$, which uses the $\Delta+1$ most frequent colors as a target palette, and recolors all remaining colors.
%
%
The analysis shows that the split steps do not create new small colors, and in each iteration, the recoloring step eliminates more than half of the remaining small colors. Thus, the number of small colors decreases geometrically with the iterations.
%
%
A final call to $\Split$ enforces the upper bound one last time. Thus, 
the algorithm returns a valid coloring with at most $2(\Delta+1)$ colors, and every color has frequency in $[\sigma/2,\sigma]$.

The formal description of Algorithm $\ILDRC$ is given in Algorithm {\ref{alg:iterative_balanced_final}} ($\ILDRC$).

\begin{algorithm}[htbp]
\caption{Iterative 
Distributed Near Balanced Coloring
\\
Procedure $\ILDRC(G)$}
\label{alg:iterative_balanced_final}
\textbf{Input:} Graph $G=(V,E)$ of size $n$ and Maximum Degree $\Delta$. \\
\textbf{Output:} A valid coloring with $\numclr \le 2(\Delta+1)$ and frequency range 
$[\sigma/2, \sigma]$. 

$\Fsmall \gets \sigma/2$; \quad
$\Flarge \gets \sigma$\\
$\clr \leftarrow$ \Call{$\DRC$}{$G, \Delta+1$}
\Comment{Obtain initial valid $\Delta+1$ coloring}

\For{$t = 1$ to $O(\lg\Delta)$ (sequentially)}
    {
    $\clr \leftarrow$ \Call{\Split}{$\clr, \Flarge$}

    Sort colors in $\cP$ by nonincreasing order of frequency $\freq(i)$: $i_1, i_2, \dots, i_{\numclr}$. 
    \label{step: alg 3 - classification-start}
    \\
    $\cP_{target} \leftarrow \{i_1, \dots, i_{\Delta+1}\}$ \Comment{Top $\Delta+1$ most frequent colors} 
    \label{step: alg 3 - classification-end}
    \\
    $V_{target} \leftarrow \bigcup_{i \in \cP \setminus \cP_{target}} V(i)$ \\
    \Call{\Recolor}{$V_{target}, \cP_{target}$} \Comment{Recolor $V_{target}$ using palette $\cP_{target}$} 
    \label{step: alg 3 - recoloring}
}  

$\clr \leftarrow$ \Call{\Split}{$\clr, \Flarge$} 
\Comment{Final Split}
\label{step: alg 3 - final split}
\\
\Return $\clr$
\end{algorithm}

\subsection*{Correctness proof and analysis}

Recall that $\sigma = \frac{n}{\Delta+1}$. We classify colors at any stage of the execution into the three families $\cP_{small}$, $\cP_{good}$ and $\cP_{large}$, defined in Eq. {\eqref{eq: color classes}}.
The next lemma follows directly from Lemma {\ref{lemma:safe_splitting_general}} with $\Flarge = \sigma$.

\begin{lemma}
\label{lemma:safe_splitting_second_algorithm}
The invocations of Procedure $\Split$ (with frequency threshold $\Flarge=\sigma$) to large colors $i \in\cP_{large}$ (where $\freq(i) \ge \sigma$) 
result in good colors (with frequency in $[\sigma/2, \sigma]$), creating no new small colors.
\end{lemma}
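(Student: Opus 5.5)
The plan is to instantiate Lemma~\ref{lemma:safe_splitting_general} with $\Flarge=\lceil\sigma\rceil$ and $\Fsmall=\lfloor\sigma/2\rfloor$, following the paper's rounding convention for non-integral thresholds. It suffices to check two things. First, every descendant color created by a split lands in $[\Fsmall,\Flarge]$. Second, colors that are not split are left untouched by the procedure.

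For the lower bound, consider any $i\in\cP_{large}$, so $\freq(i)>\Flarge$. By Lemma~\ref{lemma:safe_splitting_general}(b), every descendant $i_{new}$ satisfies
\[
\freq(i_{new})\ge\lceil\Flarge/2\rceil=\lceil\lceil\sigma\rceil/2\rceil\ge\lfloor\sigma/2\rfloor=\Fsmall .
\]
For the upper bound, I would read the construction of Algorithm~\ref{alg:splitting} directly. Write $\freq(i)=k\Flarge+t$. The sets $V_1,\dots,V_{k-1}$ (or $V_1,\dots,V_k$ when $t=0$) have size exactly $\Flarge$. When $t\ge1$, the two remaining sets have sizes $\lfloor(\Flarge+t)/2\rfloor$ and $\lceil(\Flarge+t)/2\rceil$. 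Since $t\le\Flarge-1$, both of these are at most $\lceil(2\Flarge-1)/2\rceil=\Flarge$. Hence every descendant is good, in $[\sigma/2,\sigma]$ under the convention.

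The step I expect to need the most care is the bookkeeping in the last part of the statement, that no \emph{new} small colors are created. Here I would argue as follows. Colors not in $\cP_{large}$ are not modified by $\Split$. Each color in $\cP_{large}$ is removed and replaced only by good descendants. So the set of small colors after the call is exactly the set of small colors before it, and $|\cP_{small}|$ does not increase. I would also remark that the parenthetical "$\freq(i)\ge\sigma$" in the statement should be read as the strict condition $\freq(i)>\Flarge$ used by the procedure, since a color with frequency exactly $\Flarge$ is already good and is not split. A last sanity check is that the hypothesis $\Flarge\ge1$ (indeed $\ge2$ for Algorithm~\ref{alg:splitting}) holds. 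If $\sigma<2$, splitting down to singletons still gives frequencies $\ge1\ge\lfloor\sigma/2\rfloor$, so the conclusion is unaffected.
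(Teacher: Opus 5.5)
Your proposal is correct and is essentially the paper's argument: the paper simply says the lemma follows directly from Lemma~\ref{lemma:safe_splitting_general} with $\Flarge=\sigma$, and you carry out that instantiation explicitly (part (b) for the lower bound on descendants, the set sizes in Algorithm~\ref{alg:splitting} for the upper bound, and the observation that unsplit colors are untouched). Your chain in fact gives $\freq(i_{new})\ge\lceil\sigma\rceil/2\ge\sigma/2$ before any rounding, so the ``no new small colors'' conclusion holds whether or not the threshold $\sigma/2$ is rounded down in the definition of $\cP_{small}$.
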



For each iteration $t$ of 
the loop, let $\cP_{small}^t$ be the set of small colors at the beginning of 
Step \ref{step: alg 3 - classification-start},
and let $\cP_{remove}^t$ be the set 
$\cP \setminus \cP_{target}$ defined in 
Steps \ref{step: alg 3 - classification-start} - \ref{step: alg 3 - classification-end}.

\begin{lemma}
\label{lemma:decay_s}
At least half of the colors in $\cP_{small}^t$ are in $\cP_{remove}^t$ and are eliminated in 
Step \ref{step: alg 3 - recoloring}.
Formally, $|\cP_{small}^t \cap \cP_{remove}^t| \ge \frac{1}{2}|\cP_{small}^t|$.
\end{lemma}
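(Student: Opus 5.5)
The plan is a volume argument by contradiction, using two facts: after the $\Split$ at the start of iteration $t$ every color has frequency at most $\Flarge=\sigma$ (Lemma~\ref{lemma:safe_splitting_second_algorithm}), and the frequencies sum to $n=(\Delta+1)\sigma$.

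Fix iteration $t$ and let $s=|\cP_{small}^t|$. If $s=0$ the claim is trivial, so assume $s\ge 1$. Let $\numclr$ be the current palette size and $g=\numclr-s$ the number of non-small colors. Every non-small color has frequency at least $\sigma/2$, and every small color has frequency strictly less than $\sigma/2$. Hence in the nonincreasing order $i_1,\dots,i_{\numclr}$ used in Steps~\ref{step: alg 3 - classification-start}--\ref{step: alg 3 - classification-end}, all non-small colors precede all small colors (ties across the threshold are impossible). So $\cP_{remove}^t$, which is the suffix $i_{\Delta+1+1},\dots,i_{\numclr}$ (empty if $\numclr\le\Delta+1$), meets the small colors in the last $\min\{s,\max\{0,\numclr-(\Delta+1)\}\}$ positions.

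Suppose for contradiction that $|\cP_{small}^t\cap\cP_{remove}^t|<s/2$. Then more than $s/2$ small colors lie in $\cP_{target}$. By the ordering, this forces all $g$ non-small colors into $\cP_{target}$ as well. Since $|\cP_{target}|\le\Delta+1$, we get $g+s/2<\Delta+1$, that is, $g<\Delta+1-s/2$. This inequality also holds directly when $\numclr\le\Delta+1$, since then $g\le\Delta+1-s$.

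Now sum the frequencies. Each non-small color contributes at most $\sigma$ by Lemma~\ref{lemma:safe_splitting_second_algorithm}. Each small color contributes strictly less than $\sigma/2$, and $s\ge1$, so
\[
n=\tfreq(\cP)< g\sigma+s\cdot\frac{\sigma}{2}<\Bigl(\Delta+1-\frac{s}{2}\Bigr)\sigma+\frac{s\sigma}{2}=(\Delta+1)\sigma=n,
\]
a contradiction. Hence $|\cP_{small}^t\cap\cP_{remove}^t|\ge s/2$. All vertices of colors in $\cP_{remove}^t$ form $V_{target}$ and are recolored into $\cP_{target}$ by Step~\ref{step: alg 3 - recoloring}, so those colors are eliminated.

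The main obstacle I expect is the rounding convention. With integral thresholds, the upper bound after splitting is $\lceil\sigma\rceil$ rather than $\sigma$, and "small" means $<\lfloor\sigma/2\rfloor$. This can destroy the strict inequality when $\sigma$ is not an integer. I would first treat the exact case $\Flarge=\sigma$, $\Fsmall=\sigma/2$ as above, which matches how the statement is phrased. I would then check that the slack is enough: small frequencies are at most $\lfloor\sigma/2\rfloor-1$, which gains at least $1/2$ per small color against the $\lceil\sigma\rceil-\sigma<1$ loss per non-small color. If this accounting does not close in general, I would state the lemma with $s/2$ replaced by a slightly weaker constant fraction, which still gives geometric decay.
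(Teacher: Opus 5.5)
Your core argument is correct and is essentially the paper's proof. The paper sums frequencies to get $g+s/2\ge\Delta+1$. It then uses the fact that good colors precede small ones in the sorted order to conclude that the number of retained small colors is $\Delta+1-g\le s/2$. You run the same two ingredients as a contradiction, and you are slightly more careful than the paper about the case $\numclr\le\Delta+1$ and about ties at the threshold.

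The rounding remedy in your last paragraph, however, does not work. The saving is per small color, while the loss $\lceil\sigma\rceil-\sigma$ is per non-small color. Since $g$ can be close to $\Delta+1$ while $s=1$, the accounting cannot close, and no weaker constant fraction survives either. Concretely, take $\Delta=9$ and let $G$ be $K_{10}$ plus $35$ isolated vertices, so $n=45$ and $\sigma=4.5$. Let $\DRC$ return frequencies $5,5,5,5,5,5,5,5,4,1$ ($K_{10}$ uses each color once; the isolated vertices fill the rest). With the integral thresholds $\Flarge=\lceil\sigma\rceil=5$ and $\Fsmall=\lfloor\sigma/2\rfloor=2$:
\begin{itemize}
\item $\Split$ does nothing, since no frequency exceeds $5$.
\item The color of frequency $1$ is small.
\item $\cP_{target}$ consists of all $\Delta+1=10$ colors, so $\cP_{remove}^t=\emptyset$ in every iteration.
\end{itemize}
Hence $|\cP_{small}^t\cap\cP_{remove}^t|=0$, and $\ILDRC$ never eliminates that color. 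So the lemma genuinely needs the post-split bound $\freq(i)\le\sigma$ itself (e.g., $\sigma$ integral), not a weakened constant. That is exactly the bound the paper's proof uses without addressing rounding.
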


\begin{proof}
Let $q_t = |\cP_{small}^t \cap \cP_{target}^t|$ be the number of small colors retained in the target palette in iteration $t$.
Let $g = |\cP_{good}^t|$ and $s=|\cP_{small}^t|$ be the number of good and small colors after 
the $\Split$ step of iteration $t$ (recall that $\cP_{large}^t$ is empty after 
the split).

Summing the total frequencies, and recalling that $\freq(i) \le \sigma$ for $i \in \cP_{good}^t$ and $\freq(i) < \sigma/2$ for $i \in \cP_{small}^t$, we get
$$
g\cdot\sigma + s\cdot\sigma/2  \ge \tfreq(\cP_{good}^t)+\tfreq(\cP_{small}^t)=n,
$$
hence
\begin{equation}
\label{eq: g+s/2}
g + s/2 \ge n/\sigma = \Delta+1.
\end{equation}
Good colors have no lower frequency than small colors. Therefore, if $q_t>0$, then some small color is in $\cP_{target}^t$, and all good colors must also belong to $\cP_{target}^t$. In this case, $q_t=|\cP_{target}^t|-g=\Delta+1-g$.

If $q_t=0$ then all small colors belong to $\cP_{remove}^t$, and we are done. Otherwise, by Eq. {\eqref{eq: g+s/2}}, $q_t = \Delta+1 - g \le s/2$, so $|\cP_{small}^t\cap\cP_{remove}^t| = s - q_t \ge s/2$. The lemma follows.
\end{proof}
%
%

For the sake of the time analysis of Section~\ref{sec:congest-model} we note the following:

\begin{fact}
\label{lem:ildrc-runtime-interface}
Algorithm $\ILDRC$ makes one initial call to $\DRC$, then performs $O(\lg\Delta)$ iterations, each containing one call to $\Split$, one $\CA$ step for sorting and selecting $\cP_{target}$, and one call to $\Recolor$. It ends with one additional call to $\Split$.
\end{fact}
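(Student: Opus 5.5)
The claim is that $\ILDRC$ makes one call to $\DRC$, then $O(\lg\Delta)$ iterations each with one $\Split$, one $\CA$ step and one $\Recolor$, and then one final $\Split$. I would prove this by reading the operations directly off the code of Algorithm~\ref{alg:iterative_balanced_final}, matching each line to a primitive from Section~\ref{sec:tools}.

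\textbf{Initialization and final step.} The initialization is the single line $\clr \leftarrow \DRC(G,\Delta+1)$, which is one call to $\DRC$ (one $\LC$ invocation, by Lemma~\ref{lem:summary-drc}). Step~\ref{step: alg 3 - final split} is one further call to $\Split$ after the loop.

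\textbf{Body of each iteration.} Within an iteration there are three operations:
\begin{itemize}
\item the call $\Split(\clr,\Flarge)$;
\item Steps~\ref{step: alg 3 - classification-start}--\ref{step: alg 3 - classification-end}, which sort colors by frequency, select the top $\Delta+1$ as $\cP_{target}$, and inform the vertices of $V_{target}$ that they must recolor;
\item Step~\ref{step: alg 3 - recoloring}, which is a single call to $\Recolor$.
\end{itemize}
The middle group (sorting, selecting $\cP_{target}$, and informing vertices) is exactly the functionality the paper lists for $\CA$, so it counts as one $\CA$ step. One caveat is that Lemma~\ref{lemma:partial_extension} already counts a $\CA$ operation for distributing $\cP_{target}$. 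I would note that this is absorbed into the same step, or simply that it costs at most a constant number of $\CA$ steps.

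\textbf{Number of iterations.} The loop bound $O(\lg\Delta)$ is fixed explicitly in the code, so the count is immediate. For the fact to be meaningful, though, this bound should also suffice for correctness, and I would justify that with Lemma~\ref{lemma:decay_s}. After the initial $\DRC$ there are at most $\Delta+1$ small colors. By Lemma~\ref{lemma:safe_splitting_second_algorithm}, $\Split$ creates no new small colors. I would also argue that $\Recolor$ only adds vertices to target colors, so it creates no new small colors either. By Lemma~\ref{lemma:decay_s}, each iteration removes at least half of the current small colors, so after $\lceil\lg(\Delta+1)\rceil+1$ iterations none remain.

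\textbf{Main obstacle.} The delicate point is the one remaining precondition of the primitive cost lemma: every $\CA$ and $\QA$ call must handle only $O(\Delta)$ color records. For $\Split$, a color's frequency can be up to $n$, so splitting it might seem to create many descendants. However, the number of colors after $\Split$ is at most $n/(\sigma/2)=2(\Delta+1)$, plus at most $\Delta+1$ small colors, which is $O(\Delta)$. After $\Recolor$ the palette has exactly $\Delta+1$ colors, so every later iteration also stays within $O(\Delta)$ records.
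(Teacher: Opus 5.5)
Your proposal is correct and follows the paper's reasoning: the Fact is read directly off the code of Algorithm~\ref{alg:iterative_balanced_final}, and the sufficiency of $O(\lg\Delta)$ iterations follows, as in the proof of Theorem~\ref{thm: ILDRC}, from Lemmas~\ref{lemma:safe_splitting_second_algorithm} and~\ref{lemma:decay_s}. Your extra check that every $\CA$/$\QA$ call handles only $O(\Delta)$ color records goes beyond what the Fact states (the paper only asserts this in Section~\ref{sec:congest-model}), but the check is sound.
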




\begin{theorem}
\label{thm: ILDRC}
Procedure $\ILDRC(G)$ (Algorithm~\ref{alg:iterative_balanced_final}) produces 
a valid coloring with $\numclr\in [(\Delta+1), 2(\Delta+1)]$ colors 
and frequency range $[\sigma/2,\sigma]$. Its running times are $\Tseq(\ILDRC)=O(m\lg\Delta)$ and $\Tcc(\ILDRC)=O(\lg\Delta)$.
In the $\CONGEST$ model, the coloring is obtained in time  $\Tcongest(\ILDRC)=O(\lg\Delta\cdot(D+\Delta+\lg^5\lg n))$ with high probability.
\end{theorem}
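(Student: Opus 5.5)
We would prove Theorem~\ref{thm: ILDRC} by tracking two invariants across the iterations of Algorithm~\ref{alg:iterative_balanced_final}. The first is that the coloring stays valid. The second is that the number of small colors decays geometrically while the palette never grows beyond control.

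Validity is immediate. $\DRC$ returns a valid coloring by Lemma~\ref{lem:summary-drc}. Every $\Split$ preserves validity by Lemma~\ref{lemma:safe_splitting_general}(a). Every $\Recolor$ call uses a target palette of exactly $\Delta+1$ colors, so Lemma~\ref{lemma:partial_extension} applies. One point needs care here: there must be at least $\Delta+1$ colors before selection. This holds since, after a split, all colors have frequency at most $\sigma$, so at least $n/\sigma=\Delta+1$ colors exist.

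Next we bound the number of small colors. Let $s_t=|\cP_{small}^t|$. By Lemma~\ref{lemma:safe_splitting_second_algorithm}, a split never creates small colors. Recoloring only adds vertices to target colors, so it cannot make a color small either. Hence the small colors present at iteration $t+1$ are exactly the small colors of $\cP_{target}^t$. Lemma~\ref{lemma:decay_s} then gives $s_{t+1}\le s_t/2$. We have $s_1\le \Delta+1$, because splitting the initial $\Delta+1$ colors creates only good colors and at most $\Delta+1$ original colors can be small. Therefore after $\lceil\lg(\Delta+1)\rceil+1$ iterations we get $s_t=0$.

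Once $s_t=0$, the colors outside the target palette are all good. The algorithm as written still recolors them, so we must check that no small color is ever reintroduced. Removed colors disappear entirely. Target colors only grow. Subsequent splits produce good colors. So after the final $\Split$ of Step~\ref{step: alg 3 - final split}, every color lies in $[\sigma/2,\sigma]$. Lemma~\ref{lemma:safe_splitting_general}(c) confirms this, using $\Fsmall=\lfloor\sigma/2\rfloor\le\lfloor\Flarge/2\rfloor$ with the rounding conventions.

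The palette bound then follows by the volume argument from the proof of Theorem~\ref{thm:summary-nbc}. Since $n\ge \numclr\cdot\sigma/2$, we get $\numclr\le2(\Delta+1)$. Since $n\le\numclr\cdot\sigma$, we get $\numclr\ge\Delta+1$.

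Running times follow from Fact~\ref{lem:ildrc-runtime-interface}. We sum the costs of the $O(\lg\Delta)$ iterations, each consisting of a $\Split$, a $\CA$ and a $\Recolor$. Using Lemmas~\ref{lem:ca-qa-runtime}, \ref{lemma:partial_extension} and~\ref{lemma:safe_splitting_general}(d), this gives $O(m\lg\Delta)$, $O(\lg\Delta)$ and $O(\lg\Delta(D+\Delta+\lg^5\lg n))$ in the three models.

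The main obstacle is the rounding at the boundary of $\sigma/2$. Non-integral $\sigma$ and the floor/ceiling convention must be handled so that Lemma~\ref{lemma:decay_s} and the volume inequality hold as stated. If needed, we would use $\Flarge=\lceil\sigma\rceil$ and $\Fsmall=\lfloor\sigma/2\rfloor$ throughout.
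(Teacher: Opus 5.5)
Your proposal is correct and follows essentially the same route as the paper: the geometric decay $s_{t+1}\le s_t/2$ from Lemma~\ref{lemma:safe_splitting_second_algorithm} and Lemma~\ref{lemma:decay_s} starting at $s_1\le\Delta+1$, then the final $\Split$ to enforce $[\sigma/2,\sigma]$, the volume argument for $\numclr\in[\Delta+1,2(\Delta+1)]$, and Fact~\ref{lem:ildrc-runtime-interface} for the running times. Your extra checks (validity, at least $\Delta+1$ colors before selection, no small color reintroduced once $s_t=0$) are left implicit in the paper, which also glosses over the floor/ceiling issue you flag; one small correction is that the small colors at iteration $t+1$ form a subset of the small colors of $\cP_{target}^t$ rather than exactly that set, and this inclusion is all the inequality $s_{t+1}\le s_t/2$ needs.
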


\begin{proof}
We first prove that the loop runs for $T = O(\lg\Delta)$ iterations. Let $s_t$ be the number of Small colors at iteration $t$.
By Lemma \ref{lemma:safe_splitting_second_algorithm}, 
the split step never creates small colors.
By Lemma \ref{lemma:decay_s}, 
Step \ref{step: alg 3 - recoloring}
eliminates at least half of the existing small colors.
Thus, $s_{t+1} \le s_t / 2$. At the beginning of each iteration, the palette before the split has size $\Delta+1$, and the split step creates no new small colors; hence $s_t\le \Delta+1$ whenever small colors remain. Therefore after $O(\lg\Delta)$ iterations, $\cP_{small} = \emptyset$.
After the final split 
(Step \ref{step: alg 3 - final split}),
all colors satisfy $\freq(i)\in[\sigma/2,\sigma]$.
The number of colors $\numclr$ is bounded by
$\numclr \le
\frac{n}{\Fsmall}
	\le \frac{n}{\sigma/2} = 2(\Delta+1)$,
	%
and $\numclr \ge \frac{n}{\Flarge}\ge \frac{n}{\sigma}=\Delta+1$.
The runtime statement follows by Fact~\ref{lem:ildrc-runtime-interface} (see Section~\ref{sec:congest-model}).
\end{proof}

\section{Near-Equitable Coloring with Palette - Frequency Range Trade-off}
\label{sec:generalized-coloring}
 
In this section, we present a generalization that allows for a trade-off between the total number of colors and the maximum color frequency. By adjusting a parameter $k \ge 1$, we achieve a coloring with at most $(1 + 1/k)(\Delta+1)$ colors, where every color $i$ has frequency $\freq(i) \in [\sigma/2, 2k\sigma]$ in a total $\CC$ run-time of $O(\lg\Delta)$ rounds. 

This approach leverages the balanced state produced by Algorithm~\ref{alg:iterative_balanced_final} ($\ILDRC$) to aggressively reduce the palette size to $\Delta+1$ before performing a final relaxed split.

\subsection{Algorithm Description}

The algorithm, named $\PFTradeoff(G,k)$, first executes Algorithm $\ILDRC$ (Algorithm~ \ref{alg:iterative_balanced_final}) to obtain a high-quality equitable coloring. It then forces a reduction to exactly $\Delta+1$ color classes (potentially creating large classes) and finally splits any class exceeding the relaxed threshold $\lceil 2k\sigma\rceil$.

\begin{algorithm}[H]
\caption{$\PFTradeoff(G, k)$}
\label{alg:tradeoff}
\textbf{Input:} Graph $G=(V,E)$, parameter $k \ge 1$.\\
\textbf{Output:} Valid coloring $\clr$ with frequency range $[\sigma/2, 2k\sigma]$.\\
$\clr \leftarrow$ $\ILDRC(G)$ \Comment{Returns $\le 2(\Delta+1)$ classes, sizes in $[\sigma/2, \sigma]$}
\label{step:alg ILDRC}\\
Sort colors in $\cP$ by nonincreasing order of frequency: $\freq(i_1) \ge \freq(i_2) \ge \dots \ge \freq(i_m)$. \\
$\cP_{target} \leftarrow \{i_1, \dots, i_{\Delta+1}\}$ \Comment{Select the $\Delta+1$ most frequent colors}\\
$V_{target} \leftarrow \bigcup_{i \in \cP \setminus \cP_{target}} V(i)$\\
$\Recolor(V_{target}, \cP_{target})$ \Comment{Dissolve small classes into the top set}\\
$\cP \leftarrow \cP_{target}$ \Comment{Palette size is now exactly $\Delta+1$}\\
$\Flarge \leftarrow\lceil 2k\sigma\rceil$
\Comment{Final Relaxed Split}
\label{step:setting Flarge}
\\
$\clr \leftarrow$ $\Split(\clr, \Flarge)$\\
\label{step:enforce upper threshold}
\Return $\clr$
\end{algorithm}

\subsection{Correctness proof and analysis}

We now verify the bounds for the frequency range and the palette size.

\begin{lemma}
\label{lem:tradeoff-sizes}
For 
$k \ge 1$, the algorithm produces a coloring with
frequency range $[\lfloor\sigma/2\rfloor,\lceil 2k\sigma\rceil]$. 
\end{lemma}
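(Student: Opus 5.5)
The plan is to track the frequencies through the three stages of $\PFTradeoff$ and show that the lower threshold $\lfloor\sigma/2\rfloor$ survives each stage. The upper threshold $\lceil 2k\sigma\rceil$ is then enforced by the final $\Split$.

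\textbf{Stage 1.} By Theorem~\ref{thm: ILDRC}, the call to $\ILDRC$ in Step~\ref{step:alg ILDRC} returns a valid coloring with at most $2(\Delta+1)$ colors, each of frequency in $[\lfloor\sigma/2\rfloor,\lceil\sigma\rceil]$. There are at least $\Delta+1$ colors, since otherwise the total frequency would be at most $\Delta\cdot\sigma<n$ (up to rounding). Hence the top $\Delta+1$ colors $\cP_{target}$ are well defined.

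\textbf{Stage 2.} The recoloring step only removes vertices from colors outside $\cP_{target}$, and those colors disappear from the palette (by the remark on dissolved colors). It only adds vertices to colors in $\cP_{target}$. So every surviving color keeps its original vertices and has frequency at least $\lfloor\sigma/2\rfloor$. Since $|\cP_{target}|=\Delta+1$, Lemma~\ref{lemma:partial_extension} applies, and the resulting coloring is valid and uses exactly the palette $\cP_{target}$. At this point some frequencies may exceed $2k\sigma$, but every frequency is at least $\lfloor\sigma/2\rfloor$.

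\textbf{Stage 3.} We invoke Lemma~\ref{lemma:safe_splitting_general}(c) with $\Fsmall=\lfloor\sigma/2\rfloor$ and $\Flarge=\lceil 2k\sigma\rceil$. Its hypothesis $\Fsmall\le\lfloor\Flarge/2\rfloor$ holds because $\lfloor\lceil 2k\sigma\rceil/2\rfloor\ge\lfloor k\sigma\rfloor\ge\lfloor\sigma/2\rfloor$ for $k\ge1$. In fact, part (b) alone gives new colors of frequency at least $\lceil k\sigma\rceil\ge\sigma/2$, and colors that are not split keep their lower bound. Part (c) then yields that every output color lies in $[\lfloor\sigma/2\rfloor,\lceil 2k\sigma\rceil]$, and part (a) gives validity.

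\textbf{Main obstacle.} The only delicate point is the rounding in the doubling condition. Because $k\ge1$, there is ample slack: $\Flarge/2\ge k\sigma\ge\sigma$, which is far above $\sigma/2$. So the condition holds with room to spare, and no real difficulty is expected beyond writing out the floors and ceilings carefully.
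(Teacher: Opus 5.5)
Your proof is correct and follows essentially the same route as the paper's. In both, the lower bound $\lfloor\sigma/2\rfloor$ from $\ILDRC$ survives the recoloring step because retained colors only gain vertices, and the final $\Split$ enforces $\lceil 2k\sigma\rceil$ while creating descendants of frequency at least $\lceil\Flarge/2\rceil\ge k\sigma\ge\lfloor\sigma/2\rfloor$; you route this through part (c) of Lemma~\ref{lemma:safe_splitting_general} and the paper through part (b), which is the same fact. Your side remark that $\ILDRC$ leaves at least $\Delta+1$ colors is better cited from Theorem~\ref{thm: ILDRC} than derived ``up to rounding'', but it is not needed for the frequency claim anyway.
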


\begin{proof}
Step \ref{step:setting Flarge} of the algorithm sets $\Flarge =\lceil 2k\sigma\rceil$, so by Lemma \ref{lemma:safe_splitting_general}, the upper frequency threshold is enforced explicitly by 
Step \ref{step:enforce upper threshold} of the algorithm.

Turning to the lower bound, we analyze the frequencies before and after the split.
After Step \ref{step:alg ILDRC}
(Algorithm~\ref{alg:iterative_balanced_final} ($\ILDRC$)), all colors satisfy $\freq(i) \ge \lfloor\sigma/2\rfloor$. 
In the recoloring stage,
the colors in $\cP_{target}$ only absorb vertices from $V_{target}$, so their frequencies strictly increase. Thus, before the final split (Step \ref{step:enforce upper threshold}), all colors in $\cP$ satisfy $\freq(i) \ge \lfloor\sigma/2\rfloor$.
In Step \ref{step:enforce upper threshold},
a color $i$ is split only if $\freq(i) > \lceil 2k\sigma\rceil$. By Lemma \ref{lemma:safe_splitting_general}, splitting a color with upper threshold $\Flarge=\lceil 2k\sigma\rceil$ produces new colors with frequency 
\begin{equation}
\label{eq:bound Flarge}
\freq(i_{new}) \ge \left\lceil \frac{\Flarge}{2} \right\rceil
= \left\lceil \frac{\lceil 2k\sigma\rceil}{2} \right\rceil
\ge k\sigma \ge \lfloor\sigma/2\rfloor,
\end{equation}
where the penultimate inequality follows from $\lceil 2k\sigma\rceil \ge 2k\sigma$ and the last one follows since $k \ge 1$. 
Thus, both the unsplit colors and the newly split colors respect the lower frequency threshold $\lfloor\sigma/2\rfloor$.
\end{proof}
\begin{lemma}
\label{lem:tradeoff-palette}
The number of color classes in the final coloring is at most $(1 + 1/k)(\Delta+1)$.
\end{lemma}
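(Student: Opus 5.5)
Before the final split the palette is exactly $\cP_{target}$, with $\Delta+1$ colors; the final palette is therefore $\Delta+1$ plus the number of extra colors created by $\Split(\clr,\Flarge)$ with $\Flarge=\lceil 2k\sigma\rceil$. The plan is to charge each extra color to vertices of a large class, and then use the global volume bound $\sum_i \freq(i)=n=\sigma(\Delta+1)$.

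\textbf{Step 1: extra colors per split class.} Fix a color $i$ with $\freq(i)=q\Flarge+t>\Flarge$, where $0\le t<\Flarge$ as in Algorithm~\ref{alg:splitting}. The number of descendants is $r=q+1$ if $t\ge1$ and $r=q$ if $t=0$. So the number of \emph{extra} colors is $r-1$, i.e.\ $q$ or $q-1$. In both cases
\[
r-1\le \frac{\freq(i)}{\Flarge}\le \frac{\freq(i)}{2k\sigma}.
\]
This holds because $q\Flarge\le\freq(i)$ when $t\ge1$, and $(q-1)\Flarge<\freq(i)$ when $t=0$.

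\textbf{Step 2: sum over split classes.} Summing over the set $\LARG$ of split classes, the number of extra colors is at most
\[
\frac{\tfreq(\LARG)}{2k\sigma}\le\frac{n}{2k\sigma}=\frac{\Delta+1}{2k}.
\]
The final palette size is then at most $(\Delta+1)+(\Delta+1)/(2k)\le(1+1/k)(\Delta+1)$. The slack of a factor $2$ suggests this crude volume bound is enough, so no finer structure of $\ILDRC$ should be needed.

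\textbf{Main obstacle: rounding.} The delicate point is justifying $r-1\le \freq(i)/\Flarge$ in the $t\ge1$ case, where the last two descendant pieces have sizes around $(\Flarge+t)/2$. The count $r-1=q$ still satisfies $q\Flarge\le\freq(i)$, so the bound holds. I also need $\Flarge\ge 2k\sigma$, which is immediate from the ceiling. If $\sigma$ is non-integral, the bound $n=\sigma(\Delta+1)$ is exact by definition, so no further rounding issues arise.
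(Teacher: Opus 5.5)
Your proof is correct and has the same skeleton as the paper's. Both start from the $\Delta+1$ target colors, bound the extra colors produced by each split class in terms of its frequency, and sum using $\sum_i \freq(i)\le n=(\Delta+1)\sigma$. The difference is in the per-class inequality.

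The paper uses only the generic safe-splitting guarantee of Lemma~\ref{lemma:safe_splitting_general}(b): every descendant has size at least $\lceil\Flarge/2\rceil\ge k\sigma$. So a class with $p$ descendants satisfies $\freq(i)\ge pk\sigma$ and contributes $p-1<\freq(i)/(k\sigma)$ extra colors, for fewer than $(\Delta+1)/k$ in total.

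You instead read the exact descendant count off Algorithm~\ref{alg:splitting}, where all but the last two pieces are filled to capacity $\Flarge$. The extras therefore number $q$ or $q-1$, which is at most $\freq(i)/\Flarge\le\freq(i)/(2k\sigma)$.

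The paper's route is more robust. It applies verbatim to any splitting rule with the half-threshold guarantee, such as the repeated-halving description in the proof of Lemma~\ref{lemma:good_color_sizes}. There pieces need not be full, and your inequality can fail: for example, $\freq(i)=2\Flarge+2$ yields four pieces. Your route depends on the specific piece sizes of Algorithm~\ref{alg:splitting}, but gains a factor of two, giving $(1+\frac{1}{2k})(\Delta+1)$. Indeed, with your count the final split could use threshold $\lceil k\sigma\rceil$ and still meet the stated palette bound, while descendants would still have size at least $\lceil\lceil k\sigma\rceil/2\rceil\ge\sigma/2$.
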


\begin{proof}
Prior to the final split,
the algorithm compacts the coloring into exactly $\Delta+1$ colors. The final phase splits only the colors in the set $S=\{i \mid \freq(i) > \lceil 2k\sigma\rceil\}$.
When a color $i\in S$ is split, let $p=p_i$ be the number of final colors descended from $i$. It introduces $p-1$ additional colors. By Eq. \eqref{eq:bound Flarge} (or Lemma \ref{lemma:safe_splitting_general}) and the integrality of $\Flarge=\lceil 2k\sigma\rceil$, each final descendant has frequency at least $k\sigma$. Therefore, $\freq(i) \ge p \cdot k\sigma$, which implies that the number of new colors contributed by this split is $p-1 < p \le \frac{\freq(i)}{k\sigma}$.
Summing over all split colors, the total number of new colors, $\numclr_{new}$, is bounded by the total frequency $n$ divided by the contribution of each split color, i.e.,
$\numclr_{new} < \sum_{i \in S} \frac{\freq(i)}{k\sigma} \le \frac{n}{k\sigma}~.$
Substituting $\sigma = \frac{n}{\Delta+1}$, we get
$\numclr_{new} < 
\frac{\Delta+1}{k}$.
Thus, the total number of colors is bounded by $(\Delta+1) + \frac{\Delta+1}{k} = (1 + 1/k)(\Delta+1)$.
\end{proof}

This theorem establishes a 
trade-off between the palette size and the maximum color frequency. While the iterative balanced coloring achieves a tight frequency balance of $[\sigma/2, \sigma]$ at the cost of using up to $2(\Delta+1)$ colors, the generalized algorithm allows us to push the palette size arbitrarily close to the ideal lower bound of $\Delta+1$ by increasing $k$. Crucially, this reduction in palette size does not compromise the lower threshold on frequencies; all color frequencies remain bounded below by $\sigma/2$. 


For the sake of the time analysis of Section~\ref{sec:congest-model} we note the following:

\begin{fact}
\label{lem:pftradeoff-runtime-interface}
After the initialization by $\ILDRC$, Algorithm $\PFTradeoff$ performs one $\CA$ step to sort colors and choose $\cP_{target}$, one call to $\Recolor$, and one final call to $\Split$.
\end{fact}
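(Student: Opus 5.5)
The statement just reached is Fact~\ref{lem:pftradeoff-runtime-interface}. It is a bookkeeping claim, so I would prove it by reading Algorithm~\ref{alg:tradeoff} line by line and attributing each line to one of the toolbox primitives of Section~\ref{sec:tools}. The goal is to show that, after Step~\ref{step:alg ILDRC}, the only operations requiring computation or communication are one $\CA$ step, one $\Recolor$ call, and one $\Split$ call. Everything else is local bookkeeping.

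\textbf{Sorting, selection and uncoloring.} After the call to $\ILDRC$ returns, the algorithm must compute all frequencies $\freq(i)$ and sort the colors in nonincreasing order of frequency. It must also select $\cP_{target}$ as the top $\Delta+1$ colors and inform every vertex whether its color lies in $\cP_{target}$. Each vertex outside it then knows it belongs to $V_{target}$, and every vertex learns the target palette. All of these tasks are among the listed uses of the $\CA$ primitive (computing frequencies, sorting, identifying $\cP_{remove}=\cP\setminus\cP_{target}$, informing the vertices to be recolored, announcing $\cP_{target}$). So they can be bundled into a single $\CA$ invocation.

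To apply Lemma~\ref{lem:ca-qa-runtime} at its stated cost, I would also check that this invocation handles $O(\Delta)$ color records. This holds because Theorem~\ref{thm: ILDRC} guarantees at most $2(\Delta+1)$ colors at this point. The only subtle point is whether the $\CA$ step that Lemma~\ref{lemma:partial_extension} charges to $\Recolor$ (for letting vertices learn $\cP_{target}$) is the same step or a second one. I would note that it is the same step: the announcement is part of the same accounting step. Even if it were counted separately, it would add only a constant.

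\textbf{$\Recolor$, palette update and $\Split$.} The line $\Recolor(V_{target},\cP_{target})$ is by definition one call to $\Recolor$, whose precondition $|\cP_{target}|=\Delta+1$ matches Lemma~\ref{lemma:partial_extension}. The assignment $\cP\leftarrow\cP_{target}$ needs no communication, since every vertex already knows $\cP_{target}$ and the dissolved colors are implicitly erased by the Remark in Section~\ref{sec:LC + ColorAll+Recolor}. Setting $\Flarge\leftarrow\lceil 2k\sigma\rceil$ is likewise local, because $n$, $\Delta$ and $k$ are assumed known (or $n$ is obtained via the same $\CA$ step). The last line is a single call to $\Split$.

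Collecting these observations gives exactly the claimed list of operations. I expect no real obstacle. The only step deserving care is the one above: arguing that sorting, selection, and informing vertices fit into one $\CA$ invocation rather than several.
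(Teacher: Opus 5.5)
Your line-by-line accounting is correct and matches the paper, which gives no separate proof of this Fact but reads it off Algorithm~\ref{alg:tradeoff}: frequency computation, sorting, selection of $\cP_{target}$ and the announcements are all listed uses of $\CA$, and the $O(\Delta)$ record bound follows from $\numclr\le 2(\Delta+1)$ after $\ILDRC$. Your hedge about the $\CA$ inside $\Recolor$ fits the paper's convention, which charges that announcement to the $\Recolor$ call itself (Lemma~\ref{lemma:partial_extension}), so merging it with the sort/select step or counting it twice changes only constants.
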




The next theorem now follows from Lemmas~\ref{lem:tradeoff-sizes}, \ref{lem:tradeoff-palette} and Fact~\ref{lem:pftradeoff-runtime-interface}. 


\begin{theorem}
\label{thm:tradeoff-main}
For a graph $G$ and an integer parameter $k\ge 1$, $\PFTradeoff$ (Algorithm~\ref{alg:tradeoff}) computes a valid coloring $\clr$ of $G$ with
palette size $\numclr \le \left(1 + \frac{1}{k}\right)(\Delta+1)$ and
frequency range $\left[\frac{\sigma}{2}, 2k\sigma\right]$.
Its running times are $\Tseq(\PFTradeoff)=O(m\lg\Delta)$ and $\Tcc(\PFTradeoff)=O(\lg\Delta)$.
In the $\CONGEST$ model, the coloring is obtained in time  $\Tcongest(\PFTradeoff)=O(\lg\Delta\cdot(D+\Delta+\lg^5\lg n))$ with high probability.
\end{theorem}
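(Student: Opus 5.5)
The proof will assemble the three ingredients already established for $\PFTradeoff$: validity, the frequency range, and the palette bound, followed by the running-time accounting.

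\emph{Validity.} Step~\ref{step:alg ILDRC} returns a valid coloring by Theorem~\ref{thm: ILDRC}. The recoloring step then uncolors $V_{target}$ and extends the valid partial coloring on $V\setminus V_{target}$ using the palette $\cP_{target}$. Before applying Lemma~\ref{lemma:partial_extension} we must check that $|\cP_{target}|=\Delta+1$. By Theorem~\ref{thm: ILDRC} the palette after $\ILDRC$ has $\numclr\ge\Delta+1$ colors, so the top $\Delta+1$ colors exist. Lemma~\ref{lemma:partial_extension} therefore gives a valid coloring using exactly the colors of $\cP_{target}$, and Lemma~\ref{lemma:safe_splitting_general}(a) preserves validity through the final $\Split$.

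\emph{Frequencies and palette.} The frequency range $[\sigma/2,2k\sigma]$, up to the rounding convention, is exactly Lemma~\ref{lem:tradeoff-sizes}. The main point to verify there is that every color of $\cP_{target}$ keeps frequency at least $\lfloor\sigma/2\rfloor$ after absorbing vertices, and that descendants created by the split with threshold $\lceil 2k\sigma\rceil$ have size at least $k\sigma\ge\sigma/2$. The palette bound $\numclr\le(1+1/k)(\Delta+1)$ is Lemma~\ref{lem:tradeoff-palette}. Its counting argument works as follows. Each split color with $p$ descendants contributes $p-1$ new colors, and each descendant has size at least $k\sigma$. Summing $\freq(i)/(k\sigma)$ over the split colors gives at most $n/(k\sigma)=(\Delta+1)/k$ new colors.

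\emph{Running times.} These follow from Fact~\ref{lem:pftradeoff-runtime-interface}: one execution of $\ILDRC$, plus one $\CA$, one $\Recolor$ and one $\Split$. By Theorem~\ref{thm: ILDRC}, the $\ILDRC$ call costs $O(m\lg\Delta)$ sequentially, $O(\lg\Delta)$ in $\CC$, and $O(\lg\Delta\cdot(D+\Delta+\lg^5\lg n))$ in $\CONGEST$. The extra steps cost $O(m)$, $O(1)$ and $O(D+\Delta+\lg^5\lg n)$ respectively, by Lemmas~\ref{lem:ca-qa-runtime}, \ref{lemma:partial_extension} and~\ref{lemma:safe_splitting_general}(d).

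Lemma~\ref{lem:ca-qa-runtime} requires the number of color records to be $O(\Delta)$. This holds because the palette after $\ILDRC$ has at most $2(\Delta+1)$ colors, and the final split yields at most $(1+1/k)(\Delta+1)$ colors. Adding the costs gives the stated bounds, with the $\CONGEST$ bound holding w.h.p. by a union bound over the $O(\lg\Delta)$ randomized $\LC$ calls.

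The main obstacle is not the combinatorics, which the lemmas already cover, but making sure the preconditions hold. These are that $\numclr\ge\Delta+1$ before $\cP_{target}$ is selected, and that the frequency lower bound survives recoloring. The latter holds because colors only gain vertices during recoloring.
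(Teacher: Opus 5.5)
Your proposal is correct and follows the paper's route. The paper obtains Theorem~\ref{thm:tradeoff-main} directly from Lemmas~\ref{lem:tradeoff-sizes} and~\ref{lem:tradeoff-palette} together with Fact~\ref{lem:pftradeoff-runtime-interface}, which is your decomposition. Your explicit validity argument and the check that $|\cP_{target}|=\Delta+1$ before invoking Lemma~\ref{lemma:partial_extension} spell out what the paper leaves implicit. Even if fewer than $\Delta+1$ colors remained after $\ILDRC$, $V_{target}$ would be empty and nothing would break.
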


\subsection{The SMALL Initialization Variant}
\label{subsec:pftradeoff-small}

A second instantiation of the same split \& Recolor block
uses the faster low-threshold initialization $\NBC$ (Algorithm~\ref{alg:balanced_coloring}) from Section~\ref{s: longn time 3Delta palette [sigma/3, 2sigma/3]}.
Define $\PFTradeoffSmall(G,k)$ to be the algorithm obtained from Algorithm~\ref{alg:tradeoff} ($\PFTradeoff$) by replacing line~\ref{step:alg ILDRC} with the call $\clr\gets\NBC(G,1/3)$. 
All subsequent lines are unchanged: the algorithm keeps the $\Delta+1$ most frequent colors, recolors the deleted color classes into this target palette, and finally invokes $\Split$ with threshold $\Flarge=\lceil2k\sigma\rceil$.

For the sake of the time analysis of Section~\ref{sec:congest-model} we note the following:

\begin{fact}
\label{lem:pftradeoff-small-runtime-interface}
Algorithm $\PFTradeoffSmall$ has the same accounting as Fact~\ref{lem:pftradeoff-runtime-interface}, with the initialization $\ILDRC$ replaced by $\NBC$.
\end{fact}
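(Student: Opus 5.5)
The statement is an accounting fact: $\PFTradeoffSmall$ charges the same operations as $\PFTradeoff$, except that the initial call to $\ILDRC$ is replaced by a call to $\NBC$. The plan is to compare Algorithm~\ref{alg:tradeoff} ($\PFTradeoff$) with $\PFTradeoffSmall$ line by line.

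By definition, $\PFTradeoffSmall(G,k)$ is obtained from Algorithm~\ref{alg:tradeoff} by changing only line~\ref{step:alg ILDRC}, from $\ILDRC(G)$ to $\NBC(G,1/3)$. Every later line is syntactically identical in the two algorithms. These lines are: sorting the colors and selecting the $\Delta+1$ most frequent ones; forming $V_{target}$; invoking $\Recolor$; resetting $\cP$; setting $\Flarge=\lceil 2k\sigma\rceil$; and invoking $\Split$. Hence the post-initialization work of $\PFTradeoffSmall$ consists of exactly one $\CA$ step (sort and choose $\cP_{target}$; forming $V_{target}$ and announcing the palette are covered by this same step or by the $\CA$ built into $\Recolor$, as in Lemma~\ref{lemma:partial_extension}), one call to $\Recolor$, and one call to $\Split$. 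This is precisely the list in Fact~\ref{lem:pftradeoff-runtime-interface}.

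Next I check that these primitive calls are legitimate in the new setting, so that the accounting carries over with the same cost bounds. The only point needing care is that $\Recolor$ and the $\CA$/$\QA$ primitives are invoked on $O(\Delta)$ color records, which is what Lemma~\ref{lem:ca-qa-runtime} requires. By Theorem~\ref{thm:summary-nbc} with $\varepsilon=1/3$, $\NBC$ returns at most $3(\Delta+1)=O(\Delta)$ colors. So the sort and selection step handles $O(\Delta)$ records, exactly as after $\ILDRC$, which returns at most $2(\Delta+1)$ colors. The target palette has size $\Delta+1$, as Lemma~\ref{lemma:partial_extension} requires. The final $\Split$ starts from $\Delta+1$ colors.

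One might worry that the final $\Split$ touches more records because the new initialization has a lower threshold. It does not: the argument of Lemma~\ref{lem:tradeoff-palette} bounds the number of output colors by $(1+1/k)(\Delta+1)$ regardless of the initialization, so the record count stays $O(\Delta)$. This record-count check is the main, if mild, obstacle. Apart from it, the fact is a direct syntactic comparison of the two algorithms.
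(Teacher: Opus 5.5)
Your proposal is correct and follows the paper's own implicit justification. $\PFTradeoffSmall$ is defined by replacing only line~\ref{step:alg ILDRC} of $\PFTradeoff$ with $\NBC(G,1/3)$, so the post-initialization accounting (one $\CA$ step, one call to $\Recolor$, one call to $\Split$) carries over verbatim. Your extra check that $\NBC$ leaves at most $3(\Delta+1)=O(\Delta)$ colors, so the primitive costs of Lemma~\ref{lem:ca-qa-runtime} still apply, is a sound supplement; the paper covers this only through its blanket $O(\Delta)$-palette remark in Section~\ref{sec:congest-model}.
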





The proof of the following theorem goes along similar lines to that of Lemmas~\ref{lem:tradeoff-sizes}, \ref{lem:tradeoff-palette} and Theorem \ref{thm:tradeoff-main}, and is therefore omitted.

\begin{theorem}
\label{thm:tradeoff-small-main}
For a graph $G$ and an integer parameter $k\ge1$, $\PFTradeoffSmall$ computes a valid coloring $\clr$ of $G$ with palette size $\numclr\le\left(1+\frac1k\right)(\Delta+1)$ and frequency range $[\sigma/3,2k\sigma]$.
Its running times are $\Tseq(\PFTradeoffSmall)=O(m)$ and $\Tcc(\PFTradeoffSmall)=O(1)$.
In the $\CONGEST$ model, the coloring is obtained in time  $\Tcongest(\PFTradeoffSmall)=O(D+\Delta+\lg^5\lg n)$ with high probability.
\end{theorem}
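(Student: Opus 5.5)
We follow the proof of Theorem~\ref{thm:tradeoff-main}. The only change is the initialization, so we first record what $\NBC(G,1/3)$ gives us. By Theorem~\ref{thm:summary-nbc} with $\varepsilon=1/3$, the initial coloring is valid, has palette size in $[1.5(\Delta+1),3(\Delta+1)]$, and every color has frequency in $[\sigma/3,2\sigma/3]$. In particular there are at least $\Delta+1$ colors, so the target palette of the $\Delta+1$ most frequent colors is well defined. By Lemma~\ref{lemma:partial_extension}, $\Recolor(V_{target},\cP_{target})$ then extends the partial coloring to a valid coloring that uses exactly the $\Delta+1$ colors of $\cP_{target}$. By Lemma~\ref{lemma:safe_splitting_general}(a), the final $\Split$ keeps the coloring valid.

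\textbf{Lower threshold.} We repeat the argument of Lemma~\ref{lem:tradeoff-sizes}. After $\NBC$ every color has frequency at least $\lfloor\sigma/3\rfloor$. During recoloring, colors of $\cP_{target}$ only absorb vertices, so every surviving color still has frequency at least $\lfloor\sigma/3\rfloor$. In the final $\Split$ with $\Flarge=\lceil 2k\sigma\rceil$, Lemma~\ref{lemma:safe_splitting_general}(b) shows that each new color has frequency at least $\lceil\Flarge/2\rceil\ge k\sigma\ge\sigma/3$. Lemma~\ref{lemma:safe_splitting_general}(c), applied with $\Fsmall=\lfloor\sigma/3\rfloor\le\lfloor\Flarge/2\rfloor$, therefore gives the range $[\sigma/3,2k\sigma]$.

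\textbf{Palette bound.} This copies Lemma~\ref{lem:tradeoff-palette} verbatim, since that argument never used the initialization. Before the split there are exactly $\Delta+1$ colors. A split color $i$ with $p_i$ descendants, each of frequency at least $k\sigma$, contributes $p_i-1<\freq(i)/(k\sigma)$ new colors. Summing over split colors, the number of new colors is less than $n/(k\sigma)=(\Delta+1)/k$.

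\textbf{Running time.} By Fact~\ref{lem:pftradeoff-small-runtime-interface}, the algorithm consists of one $\NBC$ call (cost $O(m)$ sequential, $O(1)$ in $\CC$, and $O(D+\Delta+\lg^5\lg n)$ in $\CONGEST$ by Theorem~\ref{thm:summary-nbc}), one $\CA$ step, one $\Recolor$ and one $\Split$. The costs of the last three come from Lemmas~\ref{lem:ca-qa-runtime}, \ref{lemma:partial_extension} and \ref{lemma:safe_splitting_general}(d). Lemma~\ref{lem:ca-qa-runtime} requires $O(\Delta)$ color records, and this holds because the palette never exceeds $3(\Delta+1)$. Adding up a constant number of these terms gives the stated bounds.

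The step that needs the most care is checking that the rounding conventions still give the lower bound $\sigma/3$ once $\sigma/3$ is floored. Beyond that, the main point to confirm is that $\NBC$ supplies at least $\Delta+1$ colors, so that $\cP_{target}$ has full size.
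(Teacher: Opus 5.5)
Your proposal is correct and follows the route the paper indicates. The paper omits this proof, saying only that it goes along the same lines as Lemmas~\ref{lem:tradeoff-sizes}, \ref{lem:tradeoff-palette} and Theorem~\ref{thm:tradeoff-main}; you carry out exactly that argument with the $\NBC(G,1/3)$ guarantees in place of those of $\ILDRC$, including the two checks that substitution needs (that $\NBC$ supplies at least $\Delta+1$ colors, and that $\lceil\lceil 2k\sigma\rceil/2\rceil\ge k\sigma\ge\sigma/3$).
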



\section{Near-equitable coloring with Palette - Time Trade-off}
\label{sec:fast-converging}

In this section, we propose an algorithm that achieves a coloring with palette size at most $(1+1/k)(\Delta+1)$ and frequency range $[\sigma/2, 2\sigma]$.
This algorithm provides a middle ground between the tight palette of the exact $(\Delta+1)$-coloring 
and the speed of the earlier approximations. Specifically, by relaxing the palette constraint slightly by a factor of $1/k$, we ensure the algorithm terminates in $O(k)$ iterations, with total $\CC$ time complexity $O(k+\lg\Delta)$ rounds. 

\subsection{Algorithm Description}

Algorithm $\PTTradeoff$ iteratively compacts the coloring into the largest $\Delta+1$ classes and then handles overflows. It continues this process until the number of color classes falls within the target range.

\begin{algorithm}[H]
\caption{$\PTTradeoff(G, k)$}
\label{alg:fast-converge}
\textbf{Input:} Graph $G=(V,E)$, parameter $k \ge 1$.\\
\textbf{Output:} Valid coloring $\clr$ with $\le (1+1/k)(\Delta+1)$ colors, sizes in $[\sigma/2, 2\sigma]$.
\\
$\clr \leftarrow$ $\ILDRC(G)$ \Comment{Initial coloring: $\le 2(\Delta+1)$ colors}
\label{step:fast-converge-init}\\
$\Flarge \leftarrow 2\sigma$
\\
\While{$\numclr > (1+1/k)(\Delta+1)$}
{
    Sort colors in $\cP$ by nonincreasing order of frequency:
   $\freq(i_1) \ge \freq(i_2) \ge \dots \ge \freq(i_m)$. \\
    $\cP_{target} \leftarrow \{i_1, \dots, i_{\Delta+1}\}$\\
    $V_{target} \leftarrow \bigcup_{i \in \cP \setminus \cP_{target}} V(i)$\\
    $\Recolor(V_{target}, \cP_{target})$\\
$\clr \leftarrow$ $\Split(\clr, \Flarge)$ \\
}  
\Return $\clr$
\end{algorithm}

\subsection{Correctness proof and analysis}

We first verify that the color frequencies remain within the desired range following each execution of Procedure $\Split$ throughout the run. 

\begin{lemma}
\label{lem:freq range P T Tradeoff}
In any execution of $\PTTradeoff$ (Algorithm \ref{alg:fast-converge}),
following each invocation of Procedure $\Split$,
every color $i$ satisfies $\freq(i)\in[\sigma/2,2\sigma]$.
\end{lemma}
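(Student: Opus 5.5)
We argue by induction on the number of invocations of $\Split$ during the execution of $\PTTradeoff$, carrying along the invariant that \emph{at the start of every iteration of the while loop, every color $i$ satisfies $\freq(i)\ge\lfloor\sigma/2\rfloor$}. The upper bound needs no induction. Each invocation of $\Split$ uses $\Flarge=2\sigma$ (rounded up to $\lceil 2\sigma\rceil$ by our convention), and Lemma~\ref{lemma:safe_splitting_general} guarantees that after $\Split$ returns every color has frequency at most $\Flarge$. So the real content of the lemma is the lower bound.

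For the base case, Theorem~\ref{thm: ILDRC} tells us that the coloring returned by $\ILDRC$ in Step~\ref{step:fast-converge-init} has all frequencies in $[\sigma/2,\sigma]$. Hence the invariant holds before the first iteration.

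For the inductive step, suppose the invariant holds at the start of some iteration. The $\Recolor$ step never removes vertices from colors in $\cP_{target}$; it only adds vertices from $V_{target}$ to them. The colors of $\cP\setminus\cP_{target}$ are dissolved entirely and therefore disappear from the palette. Consequently, after $\Recolor$ every surviving color still has frequency at least $\lfloor\sigma/2\rfloor$.

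Next consider the subsequent $\Split(\clr,\Flarge)$. Unsplit colors keep their frequencies, so they still satisfy the lower bound. By Lemma~\ref{lemma:safe_splitting_general}(b), every newly created color has frequency at least
\[
\lceil \Flarge/2\rceil=\lceil\lceil 2\sigma\rceil/2\rceil\ge\sigma\ge\lfloor\sigma/2\rfloor .
\]
We can also cite part (c) directly, with $\Fsmall=\lfloor\sigma/2\rfloor\le\lfloor\Flarge/2\rfloor$. Either way, after $\Split$ every color lies in $[\lfloor\sigma/2\rfloor,\lceil2\sigma\rceil]$. This also re-establishes the invariant for the next iteration, which completes the induction.

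The only delicate point is the rounding convention, namely checking that $\lfloor\sigma/2\rfloor\le\lfloor\lceil2\sigma\rceil/2\rfloor$ so that part (c) applies. This holds because $\lceil 2\sigma\rceil\ge 2\sigma\ge 2\lfloor\sigma/2\rfloor$. Apart from that, the argument is a direct combination of Theorem~\ref{thm: ILDRC}, the monotonicity of $\Recolor$ on the target colors, and Lemma~\ref{lemma:safe_splitting_general}.
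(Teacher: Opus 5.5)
Your proof is correct and follows the same route as the paper's: the $\ILDRC$ output gives the base case, $\Recolor$ only adds vertices to the surviving target colors, $\Split$ with threshold $2\sigma$ creates no descendant below $\sigma$, and $\Split$ itself enforces the upper bound. Your explicit induction and your use of the bound $\lceil \Flarge/2\rceil$ from Lemma~\ref{lemma:safe_splitting_general}(b) are tidier than the paper's wording. The paper cites a descendant bound of $\lfloor \freq(i)/2\rfloor$, which $\Split$ does not literally guarantee, since descendants of size $\Flarge$ can be smaller than half of a very large class; the conclusion $\ge\sigma$ is nonetheless the same.
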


\begin{proof}
The initial coloring ($\ILDRC$, Algorithm~\ref{alg:iterative_balanced_final}) satisfies the bounds $[\sigma/2, \sigma] \subset [\sigma/2, 2\sigma]$.
In 
the recoloring step,
vertices are added to existing colors, so their frequencies only increase; thus, the lower bound $\sigma/2$ is preserved.
In 
the splitting step,
any color with $\freq(i) > 2\sigma$ is split. By Lemma \ref{lemma:safe_splitting_general}, splitting a color with $\freq(i) > 2\sigma$ results in new colors with frequency at least $\lfloor \freq(i)/2 \rfloor \ge \sigma$. Since $\sigma > \sigma/2$, the lower bound is preserved.
Finally, the splitting process explicitly enforces the upper bound $2\sigma$ at the end of every iteration.
\end{proof}

We now analyze the convergence rate, proving that the algorithm terminates quickly.

\begin{lemma}
\label{thm:fast-converge-runtime}
Algorithm $\PTTradeoff$
terminates in $O(k)$ iterations. 
\end{lemma}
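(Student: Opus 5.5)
We bound the number of colors created by each iteration's $\Split$ and show that the excess $\numclr-(\Delta+1)$ after each iteration shrinks by a fixed amount, or else the loop has already stopped. The loop starts from the output of $\ILDRC$, which by Theorem~\ref{thm: ILDRC} has $\numclr\le 2(\Delta+1)$. Each iteration first compacts the palette to exactly $\Delta+1$ colors via $\Recolor$ (Lemma~\ref{lemma:partial_extension}). It then splits every color with frequency exceeding $\Flarge=\lceil 2\sigma\rceil$.

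\textbf{Key step: bound the colors added by the split.} As in the proof of Lemma~\ref{lem:tradeoff-palette}, a split color $i$ with $p_i$ descendants contributes $p_i-1$ new colors. By Lemma~\ref{lemma:safe_splitting_general}(b), each descendant has frequency at least $\lceil\Flarge/2\rceil\ge\sigma$, so $p_i\le \freq(i)/\sigma$. This gives $p_i-1\le \freq(i)/\sigma-1$, which is less than $\freq(i)/(2\sigma)$ because $\freq(i)>2\sigma$. Summing over split colors, the number of new colors after one iteration is less than $\tfreq(S)/(2\sigma)\le (\Delta+1)/2$, where $S$ is the set of split colors. This bound holds for every iteration but is too weak by itself: it only yields $\numclr<1.5(\Delta+1)$ after one iteration, with no further progress.

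\textbf{Main obstacle: getting a per-iteration decrease of order $(\Delta+1)/k$.} I would use a potential argument on the total overflow mass $E_t=\sum_i \max\{0,\freq(i)-2\sigma\}$, or equivalently on the excess of the palette over $\Delta+1$. Fix an iteration whose input has $\numclr>(1+1/k)(\Delta+1)$ colors, all with frequency in $[\sigma/2,2\sigma]$ by Lemma~\ref{lem:freq range P T Tradeoff}, and let $x=\numclr-(\Delta+1)$ be the number of removed colors.
\begin{itemize}
\item The removed colors are the $x$ least frequent. The target colors have total frequency at most $2\sigma(\Delta+1)$ and are at least as frequent as any removed color.
\item The vertices pushed into the target are therefore few: $\tfreq(\text{removed})\le n\cdot x/\numclr$.
\item The new overflow mass is at most $\tfreq(\text{removed})$, so the number of new colors is at most $\tfreq(\text{removed})/\sigma$ in the crude bound. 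I would refine this to roughly $\frac{x}{x+\Delta+1}(\Delta+1)$, i.e., $\frac{x(\Delta+1)}{x+\Delta+1}$.
\end{itemize}
This refined bound gives the recursion
\[
x' \le \frac{x(\Delta+1)}{x+\Delta+1},
\qquad\text{equivalently}\qquad
\frac{1}{x'}\ge \frac{1}{x}+\frac{1}{\Delta+1}.
\]
Starting from $x\le \Delta+1$, after $t$ iterations we get $x\le (\Delta+1)/(t+1)$. Hence after $t=k$ iterations $x\le(\Delta+1)/(k+1)<(\Delta+1)/k$, and the loop condition fails.

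The delicate point is making the refinement rigorous. Splitting a class of size $f\in(2\sigma,4\sigma]$ creates $\lceil f/2\sigma\rceil$-ish pieces, so rounding losses must be absorbed. I would first try to charge each new color to at least $\sigma$ units of absorbed mass beyond the $2\sigma$ cap, using the integrality of $\Flarge$ in Lemma~\ref{lemma:safe_splitting_general} as in Eq.~\eqref{eq:bound Flarge}. If the constant comes out worse, the recursion still has the form $1/x'\ge 1/x+c/(\Delta+1)$ for some constant $c>0$, giving $O(k)$ iterations, which is all the statement requires.
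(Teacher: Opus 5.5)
\textbf{Gap.} You track the excess $x=\numclr-(\Delta+1)$, but this quantity is not monotone across iterations of $\PTTradeoff$. So neither your recursion $x'\le x(\Delta+1)/(x+\Delta+1)$ nor your fallback $1/x'\ge 1/x+c/(\Delta+1)$ can be proved. The step that fails is the claim that the number of new colors is at most $\tfreq(\text{removed})/\sigma$, together with the planned charging of each new color to $\sigma$ units of absorbed mass. A target color sitting at exactly $2\sigma$ that absorbs a single vertex is split by $\Split$ into two colors, so a new color can cost one vertex, not $\sigma$.

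Your argument uses only frequency bookkeeping and the fact that $\Recolor$ returns \emph{some} valid extension, so it cannot rule out the following. Take $N=\Delta+1$ divisible by $12$, $\sigma\ge4$ even, and $k>12$. Let $\ILDRC$ return $N/2$ colors of frequency $\sigma$ and $N$ colors of frequency $\sigma/2$.
\begin{itemize}
\item The first iteration drops $N/2$ colors carrying $N\sigma/4$ vertices. If these raise $N/12$ of the $\sigma$-colors to exactly $4\sigma$, then $\Split$ turns each into two colors of frequency exactly $2\sigma$. This gives $x=N/12>N/k$, so the loop continues.
\item The second iteration drops $N/12$ colors of frequency $\sigma/2$, carrying $N\sigma/24\ge N/6$ vertices. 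Giving at least one of them to each of the $N/6$ colors at $2\sigma$ forces all of these to split. Hence $x'\ge N/6=2x$, whereas your recursion predicts $x'\le N/13$.
\end{itemize}
All frequencies stay in $[\sigma/2,2\sigma]$ throughout, so this is consistent with every hypothesis you invoke.

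Separately, in your key step the inequality $\freq(i)/\sigma-1<\freq(i)/(2\sigma)$ is reversed when $\freq(i)>2\sigma$. The bound $p_i-1\le\freq(i)/(2\sigma)$ is true, but only because $\Split$ cuts pieces of size exactly $\Flarge$, not because descendants have frequency at least $\sigma$.

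\textbf{Missing idea.} Use as potential the number $s_t$ of colors with frequency below $\sigma$; it is monotone even though $x$ is not. Three facts give the decrease:
\begin{itemize}
\item Since $n=(\Delta+1)\sigma$, at most $\Delta+1$ colors have frequency at least $\sigma$. So the $\Delta+1$ retained colors contain all of them, and every one of the more than $(\Delta+1)/k$ dropped colors has frequency below $\sigma$.
\item $\Recolor$ only increases the frequencies of retained colors.
\item $\Split$ at threshold $2\sigma$ creates only pieces of frequency at least $\lceil\Flarge/2\rceil\ge\sigma$.
\end{itemize}
Hence $s_{t+1}\le s_t-(\Delta+1)/k$ while the loop runs, and $s_0\le 2(\Delta+1)$ gives at most $2k$ iterations. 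This is the paper's proof. It controls the excess only indirectly, through $\numclr\le(\Delta+1)+s_t$.
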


\begin{proof}
To prove the iteration bound, we categorize colors based on their frequency relative to $\sigma$. For this analysis only,
define a color $i$ as
\emph{large} if $\freq(i) \ge \sigma$ and
\emph{small} 
otherwise.

The total number of vertices in the graph is $n = (\Delta+1) \sigma$. Since every large color has $\freq(i) \ge \sigma$, the 
number of large colors at any point in the execution is bounded by \begin{equation}
\label{eq:Nlarge}
N^{large} \le n/\sigma = \Delta+1.
\end{equation}


We now track the number of small colors, denoted $N^{small}_t$, at the beginning of iteration $t$. We analyze the two steps of the loop.
In each iteration, the algorithm retains the $\Delta+1$ most frequent colors and dissolves the rest. The loop condition ensures the current total number of colors is $\numclr_t > (1 + 1/k)(\Delta+1)$.
The number of dissolved colors is:
$$ N^{dissolved} = \numclr_t - (\Delta+1) > \frac{1}{k}(\Delta+1).$$
By Eq. \eqref{eq:Nlarge}, the $\Delta+1$ highest frequency colors \emph{must} include all existing large colors. Consequently, \textbf{all} $N^{dissolved}$ colors that are dropped must be small colors.

The algorithm splits colors with $\freq(i) > 2\sigma$. By Lemma \ref{lemma:safe_splitting_general}, splitting such a color
creates new colors with frequency at least $\sigma$. Thus, the splitting process 
never creates a small color.

Combining these observations, the number of small colors strictly decreases in each iteration. Specifically, at least $(\Delta+1)/k$ small colors are removed in 
the recoloring stage, and no small colors are added in the splitting step. Hence,
$s_{t+1} \le s_t - (\Delta+1)/k$. The initial coloring by Algorithm $\ILDRC(G)$ uses at most $2(\Delta+1)$ colors, so $s_0 \le 2(\Delta+1)$.
It follows that
the maximum number of iterations satisfies $\frac{s_0}{(\Delta+1)/k}  
=O(k)$. 
%
\end{proof}

For the sake of the time analysis of Section~\ref{sec:congest-model} we note the following:

\begin{fact}
\label{lem:pttradeoff-runtime-interface}
Algorithm $\PTTradeoff$ first invokes $\ILDRC$. By Lemma~\ref{thm:fast-converge-runtime}, it then performs $O(k)$ loop iterations, each containing one $\CA$ step for sorting and selecting the retained palette, one call to $\Recolor$, and one call to $\Split$.
\end{fact}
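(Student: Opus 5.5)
The plan is to read the accounting directly off the pseudocode of Algorithm~\ref{alg:fast-converge} ($\PTTradeoff$) and to take the number of loop iterations from Lemma~\ref{thm:fast-converge-runtime}. Line~\ref{step:fast-converge-init} is a single call to $\ILDRC(G)$, whose own cost is already covered by Fact~\ref{lem:ildrc-runtime-interface}. Setting $\Flarge\gets 2\sigma$ is a local assignment: every vertex knows $n$ and $\Delta$, or learns them once by a $\CA$ aggregation, so this step needs no further communication.

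Next, I will look at the body of the \textbf{while} loop and assign each line to a primitive. Sorting $\cP$ by frequency, selecting the top $\Delta+1$ colors as $\cP_{target}$, and telling the vertices of the dissolved colors that they form $V_{target}$ together make up one $\CA$ step. This is exactly the use of $\CA$ listed in Section~\ref{sec:CA+QA}: computing frequencies, sorting, identifying $\cP_{remove}$, and informing the affected vertices.

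The loop-guard test $\numclr>(1+1/k)(\Delta+1)$ needs the global count of colors. I will fold this count into the same $\CA$ step, since $\CA$ already counts color records, so it adds no separate primitive. Each iteration then makes one call to $\Recolor(V_{target},\cP_{target})$. This call is legal by Lemma~\ref{lemma:partial_extension}, because $|\cP_{target}|=\Delta+1$. The iteration ends with one call to $\Split(\clr,\Flarge)$.

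Finally, Lemma~\ref{thm:fast-converge-runtime} bounds the number of iterations by $O(k)$. Lemma~\ref{lem:freq range P T Tradeoff} keeps frequencies in $[\sigma/2,2\sigma]$ throughout, so the palette always has size $O(\Delta)$. This matters because it puts every $\CA$ and $\QA$ invocation inside the $O(\Delta)$-record hypothesis of Lemma~\ref{lem:ca-qa-runtime}.

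The only point that needs care is this last one, making sure the palette stays $O(\Delta)$. I will argue it from the lower bound $\freq(i)\ge\sigma/2$, which gives $\numclr\le 2(\Delta+1)$ at all times. The rest of the statement is bookkeeping.
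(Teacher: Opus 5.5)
Your proposal is correct and matches the paper's approach. The Fact is bookkeeping read directly off the pseudocode of Algorithm~\ref{alg:fast-converge}, with the $O(k)$ iteration count cited from Lemma~\ref{thm:fast-converge-runtime}, and your check that the palette stays of size $O(\Delta)$ is what the paper covers by its blanket remark in Section~\ref{sec:congest-model}. One small precision: Lemma~\ref{lem:freq range P T Tradeoff} only guarantees the range $[\sigma/2,2\sigma]$ after each $\Split$, not throughout, but only the lower bound $\sigma/2$ is needed for your palette-size argument, and that bound does hold at all times.
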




\begin{theorem}
\label{thm:summary-fast-converge}
For every $k\ge1$, Algorithm~\ref{alg:fast-converge} ($\PTTradeoff$) returns a valid coloring with at most $(1+1/k)(\Delta+1)$ colors and frequency range $[\sigma/2,2\sigma]$. Its running times are $\Tseq(\PTTradeoff)=O(m(k+\lg\Delta))$ and $\Tcc(\PTTradeoff)=O(k+\lg\Delta)$.
In the $\CONGEST$ model, the coloring is obtained in time  $\Tcongest(\PTTradeoff)=O((\lg\Delta+k)\cdot(D+\Delta+\lg^5\lg n))$ with high probability.
\end{theorem}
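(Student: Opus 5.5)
The argument assembles Lemma~\ref{lem:freq range P T Tradeoff}, Lemma~\ref{thm:fast-converge-runtime} and Fact~\ref{lem:pttradeoff-runtime-interface}, in the same way that Theorem~\ref{thm:tradeoff-main} was obtained from its lemmas.

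\textbf{Validity.} By Theorem~\ref{thm: ILDRC}, the initialization returns a valid coloring. Each loop iteration applies only two operations:
\begin{itemize}
\item $\Recolor$, which preserves validity by Lemma~\ref{lemma:partial_extension}, since $|\cP_{target}|=\Delta+1$;
\item $\Split$, which preserves validity by Lemma~\ref{lemma:safe_splitting_general}(a).
\end{itemize}
An induction over the iterations therefore shows the coloring stays valid throughout. One small point: $\ILDRC$ returns at least $\Delta+1$ colors, so the top $\Delta+1$ colors are well defined whenever the loop is entered.

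\textbf{Palette and frequency guarantees.} The palette bound holds because the while loop exits only when $\numclr\le(1+1/k)(\Delta+1)$, and termination is given by Lemma~\ref{thm:fast-converge-runtime}. For the frequency range there are two cases.
\begin{itemize}
\item If the loop body runs at least once, the final operation is a $\Split$. Lemma~\ref{lem:freq range P T Tradeoff} then places every frequency in $[\sigma/2,2\sigma]$.
\item If the loop is never entered, the output is that of $\ILDRC$. Its frequencies lie in $[\sigma/2,\sigma]\subseteq[\sigma/2,2\sigma]$ by Theorem~\ref{thm: ILDRC}.
\end{itemize}

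\textbf{Running time.} By Fact~\ref{lem:pttradeoff-runtime-interface}, the cost is one $\ILDRC$ call plus $O(k)$ iterations. Each iteration consists of one $\CA$ step, one $\Recolor$ and one $\Split$. The costs of these pieces are as follows.
\begin{itemize}
\item $\ILDRC$ costs $O(m\lg\Delta)$ sequentially, $O(\lg\Delta)$ in $\CC$, and $O(\lg\Delta\cdot(D+\Delta+\lg^5\lg n))$ in $\CONGEST$ (Theorem~\ref{thm: ILDRC}).
\item Each iteration costs $O(m)$ sequentially, $O(1)$ in $\CC$, and $O(D+\Delta+\lg^5\lg n)$ in $\CONGEST$, by Lemmas~\ref{lem:ca-qa-runtime}, \ref{lemma:partial_extension} and \ref{lemma:safe_splitting_general}(d).
\end{itemize}
Summing gives $O(m(k+\lg\Delta))$, $O(k+\lg\Delta)$ and $O((k+\lg\Delta)(D+\Delta+\lg^5\lg n))$ respectively.

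In the $\CONGEST$ bound, the high-probability guarantee needs a union bound over the $O(k+\lg\Delta)$ list-coloring invocations. Taking each invocation to succeed with probability $1-n^{-c}$ for a large enough constant $c$ keeps the overall failure probability polynomially small.

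\textbf{Applicability of Lemma~\ref{lem:ca-qa-runtime}.} I expect this to be the main obstacle. That lemma assumes $O(\Delta)$ color records, so I need the palette to stay $O(\Delta)$ throughout the loop.
\begin{itemize}
\item After each $\Recolor$, the palette has exactly $\Delta+1$ colors.
\item Each subsequent $\Split$ produces colors of frequency at least $\sigma$, so at most $n/\sigma=\Delta+1$ new colors can arise.
\end{itemize}
Hence at most $2(\Delta+1)$ colors are ever present. This matches the bound used for $\ILDRC$, so the per-step costs above apply.
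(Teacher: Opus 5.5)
Your proposal is correct and follows the paper's proof. The palette bound comes from the loop's stopping condition, the frequency range from Lemma~\ref{lem:freq range P T Tradeoff}, and the running times from Fact~\ref{lem:pttradeoff-runtime-interface} combined with the primitive costs of Section~\ref{sec:congest-model}. Your additional details are sound points the paper leaves implicit: the validity induction, the case where the loop is never entered, and the $O(\Delta)$ palette bound needed for Lemma~\ref{lem:ca-qa-runtime}. One refinement for the union bound: the loop in fact runs at most $2(\Delta+1)$ times regardless of $k$, since each iteration dissolves at least one small color. So the number of list-coloring invocations stays polynomial even for very large $k$.
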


\begin{proof}
The bound on the palette size follows from the stopping condition of the loop, and the frequency range bounds follow from Lemma \ref{lem:freq range P T Tradeoff}. 
The runtime statement follows from Fact~\ref{lem:pttradeoff-runtime-interface} and is calculated in Section~\ref{sec:congest-model}.
\end{proof}

\subsection{The SMALL Initialization Variant}
\label{subsec:pttradeoff-small}

A replacement similar to that of Section {\ref{subsec:pftradeoff-small}} gives a faster low-threshold version of the palette-time tradeoff algorithm.
Define $\PTTradeoffSmall(G,k)$ to be the algorithm obtained from Algorithm~\ref{alg:fast-converge} ($\PTTradeoff$) by replacing line~\ref{step:fast-converge-init} with the call $\clr\gets\NBC(G,1/3)$. 
The stopping threshold $(1+1/k)(\Delta+1)$, 
the repeated recoloring to the $\Delta+1$ largest colors, and the split threshold $\Flarge=2\sigma$ are left unchanged.

For the sake of the time analysis of Section~\ref{sec:congest-model} we note the following:

\begin{fact}
\label{lem:pttradeoff-small-runtime-interface}
Algorithm $\PTTradeoffSmall$ first invokes $\NBC$. Since the $\NBC(G,1/3)$ initialization starts with at most $3(\Delta+1)$ colors, and every loop iteration dissolves at least $(\Delta+1)/k$ colors before the split, the algorithm performs $O(k)$ loop iterations. Each iteration contains one $\CA$ step, one call to $\Recolor$, and one call to $\Split$.
\end{fact}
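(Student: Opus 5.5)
The plan is to mirror the proof of Lemma~\ref{thm:fast-converge-runtime}, adjusting only the starting count. First, fix the frequency invariant. By Theorem~\ref{thm:summary-nbc} with $\varepsilon=1/3$, the coloring returned by $\NBC(G,1/3)$ has frequencies in $[\sigma/3,2\sigma/3]$. Its palette size is at most $(\Delta+1)/\varepsilon=3(\Delta+1)$. The recoloring step only adds vertices to retained colors, so no frequency decreases and the lower bound $\sigma/3$ is preserved. By Lemma~\ref{lemma:safe_splitting_general}(b) with $\Flarge=2\sigma$, each descendant color created by $\Split$ has frequency at least $\lceil\Flarge/2\rceil\ge\sigma$. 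Hence, after every $\Split$, all frequencies lie in $[\sigma/3,2\sigma]$.

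Next, count iterations. Call a color \emph{large} if $\freq(i)\ge\sigma$ and \emph{small} otherwise. Since $n=(\Delta+1)\sigma$, there are at most $\Delta+1$ large colors at any time. So the $\Delta+1$ most frequent colors kept in $\cP_{target}$ include every large color, and every dissolved color is small. While the loop condition $\numclr>(1+1/k)(\Delta+1)$ holds, more than $(\Delta+1)/k$ colors are dissolved in that iteration.

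The split step creates only colors of frequency at least $\sigma$, so it never adds a small color. Recoloring does not add small colors either: retained colors only grow, and dissolved colors disappear. Thus the number of small colors drops by more than $(\Delta+1)/k$ per iteration. It starts at $s_0\le 3(\Delta+1)$, so there are fewer than $3k+1=O(k)$ iterations before the loop condition fails. The loop must fail by then, since once no small colors remain the palette has at most $\Delta+1$ colors.

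Finally, the accounting. The $\NBC(G,1/3)$ call costs one $\DRC$, two $\Split$ calls, one $\CA$ step and one $\Recolor$, by Fact~\ref{lem:nbc-runtime-interface}. Each of the $O(k)$ iterations costs one $\CA$ step, one $\Recolor$ and one $\Split$. The only subtle point is that the loop termination test $\numclr>(1+1/k)(\Delta+1)$ also needs a global count of the colors. I would fold this into the $\CA$ step of the iteration, so it adds no extra primitive.
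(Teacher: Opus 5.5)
Your proposal is correct and follows the paper's route. It is the small-color potential argument of Lemma~\ref{thm:fast-converge-runtime}: at most $\Delta+1$ colors have frequency at least $\sigma$, so every dissolved color is small, and $\Split$ with threshold $2\sigma$ creates no small colors. The starting bound $s_0\le 3(\Delta+1)$ comes from $\NBC(G,1/3)$, and your explicit check that splitting adds no small colors supplies the step the Fact's one-line justification leaves implicit, since the raw palette size can grow during $\Split$.
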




The proof of the following theorem goes along similar lines to that of Lemmas \ref{lem:freq range P T Tradeoff}, \ref{thm:fast-converge-runtime} and Theorem \ref{thm:summary-fast-converge}, and is therefore omitted.

\begin{theorem}
\label{thm:summary-fast-converge-small}
For every $k\ge1$, $\PTTradeoffSmall$ returns a valid coloring with at most $(1+1/k)(\Delta+1)$ colors and frequency range $[\sigma/3,2\sigma]$.
Its running times are $\Tseq(\PTTradeoffSmall)=O(mk)$ and $\Tcc(\PTTradeoffSmall)=O(k)$.
In the $\CONGEST$ model, the coloring is obtained in time  $\Tcongest(\PTTradeoffSmall)=O(k\cdot(D+\Delta+\lg^5\lg n))$ with high probability.
\end{theorem}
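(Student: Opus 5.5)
We mirror the proof of Theorem~\ref{thm:summary-fast-converge}, replacing the $\ILDRC$ initialization by $\NBC(G,1/3)$. There are three things to check: the frequency invariant, the iteration count, and validity with running time.

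\textbf{Frequency invariant.} By Theorem~\ref{thm:summary-nbc} with $\varepsilon=1/3$, the initial coloring is valid. Every frequency lies in $[\sigma/3,2\sigma/3]\subset[\sigma/3,2\sigma]$, and there are at most $3(\Delta+1)$ colors. We then prove the analogue of Lemma~\ref{lem:freq range P T Tradeoff} with lower threshold $\sigma/3$, by induction over the loop iterations. In the recoloring step, the target colors only absorb vertices, so their frequencies never decrease. The dissolved colors vanish, so they impose no constraint. In the splitting step, Lemma~\ref{lemma:safe_splitting_general}(b) with $\Flarge=\lceil 2\sigma\rceil$ gives every descendant color frequency at least $\lceil\Flarge/2\rceil\ge\sigma\ge\sigma/3$. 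Lemma~\ref{lemma:safe_splitting_general}(c) then yields the range $[\sigma/3,2\sigma]$ after each $\Split$. If the loop never runs, the $\NBC$ coloring already has the range, but it may have more than $(1+1/k)(\Delta+1)$ colors; in that case the loop condition is exactly what forces iterations.

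\textbf{Iteration count.} As in Lemma~\ref{thm:fast-converge-runtime}, call a color \emph{large} if $\freq(i)\ge\sigma$ and \emph{small} otherwise. There are at most $n/\sigma=\Delta+1$ large colors, so the top $\Delta+1$ colors include all of them. While $\numclr>(1+1/k)(\Delta+1)$, each iteration dissolves more than $(\Delta+1)/k$ colors, and all of them are small. Splitting only creates colors of frequency at least $\sigma$, so it adds no small colors. The number of small colors therefore drops by more than $(\Delta+1)/k$ per iteration. It starts at most $3(\Delta+1)$, so there are fewer than $3k+1=O(k)$ iterations. On exit, $\numclr\le(1+1/k)(\Delta+1)$ holds by the loop condition.

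\textbf{Validity and running time.} The target palette has exactly $\Delta+1$ colors, which is possible whenever the loop runs since then $\numclr>\Delta+1$. Lemma~\ref{lemma:partial_extension} therefore applies to each $\Recolor$ call, and $\Split$ preserves validity by Lemma~\ref{lemma:safe_splitting_general}(a). For the running time, Fact~\ref{lem:pttradeoff-small-runtime-interface} gives one $\NBC$ call plus $O(k)$ iterations, each consisting of one $\CA$ step, one $\Recolor$ and one $\Split$. Combining this with Theorem~\ref{thm:summary-nbc}, Lemmas~\ref{lem:ca-qa-runtime} and~\ref{lemma:partial_extension}, and Lemma~\ref{lemma:safe_splitting_general}(d) yields $O(m)+O(k)\cdot O(m)=O(mk)$ sequential time and $O(k)$ $\CC$ time. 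In $\CONGEST$ it gives $O(k(D+\Delta+\lg^5\lg n))$ rounds. The high-probability guarantee follows from a union bound over the $O(k)$ randomized $\LC$ calls, adjusting constants in the exponent.

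\textbf{Main obstacle.} The only genuinely delicate point is confirming that splitting never reintroduces small colors. The $\sigma/3$ lower bound is only needed at initialization, so I expect no real difficulty. I also expect the rounding in $\lceil 2\sigma\rceil$ to be harmless.
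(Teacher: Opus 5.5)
Your proposal is correct and matches the paper's intended argument. The paper omits this proof as following the lines of Lemmas~\ref{lem:freq range P T Tradeoff} and~\ref{thm:fast-converge-runtime} and Theorem~\ref{thm:summary-fast-converge} with $\ILDRC$ replaced by $\NBC(G,1/3)$, and Fact~\ref{lem:pttradeoff-small-runtime-interface} records exactly your iteration count from the initial $3(\Delta+1)$ colors. For the $\CONGEST$ union bound it is slightly cleaner to note that each iteration dissolves at least one small color, so there are at most $\min\{3k,3(\Delta+1)\}$ iterations, which is polynomial in $n$ even when $k$ is not.
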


\section{Capped palette reduction}
\label{sec:hard-capped-open-reduction}
\label{sec:improved_coloring}

This section presents Algorithm $\OpenRecolor$, which refines our previous approach to the palette-reduction step. Starting from the basic $\ILDRC$
coloring of Section \ref{s: ILDRC}, it reduces the palette from at most $2(\Delta+1)$ colors to
$\beta(\Delta+1)$, where $1<\beta<2$, while preserving frequencies throughout the process in the range $[\sigma/2, \Flarge]$ for the integer cap $\Flarge = \lceil\alpha\sigma\rceil$.

%
%

\subsection{Overview and goal}


Let us start with describing the main idea for the improved solution. Algorithm $\ILDRC$
(Section \ref{s: ILDRC})
was based on alternating between \emph{recoloring phases}, which reduced the number of colors at the cost of increasing the frequencies of the remaining colors, and \emph{splitting phases}, which increased the number of colors. The idea at the basis of the new approach is to prevent the need for splitting, thus reducing the number of phases. To achieve that, we no longer treat all colors with frequency $\le \Flarge$ as valid targets in the recoloring stage. Rather, we forbid the use of a color $c$ if its frequency $\freq(c)$ is so close to $\Flarge$ that there's a risk that it might exceed $\Flarge$ if $c$ is used in the recoloring stage. 

Concretely, we
fix a threshold $\Fcap=\varphi \cdot \Flarge$ for a suitably chosen constant $\frac12<\varphi<1$, such that colors whose frequency falls in the range $[\Fcap,\Flarge]$ are considered \emph{closed} and cannot be used for recoloring.
An uncolored vertex $v$ cannot choose a color 
already used by a colored neighbor, i.e., a color from 
\begin{equation*}
\cP_{neig}(v) = \{\clr(u) \mid u\in N(v)\setminus V_{target}\},
\end{equation*}
so its \emph{nonconflicting palette} is 
\begin{equation}
\label{eq: def Ptarget}
\cP_{target}(v) = \cP_{target} \setminus \cP_{neig}(v)~.
\end{equation} 
Moreover, it must obey the additional capping restriction, namely, it is only allowed to select an \emph{open} color $c\in\cP_{target}(v)$, namely, such that $\freq(c)<\Fcap$.

This condition by itself does not guarantee that the resulting coloring will obey the upper frequency threshold $\Flarge$, as it might happen that many pairwise non-adjacent uncolored vertices select the same open color $c$, causing $\freq(c)$ to exceed $\Flarge$.

To overcome this difficulty, the algorithm imposes the threshold $\Flarge$ \emph{deterministically}; it counts the number of vertices that selected the color $c$, and allows only $\Flarge-\freq(c)$ uncolored vertices to use it. The remaining vertices remain uncolored and will try again in subsequent recoloring iterations.

A natural concern is that this might run the risk of leaving some vertices uncolored forever. As shown in our analysis, a careful choice of the threshold $\Fcap$ ensures that this risk never materializes, and every uncolored vertex gets recolored eventually.

\subsection{Parameters and algorithm}

Let $\beta=1+\ell/(\Delta+1)$, where $1\le\ell\le\Delta$, and use the
integer cap $\Flarge=\lceil\alpha\sigma\rceil$.
The target palette has size
$\numclr_{target}=\beta(\Delta+1)=\Delta+1+\ell$ colors.

Algorithm $\OpenRecolor$ deletes the least frequent colors and recolors their vertices into the retained target palette.  For an \emph{open-capacity} parameter $\varphi\in(0,1)$, set
$\Fcap=\lfloor\varphi\alpha\sigma\rfloor$ and use the capacity gap
$\Flarge-\Fcap$. Vertices propose only open colors in $\cP_{target}(v)$. Once candidates for color $c$ are listed, the algorithm \emph{accepts} at most
$\Flarge-\freq(c)$ of the candidates (namely, allows them to use the color) and rejects the others.

The parameters used in this
section are required to satisfy the following conditions:
\begin{eqnarray}
& \beta=1+\frac{\ell}{\Delta+1}
\quad\mbox{for an integer }1\le\ell\le\Delta,
\quad\mbox{hence }1<\beta<2,
\label{eq:hc-beta}
\\
& \alpha\ge1,
\label{eq:hc-alpha}
\\
& \frac12<\varphi<1,
\label{eq:hc-phi}
\\
& 2\varphi\alpha\cdot\frac{\ell}{\Delta+1}>1 .
\label{eq:hc-open-slack}
\end{eqnarray}
Since $\Flarge$ is a positive integer,
$\varphi<1$ and $\Fcap$ is rounded down, 
$\Flarge-\Fcap \ge 1$.

The activation probability $p_0$ is chosen from the same parameters and defined in
Algorithm~\ref{alg:open-recolor} ($\OpenRecolor$).

Algorithm~\ref{alg:open-recolor} ($\OpenRecolor$) gives the formal code. 

\begin{center}
\refstepcounter{algocf}
\label{alg:open-recolor}
\label{alg:hard-cap-wrapper}
\label{alg:open-cap-recolor}
\begingroup
\footnotesize
\newcount\OpenRecolorLineNo
\OpenRecolorLineNo=0
\newcommand{\AlgoLine}[2]{%
  \global\advance\OpenRecolorLineNo by 1%
  \par\noindent
  \makebox[1.8em][r]{\scriptsize\the\OpenRecolorLineNo}\hspace{0.45em}%
  \parbox[t]{\dimexpr\linewidth-2.25em\relax}{\hangindent=#1\hangafter=1\noindent\hspace*{#1}#2}%
  \par}
\begin{minipage}{\linewidth}
\hrule
\vspace{2pt}
\noindent{\bfseries Algorithm~\thealgocf:} $\OpenRecolor(G,\alpha,\beta,\varphi)$
\vspace{2pt}
\hrule
\vspace{2pt}
\begin{tabular}{@{}p{0.462\linewidth}@{\hspace{0.03\linewidth}}p{0.462\linewidth}@{}}
\begin{minipage}[t]{\linewidth}
\raggedright
\AlgoLine{0pt}{\textbf{Input:} graph $G=(V,E)$, \\ parameters $\alpha,\beta,\varphi$ 
where $\beta=1+\ell/(\Delta+1)$.}
\AlgoLine{0pt}{\textbf{Output:} a valid coloring with at most $\numclr_{target}=\Delta+1+\ell$ colors, and frequencies $\in [\sigma/2, \lceil\alpha\sigma\rceil]$.}
\AlgoLine{0pt}{\textbf{Setup and palette reduction.}}
\AlgoLine{0pt}{$\clr\gets\ILDRC(G)$ \quad{\scriptsize (frequencies in $[\sigma/2,\sigma]$)}}
\AlgoLine{0pt}{$\Flarge\gets\lceil\alpha\sigma\rceil$, \quad $\Fcap\gets\lfloor\varphi\alpha\sigma\rfloor$} 
\AlgoLine{0pt}{$\delta\gets\beta-1-\frac{2-\beta}{2\varphi\alpha-1}$}
\AlgoLine{0pt}{$p_0\gets\frac12\min\{1,\delta,\delta(1-\varphi)\alpha/(1-\beta/2)\}$}
\AlgoLine{0pt}{$\numclr_{target}\gets\beta(\Delta+1)=\Delta+1+\ell$}
\AlgoLine{0pt}{\textbf{if} $\numclr\le\numclr_{target}$ \textbf{then}}
\AlgoLine{1.2em}{\textbf{return} $\clr$}
\AlgoLine{0pt}{Sort colors so that $\freq(i_1)\le\cdots\le\freq(i_{\numclr})$}
\AlgoLine{0pt}{$\numclr_{remove}\gets\numclr-\numclr_{target}$}
\AlgoLine{0pt}{$\cP_{remove}\gets\{i_1,\ldots,i_{\numclr_{remove}}\}$}
\AlgoLine{0pt}{$V_{target}\gets\bigcup_{i\in\cP_{remove}}V(i)$}
\AlgoLine{0pt}{$\cP_{target}\gets\cP\setminus\cP_{remove}$}
\AlgoLine{0pt}{Unset the colors of all vertices in $V_{target}$}
\end{minipage}
&
\begin{minipage}[t]{\linewidth}
\raggedright
\AlgoLine{0pt}{\textbf{Open recoloring phase.}}
\AlgoLine{0pt}{\textbf{while} $V_{target}\neq\emptyset$ \textbf{do}}
\AlgoLine{1.2em}{$\cP_{open}\gets\{c\in\cP_{target}\mid \freq(c)\le \Fcap\}$}
\AlgoLine{1.2em}{\textbf{for each} $v\in V_{target}$ in parallel \textbf{do}}
\AlgoLine{2.4em}{$\cP_{neig}(v)\gets\{\clr(u)\mid u\in N(v)\setminus V_{target}\}$}
\AlgoLine{2.4em}{$\cP_{target}(v)\gets \cP_{target}\setminus\cP_{neig}(v)$}
\AlgoLine{2.4em}{$\cP_{open}(v)\gets\cP_{target}(v)\cap\cP_{open}$}
\AlgoLine{2.4em}{Become active with probability $p_0$}
\AlgoLine{2.4em}{\textbf{if} $v$ is active and $\cP_{open}(v)\neq\emptyset$ \textbf{then}}
\AlgoLine{3.6em}{Pick $c_v\in\cP_{open}(v)$ uniformly at random and broadcast $\textsf{Propose}(c_v)$}
\AlgoLine{3.6em}{If no neighbor in $V_{target}$ proposes $c_v$, then $v$ becomes a candidate for color $c_v$}
\AlgoLine{1.2em}{\textbf{for each color} $c\in\cP_{target}$ in parallel \textbf{do}}
\AlgoLine{2.4em}{Let $C_c$ be the candidates for $c$}
\AlgoLine{2.4em}{
Draw a fresh random ranking for the vertices of $C_c$.}
\AlgoLine{2.4em}{$C_c^{add}\gets$ the set of $\min\{|C_c|,\Flarge-\freq(c)\}$ highest-ranked candidates}
\AlgoLine{2.4em}{For every $v\in C_c^{add}$, set $\clr(v)\leftarrow c$}
\AlgoLine{1.2em}{$V_{target}\gets V_{target}\setminus\bigcup_{c\in\cP_{target}} C_c^{add}$}
\AlgoLine{1.2em}{Update all frequencies}
\AlgoLine{0pt}{\textbf{return} $\clr$}
\end{minipage}
\end{tabular}
\vspace{2pt}
\hrule
\end{minipage}
\endgroup
\end{center}


\subsection{Correctness proof and analysis}

\begin{observation}
\label{obs:p_0}
For $\delta$ and $p_0$ as defined in Algorithm~\ref{alg:open-recolor} ($\OpenRecolor$),
the quantities $\Flarge,\Fcap,\delta,p_0$ are well-defined,
$\Flarge-\Fcap\ge1$, $\delta>0$, and $0<p_0\le1/2$.
\end{observation}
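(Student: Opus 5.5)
The plan is to verify the four claims in turn: well-definedness of $\Flarge$ and $\Fcap$, the gap $\Flarge-\Fcap\ge1$, positivity of $\delta$, and the range of $p_0$. Each follows from the parameter constraints \eqref{eq:hc-beta}--\eqref{eq:hc-open-slack}.

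\textbf{Well-definedness and the gap.} Since $\alpha\ge1$ and $\sigma>0$, $\Flarge=\lceil\alpha\sigma\rceil$ is a positive integer. Since $\varphi>0$, $\Fcap=\lfloor\varphi\alpha\sigma\rfloor$ is a nonnegative integer. For the gap I will use $\varphi<1$:
\[
\Fcap\le\varphi\alpha\sigma<\alpha\sigma\le\Flarge .
\]
Both endpoints are integers and the inequality is strict, so $\Flarge-\Fcap\ge1$. The definition of $\delta$ has denominator $2\varphi\alpha-1$, which is positive because $\varphi>1/2$ and $\alpha\ge1$ give $2\varphi\alpha>1$. The denominator $1-\beta/2$ in $p_0$ is positive because $\beta<2$, by \eqref{eq:hc-beta}.

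\textbf{Positivity of $\delta$ (the only real step).} Write $x=\ell/(\Delta+1)$, so $\beta-1=x$ and $2-\beta=1-x$. Then $\delta>0$ is equivalent to
\[
x(2\varphi\alpha-1)>1-x,
\]
after multiplying by the positive quantity $2\varphi\alpha-1$. This rearranges to $2\varphi\alpha\,x>1$, which is exactly condition \eqref{eq:hc-open-slack}. So $\delta>0$. This is the step I expect to need care: the algebra has to be arranged so that the slack condition appears verbatim.

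\textbf{Range of $p_0$.} Each of the three terms in the minimum is positive: $1>0$, $\delta>0$, and $\delta(1-\varphi)\alpha/(1-\beta/2)>0$, since $\varphi<1$, $\alpha>0$ and $1-\beta/2>0$. Hence $p_0>0$. Because the minimum is at most $1$, we also get $p_0\le\frac12$.
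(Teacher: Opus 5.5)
Your proof is correct and follows essentially the same route as the paper. The two denominators are positive because $\varphi>1/2$, $\alpha\ge1$ and $\beta<2$. The condition $\delta>0$ reduces to the slack condition $2\varphi\alpha(\beta-1)>1$; the paper does this by writing $\delta$ over the common denominator $2\varphi\alpha-1$. The gap follows from integrality and $\Fcap\le\varphi\alpha\sigma<\alpha\sigma\le\Flarge$. Your explicit check that each term of the minimum is positive and at most $1$ spells out the $0<p_0\le1/2$ claim, which the paper leaves to a one-line conclusion.
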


\begin{proof}
By Conditions~\eqref{eq:hc-alpha} and~\eqref{eq:hc-phi}, the denominator
$2\varphi\alpha-1$ in the definition of $\delta$ is positive.  Moreover,
since $\beta-1=\ell/(\Delta+1)$,
\begin{equation}
\label{eq: delta>0}
\delta ~=~ \beta-1-\frac{2-\beta}{2\varphi\alpha-1} ~=~ \frac{2\varphi\alpha(\beta-1)-1}{2\varphi\alpha-1} ~>~ 0,
\end{equation}
where the final inequality follows since Condition~\eqref{eq:hc-open-slack} gives
$2\varphi\alpha(\beta-1)>1$. 
The denominator in the third term in the definition of $p_0$ in Algorithm~\ref{alg:open-recolor} ($\OpenRecolor$) is positive because $\beta<2$ by Eq.~\eqref{eq:hc-beta}.  Since $\Flarge$ is a positive integer and $\varphi<1$, we have
$\Fcap=\lfloor\varphi\alpha\sigma\rfloor\le\Flarge-1$,
and hence $\Flarge-\Fcap \ge 1$.  Consequently,
Conditions~\eqref{eq:hc-beta}--\eqref{eq:hc-open-slack} yield the claims.
\end{proof}

\begin{lemma}
\label{lem:open-cap-safety}
At every point in Algorithm~\ref{alg:open-recolor} ($\OpenRecolor$), the partial coloring is
valid and every target color $c\in\cP_{target}$ satisfies
$\freq(c)\le\Flarge$.
Consequently, if the procedure terminates, it returns a valid coloring whose
target-color frequencies are at most $\Flarge$.
\end{lemma}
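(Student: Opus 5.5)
The plan is an invariant argument by induction over the steps of Algorithm~\ref{alg:open-recolor} ($\OpenRecolor$). The invariant is: (i) the current partial coloring, restricted to the colored vertices $V\setminus V_{target}$, is valid; and (ii) every $c\in\cP_{target}$ satisfies $\freq(c)\le\Flarge$. The induction covers the setup phase and then each iteration of the open recoloring loop.

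For the base case, I will use Theorem~\ref{thm: ILDRC}. The coloring returned by $\ILDRC(G)$ is valid and has all frequencies at most $\sigma\le\lceil\alpha\sigma\rceil=\Flarge$, since $\alpha\ge1$ by Condition~\eqref{eq:hc-alpha}. If the algorithm returns immediately, we are done. Otherwise it uncolors the vertices of $V_{target}$. Uncoloring keeps validity on the remaining vertices, and it does not change the frequencies of colors in $\cP_{target}=\cP\setminus\cP_{remove}$. So the invariant holds when the loop starts.

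For the inductive step, fix one loop iteration and consider the vertices newly colored, namely $\bigcup_c C_c^{add}$.

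\textbf{Frequency bound.} For each $c\in\cP_{target}$, exactly $\min\{|C_c|,\Flarge-\freq(c)\}$ vertices are added. By the hypothesis $\freq(c)\le\Flarge$, this number is nonnegative, and the new frequency is at most $\freq(c)+(\Flarge-\freq(c))=\Flarge$. Each vertex is a candidate for at most one color, so frequencies are updated without double counting.

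\textbf{Validity.} Take an edge $\{u,v\}$ where $v$ receives color $c$ in this iteration. There are two cases.
\begin{itemize}
\item If $u$ was already colored at the start of the iteration, then $c\notin\cP_{neig}(v)$, because $c_v$ was chosen from $\cP_{open}(v)\subseteq\cP_{target}(v)=\cP_{target}\setminus\cP_{neig}(v)$. Hence $\clr(u)\ne c$.
\item If $u$ also receives a color in this iteration, it does so as a candidate. So $u$ proposed $c_u$ and no neighbor of $u$ in $V_{target}$ proposed $c_u$. If $c_u=c_v$, then $v$, a neighbor of $u$ in $V_{target}$, proposed $c_u$, which is a contradiction.
\end{itemize}
Vertices still in $V_{target}$ remain uncolored, so they create no conflicts. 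Thus validity is preserved.

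The only subtle step is this same-round conflict case. It needs the observation that the neighborhood sets $\cP_{neig}(v)$ are computed with respect to colors fixed at the start of the iteration. Then every newly colored pair is covered either by the proposal-exclusion rule or by the neighbor-palette rule. Everything else is bookkeeping.

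The "consequently" clause follows directly. The loop terminates only when $V_{target}=\emptyset$, so at that point every vertex is colored. By the invariant the coloring is valid, and every target color has frequency at most $\Flarge$.
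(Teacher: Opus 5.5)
Your proof is correct and follows essentially the same route as the paper. It is an inductive invariant in which validity is preserved for two reasons: proposals avoid the colors of already-colored neighbors, and adjacent vertices proposing the same color are both disqualified as candidates. The quota $\Flarge-\freq(c)$ then caps every target frequency at $\Flarge$. The only difference is that you make the base case explicit inside the lemma (the $\ILDRC$ output has frequencies at most $\sigma\le\Flarge$ since $\alpha\ge1$), whereas the paper states that observation in the proof of Theorem~\ref{thm:hard-cap-parameterized}.
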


\begin{proof}
A vertex proposes only a color not used by any already-colored neighbor.  Two adjacent uncolored vertices proposing the same color cannot both become candidates, because both detect a conflict.  Hence accepting candidates
preserves the validity of the coloring.

For the upper bound, fix a color $c$ and an iteration.  The algorithm accepts at
most $\Flarge-\freq(c)$ candidates for $c$, so after the round its frequency is
still at most $\Flarge$.  This proves the invariant by induction over the
iterations.
\end{proof}

\begin{lemma}
\label{lem:hard-cap-uncolored-volume}
After the palette-reduction step of Algorithm~\ref{alg:open-recolor} ($\OpenRecolor$),
$|V_{target}|\le (1-\beta/2)n$.
\end{lemma}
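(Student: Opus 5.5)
We bound $|V_{target}|$ as the total frequency of the $\numclr_{remove}=\numclr-\numclr_{target}$ least frequent colors of the coloring returned by $\ILDRC$. If $\numclr\le\numclr_{target}$ the algorithm returns before this step and $V_{target}$ is empty, so we assume $\numclr>\numclr_{target}$. By Theorem~\ref{thm: ILDRC}, every color then has frequency in $[\sigma/2,\sigma]$ and $\numclr\le 2(\Delta+1)$.

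\textbf{Averaging step.} The removed colors are the $\numclr_{remove}$ least frequent ones. Their average frequency is therefore at most the overall average $n/\numclr$. This gives
\[
|V_{target}|=\tfreq(\cP_{remove})\le \frac{\numclr_{remove}}{\numclr}\,n=\Bigl(1-\frac{\numclr_{target}}{\numclr}\Bigr)n .
\]
The right-hand side is increasing in $\numclr$. Using $\numclr\le 2(\Delta+1)$ and $\numclr_{target}=\beta(\Delta+1)$, we get
\[
|V_{target}|\le\Bigl(1-\frac{\beta(\Delta+1)}{2(\Delta+1)}\Bigr)n=(1-\beta/2)\,n .
\]

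\textbf{Main obstacle.} The averaging step must be justified carefully. Sort the frequencies as $f_1\le\cdots\le f_{\numclr}$. The claim is that the prefix average satisfies $\frac1r\sum_{j\le r}f_j\le\frac1{\numclr}\sum_j f_j$. This holds because every removed frequency is at most every retained frequency. Equivalently, $\tfreq(\cP_{remove})\cdot\numclr_{target}\le \numclr_{remove}\cdot\tfreq(\cP_{target})$; adding $\numclr_{remove}\cdot\tfreq(\cP_{remove})$ to both sides yields the bound.

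\textbf{Rounding.} The bounds $[\sigma/2,\sigma]$ are up to floors and ceilings, but they are never used in the averaging argument. Only the total frequency $n$ and the palette bound $\numclr\le 2(\Delta+1)$ enter. Since $\numclr$ is an integer and $2(\Delta+1)$ is an integer, no rounding issue arises.
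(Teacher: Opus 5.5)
Your proof is correct, but it takes a different route from the paper's.

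The paper never uses the fact that the removed colors are the least frequent ones. It observes that each of the $\numclr_{target}=\beta(\Delta+1)$ retained colors has frequency at least $\sigma/2$, by the $\ILDRC$ guarantee. So the retained classes cover at least $\beta(\Delta+1)\sigma/2=\beta n/2$ vertices, and the remaining at most $(1-\beta/2)n$ vertices are exactly $V_{target}$.

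You instead use the sorted order to get the prefix-average bound $|V_{target}|\le(1-\numclr_{target}/\numclr)\,n$, and then substitute the palette bound $\numclr\le 2(\Delta+1)$. This is the same averaging template the paper uses later in Lemma~\ref{lem:log-cap-dropped-volume}.

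The paper's version is a one-line volume count that holds for any choice of removed colors. Yours needs no per-color lower bound on the retained classes, only the palette size and the sorting. Your intermediate bound also tracks the actual palette size $\numclr$, so it is tighter when $\ILDRC$ returns fewer than $2(\Delta+1)$ colors.

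One caveat on your rounding remark. In the paper, $\numclr\le 2(\Delta+1)$ is itself derived from $\freq(i)\ge\sigma/2$ via $\numclr\le n/\Fsmall$. So any floor in $\Fsmall$ reappears in the palette bound you rely on. Your argument is therefore no more rounding-robust than the paper's; both take the guarantees of Theorem~\ref{thm: ILDRC} at face value.
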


\begin{proof}
The target palette has size $\numclr_{target}=\beta(\Delta+1)$.  Each retained
target color has frequency at least $\sigma/2$, because the input coloring
from $\ILDRC$ has frequencies in $[\sigma/2,\sigma]$.  Thus the number of
vertices colored by the retained colors satisfies
\[
\freq(\cP_{target}) ~\ge~
\numclr_{target}\cdot\frac{\sigma}{2}
~=~
\beta(\Delta+1)\cdot\frac{\sigma}{2}
~=~ \frac{\beta n}{2}~.
\]
All other vertices are placed in $V_{target}$.
\end{proof}

\begin{lemma}
\label{lem:many-open-choices}
Assume Conditions~\eqref{eq:hc-beta}--\eqref{eq:hc-open-slack}, and let
$\delta$ be the value defined in Algorithm~\ref{alg:open-recolor} ($\OpenRecolor$).  In every round of
Algorithm~\ref{alg:open-recolor} ($\OpenRecolor$), every uncolored vertex $v$ satisfies
$|\cP_{open}(v)| \ge \delta(\Delta+1)+1$.
\end{lemma}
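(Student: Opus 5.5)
The plan is a counting argument. We bound the number of \emph{closed} target colors by a volume argument, and we bound the number of target colors blocked by neighbors using the degree bound. Whatever remains is a lower bound on $|\cP_{open}(v)|$.

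\textbf{Step 1: invariants on target colors.} Fix a round and an uncolored vertex $v\in V_{target}$. Write $\numclr_{target}=\beta(\Delta+1)$. Every $c\in\cP_{target}$ keeps its original class from $\ILDRC$ and only gains vertices during the open recoloring phase. Hence $\freq(c)\ge\sigma/2$ throughout, by Theorem~\ref{thm: ILDRC}. The classes of the target colors are disjoint subsets of $V$, so $\sum_{c\in\cP_{target}}\freq(c)\le n=(\Delta+1)\sigma$.

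\textbf{Step 2: few closed colors.} Let $x$ be the number of closed target colors, i.e.\ those with $\freq(c)>\Fcap=\lfloor\varphi\alpha\sigma\rfloor$. By integrality of $\freq(c)$, each closed color has $\freq(c)\ge\lfloor\varphi\alpha\sigma\rfloor+1>\varphi\alpha\sigma$. The remaining $\numclr_{target}-x$ open colors each have frequency at least $\sigma/2$. Summing over all target colors gives
\[
x\,\varphi\alpha\sigma+\bigl(\beta(\Delta+1)-x\bigr)\frac{\sigma}{2}\;\le\;(\Delta+1)\sigma .
\]
Dividing by $\sigma$ and rearranging yields $x(\varphi\alpha-\tfrac12)\le(1-\beta/2)(\Delta+1)$. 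Since $2\varphi\alpha-1>0$ by Conditions~\eqref{eq:hc-alpha} and~\eqref{eq:hc-phi}, this becomes
\[
x\le\frac{2-\beta}{2\varphi\alpha-1}\,(\Delta+1).
\]
This step is the main point of the proof. The naive bound $x\le n/(\varphi\alpha\sigma)$ is not strong enough. The sharper bound comes from charging the open colors their guaranteed $\sigma/2$ as well.

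\textbf{Step 3: few blocked colors.} The set $\cP_{neig}(v)$ consists of the colors of the colored neighbors of $v$, so $|\cP_{neig}(v)|\le\deg(v)\le\Delta$.

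\textbf{Step 4: combine.} Every open target color outside $\cP_{neig}(v)$ lies in $\cP_{open}(v)$. Therefore
\[
|\cP_{open}(v)|\;\ge\;\beta(\Delta+1)-x-\Delta\;=\;(\beta-1)(\Delta+1)+1-x .
\]
Plugging in the bound from Step 2 gives
\[
|\cP_{open}(v)|\;\ge\;\Bigl(\beta-1-\frac{2-\beta}{2\varphi\alpha-1}\Bigr)(\Delta+1)+1\;=\;\delta(\Delta+1)+1,
\]
which is exactly the definition of $\delta$ in Algorithm~\ref{alg:open-recolor}. By Observation~\ref{obs:p_0}, $\delta>0$, so this bound is meaningful.
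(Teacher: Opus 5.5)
Your proof is correct and follows essentially the same route as the paper's. Both arguments bound the number of closed target colors by a volume argument: closed colors are charged more than $\varphi\alpha\sigma$ each and the remaining target colors at least $\sigma/2$ each, giving $\numclr_{closed}\le\frac{2-\beta}{2\varphi\alpha-1}(\Delta+1)$; this is then combined with $|\cP_{neig}(v)|\le\Delta$ to obtain $|\cP_{open}(v)|\ge(\beta-1)(\Delta+1)+1-\numclr_{closed}\ge\delta(\Delta+1)+1$.
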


\begin{proof}
Fix an iteration, and let $\cP_{closed}=\cP_{target} \setminus \cP_{open}$. 
Every $c\in\cP_{closed}$ satisfies $\freq(c)>\varphi\alpha\sigma$ because $\Fcap=\lfloor\varphi\alpha\sigma\rfloor$
and frequencies are integral.
Every target color has
frequency at least $\sigma/2$, since target colors are never dissolved and
only gain vertices.  Therefore,
\[
n \ge \numclr_{closed} \cdot \varphi\alpha\sigma
    +(\numclr_{target}-\numclr_{closed})\cdot\frac{\sigma}{2}~.
\]
Substituting $\numclr_{target}=\beta(\Delta+1)$, dividing by $\sigma$ and rearranging gives
\begin{equation}
\label{eq: Xclosed ub alg open-recolor}
\numclr_{closed} ~\le~ \frac{1-\beta/2} {\varphi\alpha-1/2}(\Delta+1) ~=~ \frac{2-\beta} {2\varphi\alpha-1} (\Delta+1).
\end{equation}

The set $\cP_{neig}(v)$ of colors used by $v$'s colored neighbors is of size $\numclr_{neig}(v)\le \Delta$,
so by Eq. {\eqref{eq: def Ptarget}}, $\numclr_{target}(v)$ 
satisfies
\begin{equation}
\label{eq: Xtarget lb alg open-recolor}
\numclr_{target}(v)\ge \numclr_{target}-\Delta ~=~ (\beta-1)(\Delta+1)+1.
\end{equation}
By Eq. {\eqref{eq: Xclosed ub alg open-recolor}} and {\eqref{eq: Xtarget lb alg open-recolor}},
\[
\numclr_{open}(v) ~\ge~ \numclr_{target}(v)-\numclr_{closed} ~\ge~ \left(\beta-1-\frac{2-\beta}{2\varphi\alpha-1}\right)(\Delta+1)+1 ~=~ \delta(\Delta+1)+1
\]
by the definition of $\delta$.
\end{proof}

For a recoloring round $t$, let $\HIST_t$ denote the history of the
execution up to the beginning of round $t$.  Let
$E_{\mathrm{GoodHist}}(t)$ be the event that no bad event occurred in iterations
$1,\ldots,t-1$, where a bad event means that (a) the partial coloring became
invalid, (b) some target color exceeded the cap $\Flarge$, or (c) some
uncolored vertex had fewer than $\delta(\Delta+1)+1$ open locally valid colors at the beginning of an iteration.

We observe that
any history $\HIST_t$ reached by Algorithm~\ref{alg:open-recolor} ($\OpenRecolor$) by the end of round $t-1$ satisfies $E_{\mathrm{GoodHist}}(t)$ deterministically. This can be verified by induction on $t$. For $t=1$, note that before the first recoloring round, uncolored vertices cannot create a conflict among the vertices that remain colored, all target colors have frequency at most $\sigma\le\Flarge$, and Lemma~\ref{lem:many-open-choices} establishes condition (c) in the
definition of $E_{\mathrm{GoodHist}}(t)$.
For the inductive step, assuming that $\HIST_t$ satisfies
$E_{\mathrm{GoodHist}}(t)$, Lemma~\ref{lem:open-cap-safety} shows that round
$t$ preserves conditions (a) and (b), and Lemma~\ref{lem:many-open-choices}
establishes condition (c) for the resulting history $\HIST_{t+1}$.

In the next lemma, conditioning on a fixed history $\HIST_t$ (which satisfies
$E_{\mathrm{GoodHist}}(t)$ as shown above),
the partial coloring, the uncolored set, all frequencies, and all palettes are fixed; the only remaining randomness is the fresh random choices made in round $t$.  

\begin{lemma}
\label{lem:open-cap-progress}
Assume Conditions~\eqref{eq:hc-beta}--\eqref{eq:hc-open-slack}, and set
$p_0$ as in Algorithm~\ref{alg:open-recolor} ($\OpenRecolor$).
For every round $t$ and every history $\HIST_t$,
each vertex that is still uncolored at the beginning of round $t$ becomes colored in that round with conditional probability at least $p_0/4$.
\end{lemma}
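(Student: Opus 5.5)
Fix a round $t$ and a history $\HIST_t$. By the discussion preceding the lemma, $\HIST_t$ satisfies $E_{\mathrm{GoodHist}}(t)$. So by Lemma~\ref{lem:many-open-choices} every uncolored vertex $u$ has $|\cP_{open}(u)|\ge\delta(\Delta+1)+1$. By Lemma~\ref{lem:hard-cap-uncolored-volume}, and since $V_{target}$ only shrinks, $|V_{target}|\le(1-\beta/2)n$. The only randomness left is the round-$t$ activations, proposals and rankings, which are independent across vertices. Fix an uncolored $v$. Write its success as the intersection of three events: (i) $v$ is active and proposes some $c_v$; (ii) no neighbor of $v$ in $V_{target}$ proposes $c_v$; (iii) $v$ is among the $\Flarge-\freq(c_v)$ highest-ranked candidates for $c_v$. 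The plan is to lower-bound $\Pr[\text{(i)}\wedge\text{(ii)}]$ by $p_0/2$, and to upper-bound the probability that (i) holds but (iii) fails by $p_0/4$.

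\emph{Step 1: (i) and (ii).} $v$ is active with probability $p_0$, and then picks $c_v$ uniformly from a nonempty set. Condition on (i) and on the value $c_v=c$. Any other uncolored $u$ proposes $c$ with probability at most $p_0/|\cP_{open}(u)|\le p_0/(\delta(\Delta+1))$, independently of $v$. A union bound over the at most $\Delta$ neighbors of $v$ shows that (ii) fails with probability at most $p_0/\delta\le 1/2$, using $p_0\le\delta/2$. Hence $\Pr[\text{(i)}\wedge\text{(ii)}]\ge p_0/2$.

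\emph{Step 2: capacity.} Again condition on (i) with $c_v=c$. Since $c\in\cP_{open}$, we have $\freq(c)\le\Fcap$, so the remaining capacity is $\Flarge-\freq(c)\ge\Flarge-\Fcap$, which is at least about $(1-\varphi)\alpha\sigma$ (and at least $1$ by Observation~\ref{obs:p_0}). Let $N$ be the number of other vertices proposing $c$; this is an upper bound on the number of competing candidates. Summing over all of $V_{target}$,
\[
\mathbb{E}[N]\le |V_{target}|\cdot\frac{p_0}{\delta(\Delta+1)}\le\frac{(1-\beta/2)\,p_0\,\sigma}{\delta}\le\frac{(1-\varphi)\alpha\sigma}{2},
\]
where the last inequality is the third term in the definition of $p_0$. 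The ranking is fresh and uniform, so given $N$ the expected number of competitors ranked above $v$ is at most $N/2$. Rejection requires at least $\Flarge-\freq(c)$ of them. Markov's inequality then gives that the probability of (i) with rejection is at most $p_0\cdot\mathbb{E}[N]/(2(\Flarge-\Fcap))\approx p_0/4$.

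\emph{Step 3: combine.} We get $\Pr[\text{colored}]\ge\Pr[\text{(i)}\wedge\text{(ii)}]-\Pr[\text{(i)}\wedge\neg\text{(iii)}]\ge p_0/2-p_0/4=p_0/4$.

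The main obstacle is Step 2, for two reasons. First, the rounding gap between $\Flarge-\Fcap=\lceil\alpha\sigma\rceil-\lfloor\varphi\alpha\sigma\rfloor$ and $(1-\varphi)\alpha\sigma$ must go in the right direction; it does, since $\lceil\alpha\sigma\rceil-\lfloor\varphi\alpha\sigma\rfloor\ge(1-\varphi)\alpha\sigma$. Second, the competing-candidate count is correlated with event (ii). To avoid this, I will bound the proposer count $N$, which dominates the number of competing candidates, and use only the independence of other vertices' choices from $v$'s choice given $\HIST_t$. That way the two bad events can be subtracted without needing any independence between them.
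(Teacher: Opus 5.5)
Your proof is correct. It uses the same ingredients as the paper: the open-choice bound of Lemma~\ref{lem:many-open-choices}, the union bound over neighbors with $p_0\le\delta/2$, the bound $\frac12(1-\varphi)\alpha\sigma$ on the expected number of competing proposers from the third term of $p_0$, and Markov against $\Flarge-\Fcap\ge(1-\varphi)\alpha\sigma$. Where you differ is in how you combine them.

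\textbf{The paper's combination.} It restricts $Y$ to proposers outside $N(v)\cup\{v\}$. This makes the events ``no neighbor proposes $c$'' and ``$Y<\Flarge-\Fcap$'' independent given $E_{v\to c}$, so their intersection has probability at least $\frac12\cdot\frac12$. On that intersection the total number of proposers for $c$ fits within the residual capacity, so every candidate is admitted and the ranking plays no role.

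\textbf{Your combination.} You count all other proposers and subtract the two failure probabilities, which needs no independence between them. The price is that Markov on proposer counts alone would give only $p_0/2-p_0/2=0$. You therefore need the fresh uniform ranking to halve the expected number of competitors ranked above $v$.

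So the paper's argument works for any admission rule in $\QA$. Yours depends on the randomized ranking but avoids the neighbor/non-neighbor independence step.

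\textbf{A bookkeeping point.} The term you subtract should be $\Pr[\text{(i)}\wedge\text{(ii)}\wedge\neg\text{(iii)}]$, meaning ``$v$ is a candidate but is rejected.'' You can set the number of higher-ranked competitors to $0$ when $v$ is not a candidate. It should not be $\Pr[\text{(i)}\wedge\neg\text{(iii)}]$ with (iii) read literally, since that also includes the case where $v$ never becomes a candidate. Your Markov computation bounds exactly the former. That bound is $\le p_0/4$ exactly, not merely approximately, because $\lceil\alpha\sigma\rceil-\lfloor\varphi\alpha\sigma\rfloor\ge(1-\varphi)\alpha\sigma$.
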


\begin{proof}
Fix an iteration $t$, condition on a history $\HIST_t$ (which satisfies
$E_{\mathrm{GoodHist}}(t)$ as shown above), and fix an uncolored vertex $v$.
In this proof all probabilities and expectations are with respect to the random
choices made in round $t$, conditioned on this history.
Set $A=\delta(\Delta+1)+1$.  By Lemma~\ref{lem:many-open-choices},
$\numclr_{open}(v)\ge A$.

Let $E_{active}(v)$ be the event that $v$ is active.
For any uncolored vertex $x$ and color $c$, let
$E_{x\to c}$ be the event that $x$ is active and proposes $c$
in this round. This event already includes activation, so no additional
conditioning on $E_{active}(x)$ is used in the proposal bounds.
Fix a color $c\in\cP_{open}(v)$. For any uncolored neighbor $u$,
$\mathbb P[E_{u\to c}\mid E_{v\to c}]\le \frac{p_0}{A}$.
Indeed, if $c\notin\cP_{open}(u)$ then this probability is $0$,
and otherwise it is at most
$p_0/|\cP_{open}(u)|\le p_0/A$.  By the union bound
over at most $\Delta<\Delta+1$ neighbors, 
\begin{equation}
\label{eq:pv-conflict}
\mathbb P[\text{$v$ has a proposal conflict}\mid E_{v\to c}]
\le
\Delta\cdot\frac{p_0}{A}
>
\frac{p_0}{\delta}
\le
\frac12~,
\end{equation}
where the last inequality follows from
Eq. {\eqref{eq: delta>0}} and the definition of
$p_0$ in Algorithm~\ref{alg:open-recolor} ($\OpenRecolor$), which
gives $p_0\le\delta/2$.
Hence, conditioned on $E_{v\to c}$, vertex $v$ becomes a candidate with
probability at least $1/2$.

It remains to bound the probability that color $c$ accepts $v$.  Let $Y$
be the number of uncolored vertices outside $N(v)\cup\{v\}$ that propose
$c$ in this round.  By Lemma~\ref{lem:hard-cap-uncolored-volume}, the current uncolored set has size at most $(1-\beta/2)n$. For every such vertex $w$,
\[
\mathbb P[E_{w\to c}\mid E_{v\to c}]\le \frac{p_0}{A}~.
\]
Therefore
\[
\mathbb E[Y\mid E_{v\to c}]
\le
(1-\beta/2)n\cdot\frac{p_0}{A}
\le
\frac{(1-\beta/2)\sigma p_0}{\delta}~.
\]
By the definition of $p_0$ in Algorithm~\ref{alg:open-recolor} ($\OpenRecolor$),
\[
\mathbb E[Y\mid E_{v\to c}]
\le \frac12(1-\varphi)\alpha\sigma .
\]
Since
\[
\Flarge-\Fcap
=
\lceil\alpha\sigma\rceil-\lfloor\varphi\alpha\sigma\rfloor
\ge \alpha\sigma-\varphi\alpha\sigma
= (1-\varphi)\alpha\sigma,
\]
Markov's inequality gives
\begin{equation}
\label{eq:y-ge-cap-gap}
\mathbb P[Y\ge \Flarge-\Fcap
   \mid E_{v\to c}]
~\le~
\frac{\mathbb E[Y\mid E_{v\to c}]}{\Flarge-\Fcap}
~\le~ \frac12.
\end{equation}
The random choices of vertices outside $N(v)\cup\{v\}$ are independent of the
random choices of the uncolored neighbors of $v$, conditioned on
$E_{v\to c}$.  Because $c$ is open, its remaining
capacity is at least $\Flarge-\Fcap$.  If no neighbor of $v$ proposes $c$ and
$Y<\Flarge-\Fcap$, then even after including $v$, at most $\Flarge-\Fcap$ vertices propose
$c$.  Thus all candidates for $c$ can be accepted, and in particular $v$ is
accepted.  Therefore, by Eqs.~\eqref{eq:pv-conflict} and~\eqref{eq:y-ge-cap-gap},
conditioned on $E_{v\to c}$, the probability that $v$ is colored is at least
$1/4$.

This lower bound is uniform over all possible choices
$c\in\cP_{open}(v)$.  Hence, by the law of total probability over the random
color chosen by $v$, conditioned only on $v$ being active, the probability
that $v$ is colored is also at least $1/4$.  Multiplying by the activation
probability $p_0$, the one-round success probability is at least $p_0/4$. 
\end{proof}

The preceding lemmas establish the deterministic safety, the availability of
open colors, and the one-round progress bound.  We now combine them into the
parameterized guarantee for Algorithm~\ref{alg:open-recolor}
($\OpenRecolor$), and then derive the asymptotic palette--frequency tradeoff as
two corollaries.

For the sake of the time analysis of Section~\ref{sec:congest-model} we note the following:

\begin{fact}
\label{lem:open-recolor-runtime-interface}
Algorithm $\OpenRecolor$ first invokes $\ILDRC$. By Lemma~\ref{lem:open-cap-progress}, with high probability it then performs $O(\lg n/p_0)$ open-capacity iterations. Each such iteration updates open colors and frequencies using $\CA$ and admits candidates according to the residual capacities using $\QA$. The local proposal and conflict-test operations are handled within these iterations and do not introduce an additional primitive in the accounting of Section~\ref{sec:congest-model}. When $\alpha,\beta,\varphi$ are fixed admissible constants, $p_0=\Omega(1)$ and this factor disappears.
\end{fact}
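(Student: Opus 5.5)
The statement to prove is Fact~\ref{lem:open-recolor-runtime-interface}. The substantive part is the iteration bound: with high probability the open recoloring phase of $\OpenRecolor$ ends after $O(\lg n/p_0)$ iterations. The rest is bookkeeping read directly off the code.

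\textbf{Step 1: per-vertex survival bound.} Fix a vertex $v\in V_{target}$ at the start of the open phase. Let $E_t$ be the event that $v$ is still uncolored after round $t$. Lemma~\ref{lem:open-cap-progress} holds for every history $\HIST_t$, and every reachable history satisfies $E_{\mathrm{GoodHist}}(t)$ deterministically, as argued before that lemma. So, conditioned on any history in which $v$ is uncolored at the start of round $t$, $v$ survives round $t$ with probability at most $1-p_0/4$. Chaining these conditional bounds (a telescoping product over $t$, justified because round $t$ uses fresh randomness) gives
\[
\mathbb P[E_T]\le (1-p_0/4)^T\le e^{-p_0T/4}.
\]

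\textbf{Step 2: union bound.} Choose $T=\lceil 4(c+1)\ln n/p_0\rceil$. Each vertex then survives $T$ rounds with probability at most $n^{-(c+1)}$. A union bound over at most $n$ vertices shows $V_{target}=\emptyset$ after $T=O(\lg n/p_0)$ iterations with probability at least $1-n^{-c}$, so the while-loop terminates. By Lemma~\ref{lem:open-cap-safety} the output is a valid coloring with target frequencies at most $\Flarge$.

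\textbf{Step 3: primitive accounting.} Per iteration, computing $\cP_{open}$ and updating frequencies is a $\CA$ step over $\numclr_{target}\le 2(\Delta+1)=O(\Delta)$ color records. Admitting $\min\{|C_c|,\Flarge-\freq(c)\}$ highest-ranked candidates per color is exactly a $\QA$ call with quota $b_c=\Flarge-\freq(c)$. Proposing, broadcasting $\textsf{Propose}(c_v)$, and testing for conflicts with neighbors are single-round local exchanges, so Lemma~\ref{lem:ca-qa-runtime} applies. The preliminary sort and removal step is one more $\CA$ step, preceded by the call to $\ILDRC$.

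\textbf{Step 4: constants.} When $\alpha,\beta,\varphi$ are fixed admissible constants, $\delta$ is a positive constant by Eq.~\eqref{eq: delta>0}. The three quantities in the $\min$ defining $p_0$ are then positive constants, so $p_0=\Omega(1)$ by Observation~\ref{obs:p_0}. The iteration count becomes $O(\lg n)$.

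The main obstacle I expect is justifying the chaining in Step 1. The bound of Lemma~\ref{lem:open-cap-progress} is conditional on the history, so I would argue it by a filtration argument rather than by independence across rounds. Its validity rests on Lemma~\ref{lem:hard-cap-uncolored-volume} continuing to hold in later rounds, which it does because $V_{target}$ only shrinks.
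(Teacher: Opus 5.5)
Your proposal is correct and follows the paper's own argument, which appears in the proof of Theorem~\ref{thm:hard-cap-parameterized}. There, the history-conditional $p_0/4$ success bound of Lemma~\ref{lem:open-cap-progress} is iterated to get survival probability $(1-p_0/4)^T\le n^{-(c+1)}$ for $T\ge 4(c+1)\ln n/p_0$, a union bound is taken over vertices, $p_0$ is noted to be a positive constant for fixed $\alpha,\beta,\varphi$, and the $\CA$/$\QA$ accounting is read off the code, just as you do; your filtration/telescoping justification of the chaining step is simply a more careful version of the paper's ``iterating this conditional bound.''
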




\begin{theorem}
\label{thm:hard-cap-parameterized}
For parameters satisfying conditions~\eqref{eq:hc-beta}--\eqref{eq:hc-open-slack},
Algorithm~\ref{alg:open-recolor} ($\OpenRecolor$)
returns a valid coloring
with at most
$\beta(\Delta+1)=\Delta+1+\ell$
colors
and frequency range $[\sigma/2,\lceil\alpha\sigma\rceil]$.
Its 
running times are $\Tseq(\OpenRecolor)=O(m(\lg\Delta+\lg n/p_0))$, $\Tcongest(\OpenRecolor)=O\!\left(\lg\Delta\cdot(D+\Delta+\lg^5\lg n)+(D+\Delta)\frac{\lg n}{p_0}\right)$ and 
$\Tcc(\OpenRecolor)=O(\lg\Delta+\lg n/p_0)$.  If $\alpha$
is a fixed constant,
these simplify to
$\Tseq(\OpenRecolor)=O(m\lg n)$, $\Tcongest(\OpenRecolor)=O((D+\Delta)\lg n+\lg\Delta\cdot\lg^5\lg n)$ and $\Tcc(\OpenRecolor)=O(\lg n)$, with high probability.
\end{theorem}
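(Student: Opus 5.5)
The proof combines the deterministic safety lemmas with the per-round progress bound of Lemma~\ref{lem:open-cap-progress}, and then accounts for the primitive calls. There are three parts: correctness of the output, termination with high probability, and running time.

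\textbf{Correctness.} If the algorithm returns early, at the check $\numclr\le\numclr_{target}$, then the output of $\ILDRC$ is returned unchanged. By Theorem~\ref{thm: ILDRC} it is valid, has at most $\numclr_{target}$ colors, and has frequencies in $[\sigma/2,\sigma]\subseteq[\sigma/2,\lceil\alpha\sigma\rceil]$, since $\alpha\ge1$.

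Otherwise, exactly $\numclr_{remove}$ colors are dissolved, and the palette thereafter is $\cP_{target}$ with $|\cP_{target}|=\beta(\Delta+1)$. Colors in $\cP_{remove}$ end up empty and are erased, by the Remark. Every color in $\cP_{target}$ started with frequency at least $\sigma/2$ and only gains vertices, so the lower bound holds. Lemma~\ref{lem:open-cap-safety} gives validity and the upper bound $\Flarge=\lceil\alpha\sigma\rceil$ at every point. So whenever the while loop terminates, the output has the claimed palette size and frequency range.

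\textbf{Termination w.h.p.} By the inductive remark preceding Lemma~\ref{lem:open-cap-progress}, every reachable history satisfies $E_{\mathrm{GoodHist}}(t)$. Hence Lemma~\ref{lem:open-cap-progress} applies in every round: a vertex still uncolored at the start of round $t$ gets colored with conditional probability at least $p_0/4$, given any history.

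Chaining these conditional bounds over $T$ rounds, a fixed vertex is still uncolored after $T$ rounds with probability at most $(1-p_0/4)^T\le e^{-p_0T/4}$. Choosing $T=4(c+1)\ln n/p_0$ makes this at most $n^{-(c+1)}$. A union bound over at most $n$ vertices of $V_{target}$ shows that $V_{target}=\emptyset$ after $O(\lg n/p_0)$ rounds with probability at least $1-n^{-c}$.

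\textbf{Running time.} The time bound follows from Fact~\ref{lem:open-recolor-runtime-interface}. The $\ILDRC$ call costs what Theorem~\ref{thm: ILDRC} states. The palette-reduction step is one $\CA$ step (sorting and selection), costing $O(m)$, $O(D+\Delta)$ and $O(1)$ in the sequential, $\CONGEST$ and $\CC$ models respectively, by Lemma~\ref{lem:ca-qa-runtime}. Each of the $O(\lg n/p_0)$ rounds consists of a $\CA$ step, a $\QA$ step and local proposal/conflict exchanges. The local exchanges take $O(1)$ rounds in $\CONGEST$ and $\CC$: each vertex sends one color, and neighbors' colors are learned per edge. They cost $O(m)$ per round sequentially. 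Summing gives the three general bounds.

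For fixed constants $\alpha,\beta,\varphi$, the quantity $\delta$ is a positive constant, so $p_0=\Omega(1)$. Together with $\lg\Delta\le\lg n$, this gives the simplified bounds $O(m\lg n)$, $O((D+\Delta)\lg n+\lg\Delta\cdot\lg^5\lg n)$ and $O(\lg n)$.

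\textbf{Main obstacle.} The delicate point is the termination argument. Lemma~\ref{lem:open-cap-progress} gives a bound conditioned on each history, not independence across rounds. I would handle this by multiplying conditional probabilities along the filtration: the probability that $v$ survives $T$ rounds equals $\prod_t \mathbb P[v \text{ survives round } t\mid \HIST_t]\le(1-p_0/4)^T$. This is legitimate because the bound holds uniformly over all histories.

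A minor subtlety is that $\beta(\Delta+1)$ is a bound on, not necessarily equal to, the final palette size. Colors in $\cP_{target}$ are never emptied, so the final palette has exactly $\numclr_{target}$ colors in the non-early-return case.
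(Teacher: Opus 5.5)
Your proposal is correct and follows the paper's proof essentially step for step: you handle the early return via $\alpha\ge1$, note that retained colors keep the $\sigma/2$ lower bound, use Lemma~\ref{lem:open-cap-safety} for validity and the cap $\lceil\alpha\sigma\rceil$, iterate the history-conditional $p_0/4$ bound of Lemma~\ref{lem:open-cap-progress} over $T=\Theta(\lg n/p_0)$ rounds with a union bound, and read the runtime off Fact~\ref{lem:open-recolor-runtime-interface}, with $p_0=\Omega(1)$ for fixed $\alpha,\beta,\varphi$. Your explicit filtration argument and your per-round accounting of the local proposal exchanges only spell out what the paper leaves implicit; like the paper, you correctly take $\beta$ and $\varphi$ fixed as well, which the simplified bounds need even though the statement mentions only $\alpha$.
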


\begin{proof}
The call to $\ILDRC$ 
returns frequencies in $[\sigma/2,\sigma]$.  If Algorithm $\OpenRecolor$ returns immediately, the theorem follows from $\alpha\ge1$.
Otherwise, the algorithm keeps exactly $\beta(\Delta+1)$ target colors.
These colors had frequency at least $\sigma/2$ before recoloring and only
gain vertices, so the lower bound is preserved. Since they start with frequency
at most $\sigma\le\alpha\sigma\le\Flarge$, Lemma ~\ref{lem:open-cap-safety}
gives validity and the upper bound $\Flarge=\lceil\alpha\sigma\rceil$.

As observed above, every history reached by Algorithm~\ref{alg:open-recolor} ($\OpenRecolor$) satisfies
$E_{\mathrm{GoodHist}}(t)$.  Thus Lemma~\ref{lem:open-cap-progress}
applies in every round: conditioned on the history, each still uncolored
vertex is colored with probability at least $p_0/4$. Iterating this
conditional bound, for any fixed vertex $v$ and any
$T\ge 4(c+1)\ln n/p_0$, the probability that $v$ remains uncolored is
at most $(1-p_0/4)^T\le n^{-(c+1)}$. 
A union bound over all vertices gives
termination after $O(\lg n/p_0)$ recoloring iterations with probability at least $1-n^{-c}$. 
For fixed admissible $\alpha,\beta,\varphi$, the definition of $p_0$
has only positive constant terms, hence $p_0=\Omega(1)$.
The runtime statement follows from Fact~\ref{lem:open-recolor-runtime-interface} and is calculated in Section~\ref{sec:congest-model}.
\end{proof}

We now consider what happens when $\ell$ tends to zero.
Let $\epsilon>0$ be a constant, let $\alpha\ge1$, and let $\varphi$ satisfy
\begin{eqnarray}
& 1<1+2\epsilon \le 2\alpha\Delta/(\Delta+1),
\label{eq:hc-cor-r}
\\
& \max\left\{\frac12,\frac1{1+2\epsilon}\right\}<\varphi<1.
\label{eq:hc-cor-phi}
\end{eqnarray}
Set $\rho(\alpha)=(1+2\epsilon)/(2\alpha)$ and
$\ell_{\epsilon}=\lceil\rho(\alpha)(\Delta+1)\rceil$.

\begin{corollary}
\label{cor:hard-cap-exact-tradeoff}
Algorithm~\ref{alg:open-recolor} ($\OpenRecolor$), run with
$\ell=\ell_{\epsilon}$ and $\beta=1+\ell_{\epsilon}/(\Delta+1)$, returns a valid coloring with frequency range $[\sigma/2, \lceil\alpha\sigma\rceil]$ and palette size $\numclr \le (1+\rho(\alpha))(\Delta+1)+1$.
The running times are
$\Tseq(\OpenRecolor)=O(m\lg n)$,
$\Tcongest(\OpenRecolor)=O((D+\Delta)\lg n+\lg\Delta\cdot\lg^5\lg n)$ and
$\Tcc(\OpenRecolor)=O(\lg n)$, with high probability.
\end{corollary}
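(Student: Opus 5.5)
The plan is to derive the corollary directly from Theorem~\ref{thm:hard-cap-parameterized}. That requires checking that the choice $\ell=\ell_\epsilon$, $\beta=1+\ell_\epsilon/(\Delta+1)$ together with $\alpha,\varphi$ obeying \eqref{eq:hc-cor-r}--\eqref{eq:hc-cor-phi} satisfies Conditions~\eqref{eq:hc-beta}--\eqref{eq:hc-open-slack}. Once that holds, validity and the frequency range $[\sigma/2,\lceil\alpha\sigma\rceil]$ are immediate from the theorem.

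\textbf{Condition~\eqref{eq:hc-beta}.} We need $\ell_\epsilon$ to be an integer with $1\le\ell_\epsilon\le\Delta$. The lower bound is clear, since $\rho(\alpha)>0$. For the upper bound, \eqref{eq:hc-cor-r} gives $\rho(\alpha)(\Delta+1)=(1+2\epsilon)(\Delta+1)/(2\alpha)\le\Delta$. Because $\Delta$ is an integer, the ceiling is still at most $\Delta$.

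\textbf{Conditions~\eqref{eq:hc-alpha} and~\eqref{eq:hc-phi}.} These are given directly: $\alpha\ge1$ by assumption, and $\tfrac12<\varphi<1$ by \eqref{eq:hc-cor-phi}.

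\textbf{Condition~\eqref{eq:hc-open-slack}.} This is the only place the specific choice of $\varphi$ matters. Since $\ell_\epsilon\ge\rho(\alpha)(\Delta+1)$, we get
\[
2\varphi\alpha\cdot\frac{\ell_\epsilon}{\Delta+1}\ \ge\ 2\varphi\alpha\rho(\alpha)\ =\ \varphi(1+2\epsilon).
\]
The right-hand side exceeds $1$ exactly because $\varphi>1/(1+2\epsilon)$ by \eqref{eq:hc-cor-phi}.

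\textbf{Palette bound.} The theorem gives at most $\Delta+1+\ell_\epsilon$ colors. Using $\lceil x\rceil<x+1$,
\[
\Delta+1+\ell_\epsilon\ \le\ (1+\rho(\alpha))(\Delta+1)+1 .
\]

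\textbf{Running times.} Since $\epsilon$ is a constant, $\varphi$ can be fixed as a constant strictly inside the interval of \eqref{eq:hc-cor-phi}. For constant $\alpha$, the quantity $\beta-1=\ell_\epsilon/(\Delta+1)$ is then bounded between the constant $\rho(\alpha)$ and $\rho(\alpha)+1/(\Delta+1)$.

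\textbf{Main obstacle.} The step needing care is showing $p_0=\Omega(1)$ uniformly in $\Delta$, so that the simplified bounds of Theorem~\ref{thm:hard-cap-parameterized} apply even though $\beta$ depends on $\Delta$ through the ceiling. The approach is to use the closed form from Eq.~\eqref{eq: delta>0}, $\delta=(2\varphi\alpha(\beta-1)-1)/(2\varphi\alpha-1)$. Its numerator is at least $\varphi(1+2\epsilon)-1$, a positive constant independent of $\Delta$, and its denominator is a constant. Hence $\delta=\Omega(1)$.

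The factor $1-\beta/2$ in the definition of $p_0$ could in principle become small. However, $\beta\le1+\Delta/(\Delta+1)<2$, so $1-\beta/2>0$; and this factor appears in the denominator, so a small value only enlarges that term and is harmless. Therefore $p_0=\tfrac12\min\{1,\delta,\delta(1-\varphi)\alpha/(1-\beta/2)\}=\Omega(1)$, and the simplified running times $O(m\lg n)$, $O((D+\Delta)\lg n+\lg\Delta\cdot\lg^5\lg n)$ and $O(\lg n)$ follow with high probability.
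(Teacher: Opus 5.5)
Your proposal is correct and follows essentially the same route as the paper's proof. You verify Conditions~\eqref{eq:hc-beta}--\eqref{eq:hc-open-slack} for $\ell=\ell_\epsilon$ via $2\varphi\alpha\rho(\alpha)=\varphi(1+2\epsilon)>1$, invoke Theorem~\ref{thm:hard-cap-parameterized}, bound the palette through the ceiling, and obtain $p_0=\Omega(1)$ uniformly in $\Delta$ from the closed form of $\delta$ together with the observation that $1-\beta/2$ (at most $1/2$) sits in a denominator and so cannot hurt the third term.
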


\begin{proof}
By Eq.~\eqref{eq:hc-cor-r}, $\ell_{\epsilon}\le\Delta$, so
Condition~\eqref{eq:hc-beta} holds for
$\beta=1+\ell_{\epsilon}/(\Delta+1)$.  Conditions~\eqref{eq:hc-alpha} and
\eqref{eq:hc-phi} hold by the assumptions of the corollary.  Moreover, 
\begin{equation}
\label{eq: 2 varphi alpha (beta-1)}
2\varphi\alpha(\beta-1)
=
2\varphi\alpha\cdot\frac{\ell_{\epsilon}}{\Delta+1}
\ge
2\varphi\alpha\cdot\rho(\alpha)
=
\varphi(1+2\epsilon)
>
1,
\end{equation}
where the last inequality follows from Eq.~\eqref{eq:hc-cor-phi}.  Thus
Condition~\eqref{eq:hc-open-slack} also holds, and
Theorem~\ref{thm:hard-cap-parameterized} applies.

The palette bound follows from
\[
\numclr
\le
\Delta+1+\ell_{\epsilon}
\le
(\Delta+1)+\rho(\alpha)(\Delta+1)+1
=
(1+\rho(\alpha))(\Delta+1)+1 .
\]
It remains only to note that $p_0=\Omega(1)$ for these constant parameters.
By Eq.~\eqref{eq: delta>0} and~\eqref{eq: 2 varphi alpha (beta-1)}, 
\[
\delta
=
\frac{2\varphi\alpha(\beta-1)-1}{2\varphi\alpha-1}
\ge
\frac{\varphi(1+2\epsilon)-1}{2\varphi\alpha-1}
>0 ,
\]
and the right-hand side is a positive constant.  Since
$1-\beta/2\le1/2$, the third term in the definition of $p_0$ is also bounded
below by a positive constant.  Hence $p_0=\Omega(1)$.
The running times are obtained from Fact~\ref{lem:open-recolor-runtime-interface} by the substitutions carried out in Section~\ref{sec:congest-model}.
\end{proof}

By way of a concrete example, fix a constant $0<\epsilon<1/2$, consider graphs satisfying $1+2\epsilon\le 2\Delta/(\Delta+1)$, and fix a constant
$\varphi$ satisfying $1/(1+2\epsilon)<\varphi<1$. Then we get the following.

\begin{corollary}
\label{cor:hard-cap-unit}
Algorithm~\ref{alg:open-recolor} ($\OpenRecolor$), run with
$\alpha=1$ and $\ell=\lceil(1/2+\epsilon)(\Delta+1)\rceil$, returns a valid coloring with at most
$(1.5+\epsilon)(\Delta+1)+1$ colors and frequency range
$[\sigma/2,\lceil\sigma\rceil]$. Its running times are $\Tseq(\OpenRecolor)=O(m\lg n)$, $\Tcongest(\OpenRecolor)=O((D+\Delta)\lg n+\lg\Delta\cdot\lg^5\lg n)$ and $\Tcc(\OpenRecolor)=O(\lg n)$, with high probability.
\end{corollary}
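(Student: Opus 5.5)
This is a direct specialization of Corollary~\ref{cor:hard-cap-exact-tradeoff} to $\alpha=1$, so the plan is to verify its hypotheses and then translate its conclusions.

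\textbf{Checking the hypotheses.} With $\alpha=1$ we get $\rho(1)=(1+2\epsilon)/2=1/2+\epsilon$, so $\ell_\epsilon=\lceil\rho(1)(\Delta+1)\rceil=\lceil(1/2+\epsilon)(\Delta+1)\rceil$, which is exactly the $\ell$ in the statement.
\begin{itemize}
\item Condition~\eqref{eq:hc-cor-r} reads $1<1+2\epsilon\le 2\Delta/(\Delta+1)$. The left inequality holds since $\epsilon>0$. The right inequality is precisely the graph assumption stated just before the corollary.
\item Condition~\eqref{eq:hc-cor-phi} reads $\max\{1/2,1/(1+2\epsilon)\}<\varphi<1$. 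Since $\epsilon<1/2$, we have $1+2\epsilon<2$, hence $1/(1+2\epsilon)>1/2$. The maximum is therefore $1/(1+2\epsilon)$, and the condition is exactly the stated choice of $\varphi$.
\end{itemize}
Condition~\eqref{eq:hc-alpha} is trivial for $\alpha=1$.

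\textbf{Reading off the conclusions.} Corollary~\ref{cor:hard-cap-exact-tradeoff} then gives a valid coloring with frequency range $[\sigma/2,\lceil\sigma\rceil]$ and palette size at most $(1+\rho(1))(\Delta+1)+1=(1.5+\epsilon)(\Delta+1)+1$. It also gives the running times $O(m\lg n)$, $O((D+\Delta)\lg n+\lg\Delta\cdot\lg^5\lg n)$ and $O(\lg n)$ with high probability. These are valid because $\epsilon$ and $\varphi$ are fixed constants, so $p_0=\Omega(1)$, as shown in the proof of that corollary.

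\textbf{Main obstacle.} The only step needing any care is confirming that the maximum in~\eqref{eq:hc-cor-phi} is $1/(1+2\epsilon)$, which uses $\epsilon<1/2$. Everything else is substitution.
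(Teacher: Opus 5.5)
Your proposal is correct and follows the paper's route. The paper gives no separate proof and presents this corollary as the $\alpha=1$ instance of Corollary~\ref{cor:hard-cap-exact-tradeoff}, under the stated assumptions on $\epsilon$, $\varphi$ and $\Delta$; your hypothesis check and substitution, including the observation that $\epsilon<1/2$ makes $1/(1+2\epsilon)$ the binding lower bound on $\varphi$, are exactly that argument.
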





\section{$\Delta+1$ Palette Reduction with a Logarithmic frequency threshold}
\label{sec:delta-plus-two-log-cap}


Section~\ref{sec:hard-capped-open-reduction} gave a capped recoloring
procedure that reduces the palette to
$\beta(\Delta+1)=\Delta+1+\ell$ colors.  Its progress condition,
Eq.~\eqref{eq:hc-open-slack}, requires
$2\varphi\alpha\ell/(\Delta+1)>1$.  Hence pushing that one-shot reduction
all the way to $\Delta+1$ colors would 
force
$\alpha=\Omega(\Delta)$ and give no useful near-equitable bound.

This section handles the $\numclr=\Delta+1$ endpoint by applying the same capped
recoloring idea gradually.  Starting from the balanced coloring produced by
$\ILDRC$, the algorithm repeatedly deletes a constant fraction of the current
excess colors and recolors the deleted vertices into the remaining palette.
In each phase, every surviving color (that is still in $\cP$) may receive only $O(\sigma)$ additional
vertices.  Since the excess decreases geometrically, the number of phases is
$O(\lg\Delta)$.  The resulting price for reaching $\Delta+1$ colors is
therefore an upper frequency threshold of order $\sigma\lg\Delta$, rather
than an $O(\sigma)$ threshold.

For fixed parameters $\varepsilon$ and $c$, set $C_\varepsilon = 100/\varepsilon$.
Throughout this section, assume the requirements
\begin{equation}
\label{eq:sigma-condition-log algorithm}
0<\varepsilon\le 1,\qquad
c\ge 1,\qquad
\sigma ~\ge~ \max\{ 6/\varepsilon,\ C_\varepsilon (c+6)\lg n\}.
\end{equation}
Let
\[
\Flarge^{ph}=\lceil(1+\varepsilon)\sigma\rceil
\qquad\mbox{and}\qquad
\Flarge
=
\sigma+\lceil\lg (\Delta+1)\rceil\cdot\Flarge^{ph}.
\]
The main result of the section is that Algorithm~\ref{alg:log-compact} ($\LogCompact$)
computes, with probability at least $1-n^{-c}$ 
a valid coloring with at most
$\Delta+1$ colors in $O(\lg n\lg\Delta)$ $\CC$ rounds.  The lower frequency threshold remains $\sigma/2$, and the upper frequency threshold is
$\Flarge = O(\lg\Delta \cdot\sigma)$.

\subsection{Overview and Goal}
\label{subsec:log-cap-overview}

Algorithm~\ref{alg:log-compact} ($\LogCompact$) starts by applying the capped recoloring idea of Section~\ref{sec:hard-capped-open-reduction} in phases, to achieve $\numclr=\Delta+2$.  In one phase,
it removes the $\numclr_{remove}= \lfloor\numclr_{excess}/2\rfloor$ least frequent colors and recolors their vertices into the remaining target palette by calling
Algorithm~\ref{alg:budget-recolor} ($\BudgetRecolor$).

The call to $\BudgetRecolor$ enforces an \emph{individual phase} budget rather than a \emph{global} frequency cap.  At the beginning of the phase, each target color has counter $\Fadd(i)=0$; during the phase it may color at most $\Flarge^{ph}=\lceil(1+\varepsilon)\sigma\rceil$ new vertices.  A target color $i$
is \emph{open} in a recoloring iteration $t$ only if at the beginning of the iteration, $\Fadd(i) \le \Fcap= \lfloor(1+\varepsilon/2)\sigma\rfloor$. Note that if $i$ is open in iteration $t$, it might accept more than $\Fcap-\Fadd(i)$ new candidates, making it closed for the remaining iterations of the current phase, but the algorithm enforces that it will not ``overflow'' beyond the absolute threshold $\Flarge^{ph}$ in the current iteration.
We refer to $B_i = \Flarge^{ph} -\Fadd(i)$ as the budget of color $i$ for the remainder of the current phase.
The gap of $\Flarge^{ph}-\Fcap= \Omega(\varepsilon\sigma)$ unused colors is used in the concentration argument.

Since each phase removes a constant fraction of the current excess palette of $\numclr-(\Delta+1)$ colors, and $\numclr=O(\Delta)$,
there are $O(\lg\Delta)$ phases.  Each ``surviving'' color (that remains in $\cP$) receives only $O(\sigma)$ new vertices per phase. It follows that the upper frequency threshold is only $\Flarge = O(\lg\Delta \cdot\sigma)$. 

The algorithm concludes with applying one final step of $\Recolor$, discarding the smallest frequency color and ending with an optimal palette size of $\Delta+1$.

The formal descriptions of $\LogCompact$ and $\BudgetRecolor$ appear in Algorithms~\ref{alg:log-compact}
and~\ref{alg:budget-recolor}.

\begin{algorithm}[ht!]
\begin{center}
\begingroup
\begin{minipage}{\linewidth}
\footnotesize
\newcount\LogCompactLineNo
\LogCompactLineNo=0
\newcommand{\LogCompactLine}[2]{%
  \global\advance\LogCompactLineNo by 1%
  \par\noindent
  \makebox[1.8em][r]{\scriptsize\the\LogCompactLineNo}\hspace{0.45em}%
  \parbox[t]{\dimexpr\linewidth-2.25em\relax}{\raggedright\hangindent=#1\hangafter=1\noindent\hspace*{#1}#2}%
  \par}
\newcounter{BudgetLineNo}
\setcounter{BudgetLineNo}{0}
\newcommand{\BudgetLine}[2]{%
  \refstepcounter{BudgetLineNo}%
  \par\noindent
  \makebox[1.8em][r]{\scriptsize\theBudgetLineNo}\hspace{0.45em}%
  \parbox[t]{\dimexpr\linewidth-2.25em\relax}{\raggedright\hangindent=#1\hangafter=1\noindent\hspace*{#1}#2}%
  \par}
\begin{tabular}{@{}p{0.455\linewidth}@{\hspace{0.045\linewidth}}p{0.455\linewidth}@{}}
\begin{minipage}[t]{\linewidth}
\refstepcounter{algocf}
\label{alg:log-compact}
\hrule
\vspace{2pt}
\noindent{\bfseries Algorithm~\thealgocf:} $\LogCompact$ 
\vspace{2pt}
\hrule
\vspace{2pt}
\raggedright
\LogCompactLine{0pt}{\textbf{Input:} graph $G=(V,E)$ with maximum degree $\Delta$, constants $0<\varepsilon\le1$ and $c\ge1$.}
\LogCompactLine{0pt}{\textbf{Output:} a valid coloring with at most $\Delta+1$ colors and frequency range $[\sigma/2, 2\sigma+\lceil\lg(\Delta+1)\rceil\lceil(1+\varepsilon)\sigma\rceil]$.}
\LogCompactLine{0pt}{ $\clr\gets\ILDRC(G)$
{\scriptsize ($\numclr\le2(\Delta+1)$ and $\freq(i)\in[\sigma/2,\sigma]$)}}
\LogCompactLine{0pt}{\textbf{while} $\numclr>\Delta+2$ \textbf{do}}
\LogCompactLine{1.2em}{$\numclr_{excess}\gets\numclr-(\Delta+1)$.}
\LogCompactLine{1.2em}{$\numclr_{remove}\gets\lfloor\numclr_{excess}/2\rfloor$.}
\LogCompactLine{1.2em}{Sort colors by nondecreasing frequency: $\freq(i_1)\le\cdots\le\freq(i_{\numclr})$.}
\LogCompactLine{1.2em}{$\cP_{remove}\gets\{i_1,\ldots,i_{\numclr_{remove}}\}$.}
\LogCompactLine{1.2em}{$V_{target}\gets\bigcup_{i\in\cP_{remove}}V(i)$.}
\LogCompactLine{1.2em}{$\cP_{target}\gets\cP\setminus\cP_{remove}$.}
\LogCompactLine{1.2em}{Cancel the colors of all vertices in $V_{target}$.}
\LogCompactLine{1.2em}{$\Flarge^{ph}\gets\lceil(1+\varepsilon)\sigma\rceil$, $\Fcap\gets\lfloor(1+\varepsilon/2)\sigma\rfloor$, $p_0\gets\varepsilon^2/16$.}

\LogCompactLine{1.2em}{
Run $\BudgetRecolor$ on $G$, $\clr$, $V_{target}$, $\cP_{target}$ with parameters $\Flarge^{ph}$, $\Fcap$, $p_0$.}
\LogCompactLine{1.2em}{Set $\clr$ to the output of this call.}
\LogCompactLine{0pt}{\textbf{end while} ($\numclr=\Delta+2$)}
\LogCompactLine{0pt}{Sort current colors by nonincreasing frequency:
$\freq(i_1)\ge\cdots\ge\freq(i_{\numclr})$.}
\LogCompactLine{0pt}{$\cP_{target}\gets\{i_1,\ldots,i_{\Delta+1}\}$.}
\LogCompactLine{0pt}{$V_{target}\gets V_{i_\numclr}$.}
\LogCompactLine{0pt}{Cancel the color $i_\numclr$.}
\LogCompactLine{0pt}{$\clr \gets \Recolor(V_{target},\cP_{target})$}
\LogCompactLine{0pt}{\textbf{return} $\clr$.}

\end{minipage}
&
\begin{minipage}[t]{\linewidth}
\refstepcounter{algocf}
\label{alg:budget-recolor}
\hrule
\vspace{2pt}
\noindent{\bfseries Algorithm~\thealgocf:} $\BudgetRecolor$ 
\vspace{2pt}
\hrule
\vspace{2pt}
\raggedright
\BudgetLine{0pt}{\textbf{Input:} graph $G=(V,E)$, target vertices $V_{target}$, current partial coloring $\clr$, target palette $\cP_{target}$.}
\BudgetLine{0pt}{\textbf{Parameters:} thresholds $\Flarge^{ph}$ and $\Fcap$, activation probability $p_0$.}
\BudgetLine{0pt}{\textbf{Output:} a valid extension of $\clr$ to $V_{target}$.}
\BudgetLine{0pt}{For every $i\in\cP_{target}$ set $\Fadd(i)\gets0$.}
\BudgetLine{0pt}{\textbf{while} $V_{target}\neq\emptyset$ \textbf{do}}
\BudgetLine{1.2em}{$\cP_{open}\gets\{i\in\cP_{target}\mid \Fadd(i)\le \Fcap\}$.}
\BudgetLine{1.2em}{\textbf{for each} $v\in V_{target}$ in parallel \textbf{do}}
\BudgetLine{2.4em}{$\cP_{neig}(v)\gets\{\clr(u)\mid u\in N(v)\setminus V_{target}\}$.
\label{step: Pneig(v)}}
\BudgetLine{2.4em}{$\cP_{target}(v)\gets\cP_{target}\setminus\cP_{neig}(v)$.}
\BudgetLine{2.4em}{$\cP_{open}(v)\gets\cP_{target}(v)\cap\cP_{open}$.
\label{step: Popen(v)}}
\BudgetLine{2.4em}{Become active with probability $p_0$.}
\BudgetLine{2.4em}{\textbf{if} $v$ is active and $\cP_{open}(v)\neq\emptyset$ \textbf{then}}
\BudgetLine{3.6em}{Pick $c_v\in\cP_{open}(v)$ uniformly at random and broadcast $\textsf{Propose}(c_v)$.}
\BudgetLine{3.6em}{If no neighbor in $V_{target}$ proposes $c_v$, then $v$ becomes a candidate for color $c_v$.}
\BudgetLine{1.2em}{\textbf{for each color} $i\in\cP_{target}$ in parallel \textbf{do}}
\BudgetLine{2.4em}{Let $C_i$ be the set of candidates for $i$.}
\BudgetLine{2.4em}{Draw a fresh random ranking for the vertices of $C_i$.}
\BudgetLine{2.4em}{Let $B_i\gets\Flarge^{ph}-\Fadd(i)$.}
\BudgetLine{2.4em}{$C_i^{add}\gets$ the $\min\{|C_i|,B_i\}$ highest-ranked candidates.}
\BudgetLine{2.4em}{For every $v\in C_i^{add}$, set $\clr(v)\gets i$.}
\BudgetLine{2.4em}{$\Fadd(i)\gets\Fadd(i)+|C_i^{add}|$.}
\BudgetLine{1.2em}{$V_{target}\gets V_{target}\setminus\bigcup_{i\in\cP_{target}} C_i^{add}$.}
\BudgetLine{0pt}{\textbf{return} $\clr$.}
\end{minipage}
\end{tabular}
\vspace{2pt}
\end{minipage}
\endgroup
\end{center}
\end{algorithm}


\subsection{Analysis}
\label{subsec:log-cap-analysis}


\noindent 
The next lemma is the phase-budget analogue of Lemma~\ref{lem:open-cap-safety}; the proof is the same deterministic safety argument, with the per-phase counter $\Fadd(i)$ replacing the global frequency $\freq(i)$. 

\begin{lemma}
\label{lem:budget-recolor-safety}
\leavevmode\par\noindent
At every point in Algorithm~\ref{alg:budget-recolor} ($\BudgetRecolor$), the partial coloring is valid and every target color $i\in\cP_{target}$ satisfies $\Fadd(i)\le \Flarge^{ph}$.  Consequently, if the procedure terminates, it returns a valid extension of the input coloring and each target color receives at most $\Flarge^{ph}=\lceil(1+\varepsilon)\sigma\rceil$ new vertices during the phase.
\end{lemma}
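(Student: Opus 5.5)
The argument is an induction over the iterations of the \textbf{while} loop in Algorithm~\ref{alg:budget-recolor} ($\BudgetRecolor$), mirroring the proof of Lemma~\ref{lem:open-cap-safety}. The invariant has two parts. First, the partial coloring (the input coloring on $V\setminus V_{target}$ together with all colors assigned so far) is valid. Second, $\Fadd(i)\le\Flarge^{ph}$ for every $i\in\cP_{target}$. At the start of the call, $\Fadd(i)=0$ for all target colors. The input partial coloring is valid because it is the restriction of the valid coloring returned by the previous phase (or by $\ILDRC$) with the removed colors cancelled, and uncoloring vertices cannot create a conflict.

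\emph{Validity.} Fix an iteration and a vertex $v$ that ends up in some $C_i^{add}$. By Steps~\ref{step: Pneig(v)}--\ref{step: Popen(v)}, $v$ proposed a color $c_v=i\in\cP_{target}(v)$, so no neighbor colored at the beginning of the iteration uses $i$. It remains to check neighbors colored during the same iteration. Such a neighbor $u$ was in $V_{target}$ at the start of the iteration. If $u$ also received $i$, then $u$ proposed $i$. Then $v$ saw a neighbor in $V_{target}$ proposing $c_v$, so $v$ would not have become a candidate, a contradiction. Each vertex proposes a single color, so it is admitted to at most one $C_i^{add}$ and gets a well-defined color. Hence validity is preserved.

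\emph{Budget.} For each $i$, the algorithm admits $|C_i^{add}|=\min\{|C_i|,B_i\}\le B_i=\Flarge^{ph}-\Fadd(i)$ new vertices. By the inductive hypothesis $B_i\ge0$, so the update gives $\Fadd(i)+|C_i^{add}|\le\Flarge^{ph}$. Between updates, $\Fadd(i)$ changes only in this step, so the bound holds at every point. Note that a color may be open (its counter at most $\Fcap$) and still take up to its full budget $B_i$ in one iteration. The cap is nevertheless enforced by the $\min$, and this is the only place the per-phase counter matters.

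Upon termination, $V_{target}=\emptyset$, so every target vertex is colored, and the invariant gives a valid extension. Since $\Fadd(i)$ counts exactly the vertices newly assigned to $i$ during the phase, each target color receives at most $\Flarge^{ph}=\lceil(1+\varepsilon)\sigma\rceil$ new vertices. The only delicate point is the same-iteration conflict argument, which is exactly the one used in Lemma~\ref{lem:open-cap-safety}. No probabilistic reasoning is needed; termination itself is not claimed here.
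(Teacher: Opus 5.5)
Your proof is correct and follows essentially the same route as the paper. Validity is preserved because proposals avoid the colors of already-colored neighbors and adjacent same-color proposers both drop out. The bound $\Fadd(i)\le\Flarge^{ph}$ holds because at most $\Flarge^{ph}-\Fadd(i)$ candidates are admitted per iteration, with induction over the iterations. Your additions only spell out what the paper leaves implicit: initial validity after cancelling, $B_i\ge 0$ from the inductive hypothesis, and each vertex joining at most one $C_i^{add}$.
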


\begin{proof}
A vertex proposes only a color not used by any already colored neighbor. If two adjacent uncolored vertices propose the same color, then neither of them becomes a candidate.
Hence, accepting a candidate preserves validity.

For the frequency threshold, fix a color $i$ and an iteration of Algorithm~\ref{alg:budget-recolor} ($\BudgetRecolor$).  The algorithm allows at most
$\Flarge^{ph}-\Fadd(i)$ candidates to use the color $i$, so after the iteration, $\Fadd(i)\le \Flarge^{ph}$.  The claim follows by induction over the iterations.
\end{proof}

\begin{lemma}
\label{lem:log-cap-dropped-volume}
In a phase of Algorithm~\ref{alg:log-compact} ($\LogCompact$), let
$\numclr_{remove}=\lfloor\numclr_{excess}/2\rfloor$.Then
$|V_{target}| = \tfreq(\cP_{remove}) \le \numclr_{remove}\cdot\sigma$.
\end{lemma}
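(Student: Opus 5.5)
The equality $|V_{target}|=\tfreq(\cP_{remove})$ holds by construction, since $V_{target}$ is defined as $\bigcup_{i\in\cP_{remove}}V(i)$ and the color classes are disjoint. The real content is the inequality. I will use an averaging argument rather than any per-color frequency bound. The reason is that after the first phase the surviving colors may have gained up to $\Flarge^{ph}$ vertices each (Lemma~\ref{lem:budget-recolor-safety}), so we can no longer assume $\freq(i)\le\sigma$ for every color, as the $\ILDRC$ output guaranteed initially.

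The key step is this. At the start of the phase the current palette has $\numclr$ colors whose frequencies sum to $n$. $\cP_{remove}$ consists of the $\numclr_{remove}$ colors of smallest frequency, taken from the sorted order $\freq(i_1)\le\cdots\le\freq(i_{\numclr})$. The average frequency over the $\numclr_{remove}$ smallest elements of a multiset is at most the average over the whole multiset. I would justify this with the standard observation that each of the first $\numclr_{remove}$ elements is at most each of the remaining ones. Hence
\[
\tfreq(\cP_{remove}) \;\le\; \numclr_{remove}\cdot\frac{n}{\numclr}.
\]

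It remains to compare $n/\numclr$ with $\sigma$. The phase is executed only while the loop condition $\numclr>\Delta+2$ holds, so $\numclr\ge\Delta+1$. Therefore $n/\numclr\le n/(\Delta+1)=\sigma$, which gives $\tfreq(\cP_{remove})\le\numclr_{remove}\cdot\sigma$, as claimed.

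I do not expect a real obstacle here. The only point to be careful about is not invoking the initial $[\sigma/2,\sigma]$ guarantee of $\ILDRC$, which fails in later phases. The averaging argument sidesteps this, because it uses only that the palette size exceeds $\Delta+1$ and that the removed colors are the least frequent ones. One could also note that the recoloring preserves $\sum_i\freq(i)=n$ (every vertex remains colored after $\BudgetRecolor$ terminates), which is what makes the average $n/\numclr$ well defined at the start of every phase.
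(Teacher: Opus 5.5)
Your proposal is correct and matches the paper's proof. Both use the averaging argument: the $\numclr_{remove}$ least frequent colors have average frequency at most the global average $n/\numclr\le n/(\Delta+1)=\sigma$, and your remark that the per-color bound inherited from $\ILDRC$ no longer holds in later phases is a useful clarification that the paper leaves implicit.
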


\begin{proof}
The average frequency over the current palette satisfies $n/\numclr\le n/(\Delta+1)=\sigma$.  Since $\cP_{remove}$ consists of the
$\numclr_{remove}$ least frequent colors, its average frequency is at most the
global average.  Therefore $\tfreq(\cP_{remove})\le \numclr_{remove}\cdot\sigma$.
\end{proof}


\noindent The following lemma uses the same open/closed palette counting template as Lemma~\ref{lem:many-open-choices}; here the phase parameter $s$ and the counters $\Fadd(i)$ replace the fixed target-palette parameters of Section~\ref{sec:hard-capped-open-reduction}.
For an execution of Procedure $\BudgetRecolor$, let $V_{target}^t$ be the set $V_{target}$ at the beginning of iteration $t$, and let $\du^t(v)$ be the number of neighbors of $v$ that are still uncolored at the beginning of iteration $t$.
Let $\cP_{neig}^t(v)$ and $\cP_{open}^t(v)$ be the sets $\cP_{neig}(v)$ and $\cP_{open}(v)$ constructed for the uncolored vertex $v$ in Steps {\ref{step: Pneig(v)}} and {\ref{step: Popen(v)}} of iteration $t$, respectively.
Let $\cP_{closed}^t =\cP_{target} \setminus \cP_{open}^t$.

\begin{lemma}
\label{lem:budget-open-choices}
Consider a call to Algorithm~\ref{alg:budget-recolor} ($\BudgetRecolor$) with $\numclr_{target}=\Delta+1+s$ and $|V_{target}|\le s\sigma$ for some integer $s\ge1$.
Assume Condition \eqref{eq:sigma-condition-log algorithm} holds.  In every iteration $t$, $\cP_{open}^t(v)$ 
satisfies $\numclr_{open}^t(v) \ge \du^t(v)+1+\varepsilon s/4$. 
\end{lemma}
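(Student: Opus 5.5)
The plan is a counting argument in the spirit of Lemma~\ref{lem:many-open-choices}. We lower bound $\numclr_{open}^t(v)$ as the size of the target palette, minus the colors blocked by colored neighbors, minus the closed colors. The improvement over the earlier lemma is to account for the neighbor-blocked colors more carefully, charging only the \emph{colored} neighbors of $v$.

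\textbf{Blocked colors.} Fix an iteration $t$ and an uncolored vertex $v\in V_{target}^t$. The set $\cP_{neig}^t(v)$ consists of colors of neighbors of $v$ lying outside $V_{target}^t$. Since $v$ has at most $\Delta$ neighbors, of which $\du^t(v)$ are still uncolored, we get $\numclr_{neig}^t(v)\le \Delta-\du^t(v)$. Hence
\[
\numclr_{target}(v)\ \ge\ \Delta+1+s-(\Delta-\du^t(v))\ =\ \du^t(v)+1+s .
\]
The same bound holds if some neighbors were never in $V_{target}$, since each colored neighbor contributes at most one color.

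\textbf{Closed colors.} A color $i$ is in $\cP_{closed}^t$ iff $\Fadd(i)>\Fcap=\lfloor(1+\varepsilon/2)\sigma\rfloor$. By integrality, $\Fadd(i)\ge \Fcap+1>(1+\varepsilon/2)\sigma$. Every vertex counted in some $\Fadd(i)$ belongs to the original $V_{target}$ of the phase, and each is counted once. So $\sum_i \Fadd(i)\le |V_{target}|\le s\sigma$, and therefore
\[
\numclr_{closed}^t\ <\ \frac{s\sigma}{(1+\varepsilon/2)\sigma}\ =\ \frac{s}{1+\varepsilon/2}.
\]
Using $1/(1+x)\le 1-x+x^2$ with $x=\varepsilon/2$, this gives $\numclr_{closed}^t\le s(1-\varepsilon/2+\varepsilon^2/4)\le s(1-\varepsilon/4)$, where the last step uses $\varepsilon\le1$ from Condition~\eqref{eq:sigma-condition-log algorithm}.

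\textbf{Combining.} Since $\cP_{open}^t(v)\supseteq \cP_{target}(v)\setminus\cP_{closed}^t$,
\[
\numclr_{open}^t(v)\ \ge\ \du^t(v)+1+s-s(1-\varepsilon/4)\ =\ \du^t(v)+1+\varepsilon s/4 .
\]

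The only delicate point is the closed-color count. It must use the per-phase counters $\Fadd$ rather than global frequencies, and it relies on the fact that $\Fadd$ only counts vertices from the current phase's $V_{target}$. Both facts hold by construction of $\BudgetRecolor$, and the safety guarantee of Lemma~\ref{lem:budget-recolor-safety} is not even needed for this step. The remaining parts of Condition~\eqref{eq:sigma-condition-log algorithm}, such as $\sigma\ge 6/\varepsilon$, are not needed here and are presumably reserved for the concentration argument that follows.
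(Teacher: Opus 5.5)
Your proof is correct and follows the paper's argument: you bound the colors blocked by colored neighbors by $\Delta-\du^t(v)$, bound the closed colors by charging their per-phase counters $\Fadd(i)$ to the at most $s\sigma$ vertices of the phase's original $V_{target}$, and subtract. The one difference is in the closed-color threshold. You use integrality to get $\Fadd(i)\ge\Fcap+1>(1+\varepsilon/2)\sigma$, whereas the paper uses the hypothesis $\sigma\ge 6/\varepsilon$ to get $\Fcap\ge(1+\varepsilon/3)\sigma$, so your version is slightly cleaner and, as you note, does not need that hypothesis.
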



\begin{proof}
At any iteration $t$, the set $\cP_{neig}^t(v)$ has size 
$\numclr_{neig}^t(v) \le \Delta-\du^t(v)$, so by Eq. {\eqref{eq: def Ptarget}},
\begin{equation}
\label{eq: Xtarget lb alg log}
\numclr_{target}^t(v)
\ge \numclr_{target} - \numclr_{neig}^t(v)
\ge
(\Delta+1+s)-(\Delta-\du^t(v))
=
\du^t(v)+s+1 .
\end{equation}
A color $i$ that is closed at the beginning of iteration $t$ satisfies
$\Fadd(i)>\Fcap$.  Since
$\Fcap=\lfloor(1+\varepsilon/2)\sigma\rfloor$ and $\sigma\ge6/\varepsilon$ by Eq. {\eqref{eq:sigma-condition-log algorithm}}, we have
$\Fcap\ge(1+\varepsilon/3)\sigma$, so at the beginning of iteration $t$, each closed color $i$ has $\Fadd(i)>(1+\varepsilon/3)\sigma$.  

Note that the vertices that were colored during the phase all come from $V_{target}^1$, so at the beginning of iteration $t$,
$$
\mbox{$\numclr_{closed}^t\cdot (1+\varepsilon/3)\sigma < \sum_{i\in\cP_{closed}^t} \Fadd(i) \le |V_{target}^1| \le s\sigma$.}
$$
Hence, the number of currently closed colors satisfies
\begin{equation}
\label{eq: Xclosed ub alg log}
\numclr_{closed}^t \le \frac{s\sigma}{(1+\varepsilon/3)\sigma} \le \left(1-\frac{\varepsilon}{4}\right)s.
\end{equation}
The last inequality follows since
$(1-\varepsilon/4)(1+\varepsilon/3)
=1+\varepsilon(1-\varepsilon)/12\ge1$ for $0<\varepsilon\le1$.
By Eq. {\eqref{eq: Xtarget lb alg log}} and {\eqref{eq: Xclosed ub alg log}},
$$\numclr_{open}^t(v) \ge \numclr_{target}(v)-\numclr_{closed}^t \ge \du^t(v)+1+ \varepsilon s/4.
\qedhere $$
\end{proof}


\noindent The conditioning convention below is the same as the one used before Lemma~\ref{lem:open-cap-progress}; only the phase-specific bad events and the resulting one-iteration progress estimate change.

For the next lemma, fix one call to Algorithm~\ref{alg:budget-recolor} ($\BudgetRecolor$).  For a recoloring iteration $t$, let
$\HIST_t$ denote the history of the execution up to the beginning of iteration
$t$.  Let $E_{\mathrm{GoodHist}}(t)$ be the event that no bad event occurred
in iterations $1,\ldots,t-1$, where a bad event means that (a) the partial
coloring became invalid, (b) some target color $i$ satisfied
$\Fadd(i)>\Flarge^{ph}$, or (c) some uncolored vertex $v$ had fewer than
$\du^r(v)+1+\varepsilon s/4$ open locally valid colors at the beginning of some earlier iteration $r<t$.

We observe that any history $\HIST_t$ reached by Algorithm~\ref{alg:budget-recolor} ($\BudgetRecolor$) by the end of iteration $t-1$ satisfies $E_{\mathrm{GoodHist}}(t)$.
This is verified by induction on $t$.  For $t=1$, the input partial coloring is valid, all counters
$\Fadd(i)$ are zero, and Lemma~\ref{lem:budget-open-choices} gives the open-choice bound.  For the inductive step, Lemma~\ref{lem:budget-recolor-safety}
preserves validity and the bound $\Fadd(i)\le \Flarge^{ph}$, and
Lemma~\ref{lem:budget-open-choices} gives the open-choice bound for the next
history.

\noindent In the next proof, we condition on a fixed history $\HIST_t$ satisfying
$E_{\mathrm{GoodHist}}(t)$.  Under this conditioning, the partial coloring,
the uncolored set, the counters $\Fadd(i)$, and all palettes are fixed;
the only remaining randomness is the fresh random choices made in iteration $t$.
For readability, we omit this conditioning from the notation and write simply
$\mathbb P[E']$ and $\mathbb E[X]$ instead of
$\mathbb P[E'\mid E_{\mathrm{GoodHist}}(t)]$ and
$\mathbb E[X\mid E_{\mathrm{GoodHist}}(t)]$.

\begin{lemma}
\label{lem:budget-recolor}
Fix $0<\varepsilon\le1$, $C_\varepsilon=100/\varepsilon$ and
$c\ge 1$ satisfying Condition Eq. {\eqref{eq:sigma-condition-log algorithm}}.
\leavevmode\par\noindent
Suppose Algorithm~\ref{alg:budget-recolor} ($\BudgetRecolor$) is called with $\numclr_{target}=\Delta+1+s$ for an integer $s\ge1$, $|V_{target}|\le s\sigma$, $\Flarge^{ph}= \lceil(1+\varepsilon)\sigma\rceil$, $\Fcap=\lfloor(1+\varepsilon/2)\sigma\rfloor$, and $p_0=\varepsilon^2/16$.  Then, with probability at least
$1-n^{-(c+2)}$, Algorithm~\ref{alg:budget-recolor} ($\BudgetRecolor$) terminates in $O(\varepsilon^{-2}(c+4)\lg n)$ iterations, it returns a valid coloring for $G$ and each color $i\in \cP_{target}$ receives at most $\Flarge^{ph}$ new vertices during the phase.
\end{lemma}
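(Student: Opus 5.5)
Safety (validity and $\Fadd(i)\le\Flarge^{ph}$) is deterministic and follows directly from Lemma~\ref{lem:budget-recolor-safety}, so the work is in the iteration bound. The plan mirrors Lemma~\ref{lem:open-cap-progress} and Theorem~\ref{thm:hard-cap-parameterized}. Show that, conditioned on any history $\HIST_t$ (which satisfies $E_{\mathrm{GoodHist}}(t)$ by the induction preceding the lemma), every still-uncolored vertex $v$ becomes colored in iteration $t$ with probability at least $p_0/4$. Iterating this over $T=\lceil 4(c+3)\ln n/p_0\rceil=O(\varepsilon^{-2}(c+4)\lg n)$ iterations gives a failure probability of at most $(1-p_0/4)^T\le n^{-(c+3)}$ per vertex. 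A union bound over the $n$ vertices then yields termination within $T$ iterations with probability at least $1-n^{-(c+2)}$.

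For the one-iteration bound, fix an uncolored $v$ and condition on $E_{v\to c}$ for some $c\in\cP^t_{open}(v)$. By Lemma~\ref{lem:budget-open-choices}, each uncolored vertex $u$ has $\numclr^t_{open}(u)\ge \du^t(u)+1+\varepsilon s/4$. The conflict step has two parts.

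\textbf{Neighbor conflicts.} Each uncolored neighbor $u$ of $v$ proposes $c$ with probability at most $p_0/\numclr^t_{open}(u)\le p_0$. Summing over the $\du^t(v)$ uncolored neighbors is not directly enough if $\du^t(v)$ is large. I would first try the crude bound $p_0/(\varepsilon s/4)\le 4p_0/\varepsilon=\varepsilon/4$ per neighbor. If that fails to sum to at most $1/2$, I would fall back to the same $\Delta\cdot p_0/A$-style argument as Lemma~\ref{lem:open-cap-progress}, using $\numclr^t_{open}(u)\ge \varepsilon s/4$ together with $\du^t(v)\le\Delta$. Either way the target is a conflict probability of at most $1/2$; I expect this step to need the $\du$ term in the open-choice bound, so possibly $v$'s own bound has to be used symmetrically.

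\textbf{Capacity overflow.} Let $Y$ be the number of non-neighbors proposing $c$. Each of the at most $|V_{target}|\le s\sigma$ uncolored vertices proposes $c$ with probability at most $p_0/(\varepsilon s/4)$, so $\mathbb E[Y]\le s\sigma\cdot 4p_0/(\varepsilon s)=\varepsilon\sigma/4$. Since $c$ is open, its residual budget is at least $\Flarge^{ph}-\Fcap\ge \varepsilon\sigma/2$, and Markov's inequality gives $\Pr[Y\ge \varepsilon\sigma/2]\le 1/2$. This is exactly where the choice $p_0=\varepsilon^2/16$ is calibrated.

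\textbf{Combining.} Non-neighbor choices are independent of neighbor choices, so with probability at least $1/4$ the vertex $v$ is a candidate and all candidates for $c$ fit within the budget. Hence $v$ is accepted. Averaging over $c$ and multiplying by the activation probability gives the bound $p_0/4$. The $\sigma\ge C_\varepsilon(c+6)\lg n$ assumption is not needed for this Markov route; it would matter only if a Chernoff bound were substituted to sharpen constants.

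The main obstacle is the neighbor-conflict bound. The open-choice guarantee supplies only $\varepsilon s/4$ slack beyond $\du^t(v)+1$, so a naive union bound over many uncolored neighbors with only constant-size slack is delicate. I would handle it by bounding each neighbor's proposal probability by $p_0/(\du^t(u)+1)$ and, if needed, weakening the claim to apply only to vertices whose uncolored degree is controlled.
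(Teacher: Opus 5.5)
\textbf{Gap: the neighbor-conflict step.} You flag this as the main obstacle, but you do not resolve it, and both concrete routes you offer fail. Beyond $\du^t(u)+1$, the only slack Lemma~\ref{lem:budget-open-choices} gives is $\varepsilon s/4$. Here $s$ can be $1$ (when $\numclr_{excess}=2$), while $\du^t(v)$ can be $\Theta(\Delta)$. Summing the per-neighbor bound $4p_0/(\varepsilon s)=\varepsilon/(4s)$ over $\du^t(v)$ neighbors, or over $\Delta$ neighbors as in Lemma~\ref{lem:open-cap-progress}, therefore gives $\Theta(\varepsilon\Delta)$, not $1/2$. The analogy with Lemma~\ref{lem:open-cap-progress} breaks because there $A=\delta(\Delta+1)+1$ was linear in $\Delta$.

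In fact, no bound on $\mathbb P[\text{conflict}\mid E_{v\to c}]$ that is uniform in $c$ can hold. With $s=1$, a fixed color $c$ may lie in the open palettes of $\Theta(\Delta)$ neighbors $u$ of $v$, each with $\du^t(u)=1$ and only three open colors. Then, for $\Delta\gg 1/p_0$, some neighbor proposes $c$ with probability close to $1$. Bounding by $p_0/(\du^t(u)+1)$ uses the neighbor's degree and fails on the same example. Restricting the claim to vertices of controlled uncolored degree would not show that all of $V_{target}$ gets colored, which is what the lemma asserts.

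\textbf{The missing idea.} Do not fix $c$; average over $v$'s own uniform choice \emph{before} bounding. Whatever an active neighbor $u$ proposes, $v$ picks the same color with probability at most $1/\numclr^t_{open}(v)$. Hence
$\mathbb P[\text{conflict}\mid v\text{ active}]\le p_0\,\du^t(v)/\numclr^t_{open}(v)\le p_0\,\du^t(v)/(\du^t(v)+1)\le p_0$.
This is exactly what the $\du^t(v)+1$ term in Lemma~\ref{lem:budget-open-choices} is for, and it is how the paper argues.

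With this fix your plan goes through. Your Markov bound $\mathbb P[Y_c<B_c]\ge 1/2$ is uniform in $c$ and independent of the neighbors' choices. So the success probability given activation is at least $\frac12\cdot\frac{1}{\numclr^t_{open}(v)}\sum_{c}\mathbb P[\text{no neighbor proposes }c]\ge\frac12(1-p_0)$, which gives $p_0/4$ per iteration.

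\textbf{Comparison with the paper.} The repaired route differs from the paper's. The paper controls capacity globally: it uses a Chernoff bound plus a union bound over colors and iterations (the event $E_{\mathrm{Safe}}$), which is where $\sigma\ge C_\varepsilon(c+6)\lg n$ enters, and then gets per-iteration success $p_0(1-p_0)$. Your per-vertex Markov argument, as you note, avoids the $\lg n$ condition on $\sigma$; only $\sigma\ge 6/\varepsilon$ is needed, for Lemma~\ref{lem:budget-open-choices}. The cost is a factor of $2$ in the per-iteration success probability.
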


\begin{proof}
Let $T_0=\left\lceil 64\varepsilon^{-2}(c+4)\lg n\right\rceil$. 
We prove that, with probability at least $1-n^{-(c+2)}$, all vertices of
$V_{target}$ are colored during the first $T_0$ iterations.  The validity of the
coloring and the bound on the number of vertices added to each target color then
follow from Lemma~\ref{lem:budget-recolor-safety}.

We first control the event that too many vertices propose the same open color in
one iteration.  Fix an iteration $t\le T_0$, condition on a history
$\HIST_t$ satisfying $E_{\mathrm{GoodHist}}(t)$, and fix an open color
$i\in\cP_{open}$.  Let $Y_{i,t}$ be the number of active vertices that propose
color $i$ in iteration $t$, before conflict resolution.  By
Lemma~\ref{lem:budget-open-choices}, every vertex that can propose $i$ has at
least $1+\varepsilon s/4$ open locally valid choices, and therefore proposes
$i$ with probability at most $p_0/(1+\varepsilon s/4)$.  Since
$|V_{target}|\le s\sigma$,
\[
\mu:=\mathbb E[Y_{i,t}]
\le
\frac{p_0}{1+\varepsilon s/4}\cdot |V_{target}|
\le
\frac{p_0s\sigma}{1+\varepsilon s/4}
\le
\frac{4p_0}{\varepsilon}\sigma
=
\frac{\varepsilon}{4}\sigma .
\]
On the other hand, since $i$ is open, $\Fadd(i)\le\Fcap$, and hence its remaining
budget satisfies
\[
B_i=\Flarge^{ph}-\Fadd(i)
\ge
\Flarge^{ph}-\Fcap
\ge
\frac{\varepsilon}{2}\sigma .
\]
Thus $B_i\ge 2\mu$.  If $\mu=0$, then $Y_{i,t}=0$ deterministically.  Otherwise,
applying the Chernoff bound to the threshold $B_i$ gives
\begin{equation}
\label{eq: Yit probability}
\mathbb P[Y_{i,t}>B_i]
\le
\exp(-\varepsilon\sigma/12)
\le
n^{-(c+8)},
\end{equation}
where the last inequality follows from
Condition~\eqref{eq:sigma-condition-log algorithm} and the choice
$C_\varepsilon=100/\varepsilon$.

Let $E_{\mathrm{Safe}}$ be the event that, in every one of the first $T_0$
iterations and for every open color $i$, the number of proposals to $i$ is at
most its remaining budget $B_i$.  We now apply the union bound to the estimate of Eq. \eqref{eq: Yit probability} over
all colors and all iterations.  In the calls made by Algorithm~\ref{alg:log-compact},
the number of colors is at most $2(\Delta+1)\le2n$.  Also, since
$\sigma\le n$, Condition~\eqref{eq:sigma-condition-log algorithm} implies
$\varepsilon^{-1}\le n/6$ and $\varepsilon^{-1}(c+6)\lg n\le n/100$.
Consequently, $T_0=\lceil64\varepsilon^{-2}(c+4)\lg n\rceil\le n^2$ for $n\ge2$.  Hence
\[
\mathbb P[\overline{E_{\mathrm{Safe}}}]
\le
2n\cdot n^2\cdot n^{-(c+8)}
\le
n^{-(c+4)} .
\]
Next, to prove progress, we bound the probability that an uncolored vertex $v$ becomes a conflict-free candidate in a given iteration.
Fix a vertex $v$ that is still uncolored at the
beginning of some iteration $t\le T_0$, and condition on the history up to that
iteration.  By Lemma~\ref{lem:budget-open-choices}, $v$ has at least
$\du^t(v)+1$ open locally valid colors.  If $v$ becomes active, it chooses one
of these colors uniformly.  For any uncolored neighbor $u$ of $v$, the probability that $u$ proposes the same color as $v$ is at most
$p_0/\numclr_{open}^t(v)$.  Taking a union bound over the $\du^t(v)$ uncolored
neighbors of $v$, the probability that some such neighbor proposes the same color is at most
\[
\frac{p_0}{\numclr_{open}^t(v)} \cdot \du^t(v)
\le
\frac{\du^t(v)}{\du^t(v)+1} \cdot p_0
\le
p_0 .
\]
Therefore, in every iteration $t$ in which $v$ is still uncolored, the probability of the event $E_{success}(v,t)$ that $v$ becomes a conflict-free candidate 
satisfies
\begin{equation}
\label{eq: prob success}
\mathbb{P}[E_{success}(v,t)] ~\ge~ p_0(1-p_0) ~\ge~ p_0/2.
\end{equation}
On the event $E_{\mathrm{Safe}}$, every conflict-free candidate is accepted:
indeed, for each color $i$, the total number of proposals to $i$ is at most
$B_i$, so the number of candidates for $i$ is also at most $B_i$, and the quota
step admits all of them.  Hence, if $E_{\mathrm{Safe}}$ holds and a vertex $v$ ever becomes a conflict-free candidate during the first $T_0$ iterations, then $v$ is colored.

It remains to bound the probability of the event $E_{never}(v) \equiv \bigcap_{t=1}^{T_0} {\bar E}_{success}(v,t)]$ that the vertex $v$ \emph{never} becomes a conflict-free candidate.  The 
lower bound of Eq. \eqref{eq: prob success} holds conditionally on every history in which the vertex is still uncolored, so by multiplying the conditional failure probabilities,
\[
\mathbb P[E_{never}(v)
]
\le
(1-p_0/2)^{T_0}
\le
\exp(-p_0T_0/2)
\le
n^{-2(c+4)} .
\]
Taking a union bound over all vertices gives probability at most
$n\cdot n^{-2(c+4)}\le n^{-(c+3)}$ that some vertex remains uncolored after
$T_0$ iterations while $E_{\mathrm{Safe}}$ holds.  Combining this with
$\mathbb P[\overline{E_{\mathrm{Safe}}}]\le n^{-(c+4)}$, the total failure
probability is at most $n^{-(c+2)}$ for $n\ge2$.

Thus, with probability at least $1-n^{-(c+2)}$, Algorithm $\BudgetRecolor$
terminates within $T_0=O(\varepsilon^{-2}(c+4)\lg n)$ iterations.  By
Lemma~\ref{lem:budget-recolor-safety}, the returned coloring is valid and every
target color receives at most $\Flarge^{ph}$ new vertices.
\end{proof}

\begin{lemma}
\label{lem:logcompact-final-cleanup}
Let $\Fpre = \sigma+\lceil\lg(\Delta+1)\rceil\lceil(1+\varepsilon)\sigma\rceil.$ Suppose that after the while loop of Algorithm~\ref{alg:log-compact} ($\LogCompact$), the current coloring is valid, has 
$\Delta+2$ colors, and every color has frequency in $[\sigma/2,\Fpre]$.
Then the final cleanup step of Algorithm~\ref{alg:log-compact}, if its call to $\Recolor$ succeeds, returns a valid coloring with 
$\Delta+1$ colors and frequency range
$[\sigma/2, \Fpre+\sigma].$
\end{lemma}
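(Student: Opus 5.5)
The cleanup step sorts the $\Delta+2$ colors by nonincreasing frequency, keeps the top $\Delta+1$ as $\cP_{target}$, uncolors the class $V(i_{\numclr})$ of the least frequent color, and calls $\Recolor(V_{target},\cP_{target})$. The proof has three parts: validity and palette size, the lower frequency bound, and the upper frequency bound.

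\textbf{Validity and palette size.} Since $|\cP_{target}|=\Delta+1$, Lemma~\ref{lemma:partial_extension} applies to the valid partial coloring obtained by uncoloring $V(i_{\numclr})$. Every vertex of $V_{target}$ has at most $\Delta$ colored neighbors, so a free color always exists. Hence $\Recolor$ (assumed to succeed) returns a valid coloring using only colors of $\cP_{target}$, i.e., exactly $\Delta+1$ colors. Each of these colors is nonempty, since its frequency is at least $\sigma/2>0$.

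\textbf{Lower bound.} Each retained color had frequency at least $\sigma/2$ before cleanup, by hypothesis. During $\Recolor$ it only absorbs vertices of $V_{target}$ and loses none. So every final frequency is still at least $\sigma/2$.

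\textbf{Upper bound.} Each retained color $i$ had $\freq(i)\le\Fpre$ before cleanup. It can gain at most $|V_{target}|=\freq(i_{\numclr})$ new vertices. I will bound $\freq(i_{\numclr})$ by an averaging argument. Color $i_{\numclr}$ is the least frequent of the $\Delta+2$ colors, and the frequencies sum to $n$, so
\[
\freq(i_{\numclr})\le \frac{n}{\Delta+2}<\frac{n}{\Delta+1}=\sigma .
\]
Therefore every final frequency satisfies $\freq(i)\le\Fpre+\sigma$, with the usual rounding convention for non-integral thresholds. The fact that every color of $\cP_{target}$ had frequency at most $\Fpre$ comes directly from the hypothesis. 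Given the stated hypotheses, I expect no step to be a real obstacle. The only care needed is in the averaging bound, where the average must be taken over the full $\Delta+2$-color palette, together with the harmless rounding of the endpoints.
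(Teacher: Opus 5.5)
Your proposal is correct and follows the paper's proof essentially step for step. Both use the averaging bound $|V_{target}|\le n/(\Delta+2)<\sigma$, Lemma~\ref{lemma:partial_extension} for validity with the $\Delta+1$ target colors, the fact that retained colors only gain vertices for the lower bound, and a per-color gain of at most $|V_{target}|$ for the upper bound $\Fpre+\sigma$. Your observation that the final palette has exactly $\Delta+1$ colors, rather than at most $\Delta+1$, is a harmless sharpening.
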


\begin{proof}
After the while loop terminates, Algorithm~\ref{alg:log-compact} keeps the $\Delta+1$ most frequent colors as $\cP_{target}$ and cancels 
the least frequent color, $i_\numclr$, where $\numclr=\Delta+2$.  The set of canceled vertices is $V_{target} = V(i_\numclr)$.

Since the frequency of the least frequent 
color 
is at most the average 
frequency,
\[
|V_{target}| \le \frac{n}{\Delta+2} < \frac{n}{\Delta+1}=\sigma .
\]

The target palette has size $\Delta+1$, so Lemma~\ref{lemma:partial_extension} applies to the partial coloring obtained after canceling $V_{target}$.  Hence the call to $\Recolor(V_{target},\cP_{target})$ returns a valid coloring using only colors of $\cP_{target}$.  Therefore the final palette size is at most $\Delta+1$.

Turning to the resulting frequencies,
every color in $\cP_{target}$ already existed before the cleanup and had frequency at least $\sigma/2$; during the cleanup it could only gain vertices. Hence it still preserves the lower frequency threshold $\sigma/2$. 
The coloring before the cleanup obeyed
the upper frequency threshold 
$\Fpre$. During the cleanup, any fixed target color can gain at most all vertices of $V_{target}$, and $|V_{target}|\le\sigma$.  Therefore the output colors obey the upper frequency threshold $\Fpre+\sigma =
2\sigma+\lceil\lg(\Delta+1)\rceil\lceil(1+\varepsilon)\sigma\rceil$.
The lemma follows.
\end{proof}

For the sake of the time analysis of Section~\ref{sec:congest-model} we note the following:

\begin{fact}
\label{lem:logcompact-runtime-interface}
Algorithm $\LogCompact$ begins with one call to $\ILDRC$. The excess palette size drops by a factor of two in each phase, so the algorithm has $O(\lg\Delta)$ phases. Lemma~\ref{lem:budget-recolor} gives $O(\varepsilon^{-2}(c+4)\lg n)$ iterations of $\BudgetRecolor$ per phase. Each such iteration uses $\CA$ for counters and open-color information, and $\QA$ for quota admission. Then, the algorithm calls once to $\Recolor$. When $c, \varepsilon$ are fixed admissible constants, $\varepsilon^{-2}(c+4)=\Omega(1)$ and this factor disappears.
\end{fact}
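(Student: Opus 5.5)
The plan is to verify the four accounting claims of the fact in order: the single initial call to $\ILDRC$, the bound on the number of phases, the per-phase iteration bound, and the final call to $\Recolor$. The first and last are read off the code of Algorithm~\ref{alg:log-compact} ($\LogCompact$). The per-iteration primitive usage is also read off the code of Algorithm~\ref{alg:budget-recolor} ($\BudgetRecolor$). Recomputing $\cP_{open}$ and the counters $\Fadd(i)$ is one $\CA$ step, and the admission of $\min\{|C_i|,B_i\}$ highest-ranked candidates is exactly one $\QA$ invocation. The sorting and selection of $\cP_{remove}$ at the start of each phase is one further $\CA$ step per phase.

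For the phase count, I track the excess $e=\numclr-(\Delta+1)$ at the start of a phase. A phase removes $\lfloor e/2\rfloor$ colors. By Lemma~\ref{lem:budget-recolor-safety}, the target colors only gain vertices and are never dissolved, so the palette after the phase is exactly $\cP_{target}$. Hence the new excess is $s=e-\lfloor e/2\rfloor=\lceil e/2\rceil$. The loop runs only while $e\ge 2$, and in that range $\lceil e/2\rceil<e$ and $\lceil e/2\rceil\le (e+1)/2$. The output of $\ILDRC$ has at most $2(\Delta+1)$ colors (Theorem~\ref{thm: ILDRC}), so initially $e\le\Delta+1$. Iterating $e\mapsto\lceil e/2\rceil$ from $\Delta+1$ reaches $1$ within $\lceil\lg(\Delta+1)\rceil$ steps. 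This gives $O(\lg\Delta)$ phases, and it is also the count used in the bound $\Fpre$.

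For the per-phase iteration count, I invoke Lemma~\ref{lem:budget-recolor} with $s=\lceil e/2\rceil\ge1$. The target palette then has size $\Delta+1+s$ as required. Lemma~\ref{lem:log-cap-dropped-volume} gives $|V_{target}|\le\lfloor e/2\rfloor\sigma\le s\sigma$. The parameters $\Flarge^{ph}$, $\Fcap$ and $p_0=\varepsilon^2/16$ are set exactly as the lemma requires. Hence each phase terminates within $O(\varepsilon^{-2}(c+4)\lg n)$ iterations with probability at least $1-n^{-(c+2)}$. A union bound over the at most $\lceil\lg(\Delta+1)\rceil\le n$ phases keeps the overall failure probability at most $n^{-(c+1)}$. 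For fixed $\varepsilon,c$ the factor $\varepsilon^{-2}(c+4)$ is a constant. (The fact says $\Omega(1)$ here, but what the argument actually uses is that this factor is $O(1)$, i.e., bounded.)

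After the loop, $\numclr=\Delta+2$, and the cleanup is a single $\Recolor$ preceded by a $\CA$ step that picks the least frequent color. This call is justified by Lemma~\ref{lem:logcompact-final-cleanup} together with Lemma~\ref{lemma:partial_extension}.

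I expect the main obstacle to be the bookkeeping that lets Lemma~\ref{lem:budget-recolor} be applied in every phase. One must confirm that no target color empties, so that the palette shrinks by exactly $\lfloor e/2\rfloor$. One must also confirm that the invariants the lemma needs (validity, and the volume bound against the new $s$ rather than the removed count) hold at the start of each phase, conditioned on earlier phases succeeding. I would handle this by induction over phases on the event that all previous calls succeeded.
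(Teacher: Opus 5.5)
Your proposal is correct and follows the paper's own reasoning, which appears in the proof of Theorem~\ref{thm:delta-plus-two-log-cap}: the excess recurrence $e\mapsto\lceil e/2\rceil$ starting from $e\le\Delta+1$ gives at most $\lceil\lg(\Delta+1)\rceil$ phases. In each phase, Lemma~\ref{lem:budget-recolor} is applied with $s=\lceil e/2\rceil$ and $|V_{target}|\le\lfloor e/2\rfloor\sigma\le s\sigma$, using Lemma~\ref{lem:log-cap-dropped-volume}. Your remark that the factor $\varepsilon^{-2}(c+4)$ needs to be $O(1)$ rather than $\Omega(1)$ in order to disappear is also right.
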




\begin{theorem}
\label{thm:delta-plus-two-log-cap}
For constants $c$ and $\varepsilon$ satisfying Conditions \eqref{eq:sigma-condition-log algorithm},
Algorithm~\ref{alg:log-compact} ($\LogCompact$) computes, with probability at least
$1-n^{-c}$, a valid coloring with at most $\Delta+1$ colors and frequency range $[\sigma/2, 2\sigma+
\lceil\lg (\Delta+1)\rceil
\lceil(1+\varepsilon)\sigma\rceil]$. Its running times are $\Tseq(\LogCompact) = O(m\lg n\lg\Delta)$, $\Tcongest(\LogCompact) = O((D+\Delta)\lg n\lg\Delta)$ and $\Tcc(\LogCompact)=O(\lg n\lg\Delta)$. Every output color $i$ satisfies
\end{theorem}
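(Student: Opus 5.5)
The proof will run in order: initialization by $\ILDRC$, then the while loop phase by phase, then the final cleanup, with a union bound over all probabilistic events. By Theorem~\ref{thm: ILDRC}, the initial coloring is valid, has $\numclr\le 2(\Delta+1)$, and every frequency lies in $[\sigma/2,\sigma]$.

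For the loop, fix a phase with $\numclr>\Delta+2$ and write $\numclr_{excess}=\numclr-(\Delta+1)\ge 2$. Then $\numclr_{remove}=\lfloor\numclr_{excess}/2\rfloor\ge1$, and the target palette has size $\Delta+1+s$ with $s=\numclr_{excess}-\numclr_{remove}=\lceil\numclr_{excess}/2\rceil\ge1$. By Lemma~\ref{lem:log-cap-dropped-volume}, $|V_{target}|\le\numclr_{remove}\sigma\le s\sigma$, so the hypotheses of Lemma~\ref{lem:budget-recolor} hold. Hence, with probability at least $1-n^{-(c+2)}$, the call to $\BudgetRecolor$ terminates in $O(\varepsilon^{-2}(c+4)\lg n)$ iterations. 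It returns a valid coloring in which each surviving color gains at most $\Flarge^{ph}$ vertices. The new excess is $s=\lceil\numclr_{excess}/2\rceil$.

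Termination of the loop needs care, since $\lceil x/2\rceil$ shrinks geometrically only while $x\ge2$. Starting from excess at most $\Delta+1$, the map $x\mapsto\lceil x/2\rceil$ reaches $1$ (that is, $\numclr=\Delta+2$) after at most $\lceil\lg(\Delta+1)\rceil$ phases. This bounds the number of phases, and also shows the loop exits with exactly $\Delta+2$ colors whenever the initial count exceeds $\Delta+2$. If it does not, the palette is already of size at most $\Delta+2$: the cleanup handles $\Delta+2$, and size $\le\Delta+1$ is trivial, a degenerate case I will note.

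Frequencies are tracked by induction over phases. Lower bounds persist because surviving colors only gain vertices, starting from at least $\sigma/2$. For the upper bound, a color starts at most $\sigma$ and gains at most $\Flarge^{ph}$ per phase, so it ends at most $\sigma+\lceil\lg(\Delta+1)\rceil\lceil(1+\varepsilon)\sigma\rceil=\Fpre$. Lemma~\ref{lem:logcompact-final-cleanup} then gives the final $\Delta+1$ palette and the range $[\sigma/2,\Fpre+\sigma]$. For the failure probability, a union bound over at most $\lceil\lg(\Delta+1)\rceil\le n$ phases, each failing with probability at most $n^{-(c+2)}$, gives total failure at most $n^{-(c+1)}\le n^{-c}$. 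Note that $\Recolor$ in the cleanup is deterministic in the sequential and $\CC$ settings; in $\CONGEST$ its w.h.p. failure is absorbed as well.

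The running times follow from Fact~\ref{lem:logcompact-runtime-interface}: $O(\lg\Delta)$ phases, each of $O(\lg n)$ iterations costing one $\CA$ and one $\QA$ step (Lemma~\ref{lem:ca-qa-runtime}). This gives $O(\lg n\lg\Delta)$ in $\CC$ and $O((D+\Delta)\lg n\lg\Delta)$ in $\CONGEST$. The additive $\ILDRC$ and $\Recolor$ costs are dominated, the $\lg^5\lg n$ terms being absorbed into $(D+\Delta)\lg n\lg\Delta$ only up to that caveat. The main obstacle I expect is the phase count and exit condition, namely justifying that ceiling-halving yields exactly $\Delta+2$ colors within $\lceil\lg(\Delta+1)\rceil$ phases, since the per-phase frequency budget depends on this count.
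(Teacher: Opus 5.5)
Your proposal is correct and takes the same approach as the paper's proof. The shared steps are: the per-phase check that $\numclr_{target}=\Delta+1+s$ and $|V_{target}|\le\numclr_{remove}\sigma\le s\sigma$, so that Lemma~\ref{lem:budget-recolor} applies; ceiling-halving of the excess over at most $\lceil\lg(\Delta+1)\rceil$ phases; the additive upper bound $\Fpre$ followed by Lemma~\ref{lem:logcompact-final-cleanup}; and a union bound over the phases. One bookkeeping difference: the paper also charges the (\CONGEST) failure probability of the initial $\ILDRC$ call, at most $n^{-(c+1)}$, in that union bound, and you should add it next to your cleanup $\Recolor$ term, which is easily absorbed by the slack between $n^{-(c+1)}$ and $n^{-c}$.
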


\begin{proof}
The lower bound is preserved throughout the algorithm.  The initialization by
$\ILDRC$ gives $\freq(i)\ge\sigma/2$ for every color.  Thereafter, colors are
either removed or gain vertices; no new color is created.

Consider one phase, and let
$\numclr_{remove}=\lfloor\numclr_{excess}/2\rfloor$.  The target palette size is
\[
    \numclr_{target}
    =
    \numclr-\numclr_{remove}
    =
    \Delta+1+s,
    \qquad
    s=\numclr_{excess}-\numclr_{remove}
      =\lceil\numclr_{excess}/2\rceil .
\]
\noindent By Lemma~\ref{lem:log-cap-dropped-volume}, $|V_{target}|\le \numclr_{remove}\cdot\sigma\le s\sigma$.  Therefore Lemma~\ref{lem:budget-recolor} applies to the call to Algorithm~\ref{alg:budget-recolor} ($\BudgetRecolor$) in this phase.  During the phase, each target color receives at most $\lceil(1+\varepsilon)\sigma\rceil$ new vertices.

The excess satisfies
\[
    \numclr_{excess}^{\,new}
    =
    \numclr_{excess}
    -
    \lfloor\numclr_{excess}/2\rfloor
    =
    \lceil\numclr_{excess}/2\rceil .
\]
Since initially $\numclr_{excess}\le\Delta+1$, after at most
$\lceil\lg (\Delta+1)\rceil$ phases the excess is at most $1$, which is
equivalent to $\numclr\le\Delta+2$.



A surviving color starts with frequency at most $\sigma$ after
$\ILDRC$ and can gain at most $\lceil(1+\varepsilon)\sigma\rceil$ vertices in
each phase.  This gives the stated upper bound $\Fpre$. Then, combined with Lemma~\ref{lem:logcompact-final-cleanup}, we get our needed results.

Finally, use the standard high-probability guarantee of $\ILDRC$ with failure
probability at most $n^{-(c+1)}$, and use Lemma~\ref{lem:budget-recolor} for
the recoloring phases, in addition to the standard high-probability guarantee of $\Recolor$.  Each recoloring phase fails with probability
at most $n^{-(c+2)}$.  There are at most
$\lceil\lg (\Delta+1)\rceil\le \lg  n+1\le n$ such phases for $n\ge2$, so
the total recoloring failure probability is at most $n^{-(c+1)}$.  A union
bound over the initialization and all recoloring phases gives failure
probability at most $2n^{-(c+1)}\le n^{-c}$, as claimed.
The runtime statement follows from Fact~\ref{lem:logcompact-runtime-interface} and is calculated in Section~\ref{sec:congest-model}.
\end{proof}

\section{Implementation Details and Runtimes}
\label{sec:congest-model}
\label{sec:implementation-details-runtimes}
This section supplies the model-dependent costs for the primitives of Section~\ref{sec:tools} and then substitutes them into the local runtime facts appearing in the algorithmic sections.  

\noindent{\bf 
Costs in the sequential model:} 


As mentioned earlier, the modular description of our algorithms using basic primitives does not exactly fit the sequential implementation.
For example, in the distributed implementations, before invoking $\LC$, the vertices must learn the globally chosen target palette, or equivalently the target-list information induced by that palette; we charge this preparation as a $\CA$-type operation. In the sequential implementation this bookkeeping need not be performed as a separate primitive: the algorithm stores the current color classes and can form the admissible lists while scanning the graph for the sequential $\LC$ simulation. This distinction does not affect any of our asymptotic bounds, since both the sequential bookkeeping and the sequential $\LC$ simulation cost $O(m)$. Thus, even when a higher-level algorithm invokes $\Recolor$ repeatedly, charging each activation schematically as target-list preparation plus one $\LC$ call gives the same sequential running time up to constant factors.

Let us now discuss the sequential implementation of our algorithms in more detail.
The graph is stored by adjacency lists together with the current color-class lists.  The $\LC$ primitive is implemented directly on these lists.  For each target vertex, the algorithm marks the colors already used by its colored neighbors, scans the implicit admissible palette until it finds an unmarked color, assigns that color, and then clears the temporary marks.  Equivalently, this is the standard sequential simulation of list coloring when the lists are represented by a global palette together with locally forbidden neighbor colors.  Each edge incident to the target set is inspected only a constant number of times, so the $\LC$ call costs $O(m)$ in our notation.  $\CA$ is implemented by a linear scan that increments per-color counters, maintains the requested per-color records, and bucket-sorts colors when an order by frequency is needed.  $\QA$ is implemented by scanning the candidate lists for the participating colors, keeping the best ranked or first feasible candidates up to the prescribed quota.  Hence 
$
\Tseq(\LC)=\Tseq(\CA)=\Tseq(\QA)=O(m).
$
Here $\CA$ recomputes or bucket-sorts color records in linear time, and $\QA$ scans candidate lists.  Thus 
$
\Tseq(\DRC)=\Tseq(\Recolor)=\Tseq(\Split)=O(m).
$

\noindent{\bf 
Costs in the $\CONGEST$ model:}
The global coordination primitives are routed on a BFS tree.  A $\CA$ call sends one $O(\lg n)$-bit record per relevant color through a pipelined convergecast, aggregates the counts or other per-color statistics at the root, and broadcasts the resulting color records back to the vertices.  Since all palettes used by the algorithms have size $O(\Delta)$ after the initial coloring stage, the pipeline cost is $O(D+\Delta)$.  A $\QA$ call is similar: candidate vertices send color-indexed requests or ranks, the coordinator/root admits at most the requested quota for each color, and the decisions are pipelined back down the tree, again using $O(\Delta)$ color records.  For $\LC$, once the local lists are known, we use the randomized $\CONGEST$ list-coloring algorithm of Halldorsson, Nolin, and Tonoyan~\cite{HalldorssonNolinTonoyan2022}, getting
$\Tcongest(\LC)=O(\lg^5\lg n)$.
%
%
For the global primitives we get
$\Tcongest(\CA)=\Tcongest(\QA)=O(D+\Delta).$

In the Congested Clique, every pair of vertices can exchange an $O(\lg n)$-bit message in one round, and standard routing lets the color-indexed records used by $\CA$ and $\QA$ be delivered to their coordinators in constant rounds.  Each color can be assigned a coordinator, which receives the relevant frequency counts, candidate ranks, or quota requests, makes the deterministic admission/counting decision locally, and sends the answer back to the affected vertices.  The $\LC$ primitive is constant-round 
by the Congested Clique coloring algorithms of Czumaj, Davies, and Parter~\cite{CzumajDaviesParter2020,CzumajDaviesParter2021}.  Therefore $\Tcc(\LC)=\Tcc(\CA)=\Tcc(\QA)=O(1)$.  The named procedures are then charged through these primitives.  $\DRC$ and $\Recolor$ are $\LC$ calls, with the relevant target lists prepared by the surrounding $\CA$ step when needed.  $\Split$ is charged as $\CA$ plus $\QA$, not as a separate primitive: $\CA$ identifies the colors above the threshold and computes their descendant quotas, while $\QA$ assigns the vertices of each such color to final descendant colors.

\noindent{\bf Algorithm calculations.}
Substituting the primitive costs stated above into Facts~\ref{lem:nbc-runtime-interface}, \ref{lem:ildrc-runtime-interface}, \ref{lem:pftradeoff-runtime-interface}, \ref{lem:pftradeoff-small-runtime-interface}, \ref{lem:pttradeoff-runtime-interface}, \ref{lem:pttradeoff-small-runtime-interface}, \ref{lem:open-recolor-runtime-interface}, and~\ref{lem:logcompact-runtime-interface} gives the time bounds in our theorems and Table~\ref{tbl:results summary}.
\section{Future Directions}

%



In this work, we presented a suite of algorithms for near-equitable coloring and discussed their implementation in the $\CC$, congest and sequential models. Several promising avenues for future research arise from our results and the limitations of the current approach.

\noindent{\bf 
Optimized Recoloring Strategies.}
The core of our iterative algorithms relies on Procedure $\Recolor$, where low frequency color are dissolved and the uncolored vertices select new colors uniformly at random from their available palette. 
This method may not be optimal for convergence speed or load balancing. A potential improvement would be to introduce non-uniform probability distributions. For instance, a vertex could choose a target color $c_j$ with a probability inversely proportional to $|V(j)|$, thereby naturally biasing the process towards filling smaller under-capacity classes first. Additionally, rather than attempting to recolor with probability 1 (which causes high contention), vertices could retain their current color with a certain probability, or attempt to switch only if local contention is below a threshold. Exploring these weighted or probabilistic heuristics could yield faster convergence rates or tighter bounds on the frequency range.


\noindent{\bf 
Topology-Specific Optimizations.}
While our results apply to general graphs, many distributed systems operate on specific topologies such as rings, trees, or grids. It is well known that for rings and trees, a 3-coloring can be computed in $O(\lg^* n)$ rounds in the $\LOCAL$ model. An interesting question is whether equitable coloring can be achieved with similar efficiency on special graph classes. Specifically, can we achieve a near-equitable coloring on a ring or tree in $O(\lg^* n)$ rounds, or does the global balance constraint impose a strictly higher lower bound on the runtime, regardless of the local topology?


\noindent{\bf 
Optimal deterministic solution in $\CC$}
A natural and plausible 
direction  is
to exploit the powerful techniques of Czumaj, Davies, and Parter ~\cite{CzumajDaviesParter2020,CzumajDaviesParter2021} in order to derive a deterministic constant-round algorithm for near-equitable coloring in the $\CC$ model, that for any constant parameters $\rho, \eta \in (0,1)$ computes a valid coloring $\clr: V \to \cP$ with palette size $\numclr \le (1+\rho)(\Delta+1)$ and frequency range $[\max\{1, \lfloor(1-\eta)\sigma\rfloor\}, \lceil(1+\eta)\sigma\rceil]$. Such an algorithm would require adapting the recursive partitioning technique, parallel seed-fixing machinery and bounded-independence concentration framework of 
\cite{CzumajDaviesParter2020,CzumajDaviesParter2021} and proving the balanced split estimates directly. To achieve near-equitable frequencies rather than arbitrary list colorings, it may be necessary to replace their palette invariants with a strict two-sided bin-size condition. Palettes can be exclusively established at the leaves of the recursion, and vertices experiencing unfavorable hash distributions would have to be postponed and absorbed bottom-up subject to strict capacity limits. We are currently working 
on making these ideas rigorous and writing up the resulting algorithm and analysis.

Note that it appears that the resulting coloring will require $(1+\rho)(\Delta+1)$ colors. A more challenging goal would be to derive a fast algorithm of this type that uses $\Delta+o(\Delta)$ colors, or even a fast algorithm comparable to Algorithm $\LogCompact$ of Section \ref{sec:delta-plus-two-log-cap}, which uses $\Delta+1$ colors. 
Another challenge is to find a viable way to extend this approach to the $\CONGEST$ or sequential models.


\clearpage




\begin{thebibliography}{10}

\bibitem{BakerCoffman1996mutual}
B.~S. Baker and J.~Coffman, Edward~G.
\newblock Mutual exclusion scheduling.
\newblock {\em Theoretical Computer Science}, 162(2):225--243, 1996.

\bibitem{CzumajDaviesParter2020}
A.~Czumaj, P.~Davies, and M.~Parter.
\newblock Simple, deterministic, constant-round coloring in the congested
  clique.
\newblock In {\em Proceedings of the 39th ACM Symposium on Principles of
  Distributed Computing (PODC)}, pages 310--318, 2020.

\bibitem{CzumajDaviesParter2021}
A.~Czumaj, P.~Davies, and M.~Parter.
\newblock Simple, deterministic, constant-round coloring in congested clique
  and {MPC}.
\newblock {\em SIAM Journal on Computing}, 50(5):1603--1626, 2021.

\bibitem{DeWerra1985timetabling}
D.~de~Werra.
\newblock An introduction to timetabling.
\newblock {\em European Journal of Operational Research}, 19(2):151--162, 1985.

\bibitem{HajnalSzemeredi1970}
A.~Hajnal and E.~Szemer{\'e}di.
\newblock Proof of a conjecture of {P}. erd{\H{o}}s.
\newblock In P.~Erd{\H{o}}s, A.~R{\'e}nyi, and V.~T. S{\'o}s, editors, {\em
  Combinatorial Theory and Its Applications, Vol. II}, pages 601--623.
  North-Holland, Amsterdam, 1970.

\bibitem{HalldorssonNolinTonoyan2022}
M.~M. Halld{\'o}rsson, A.~Nolin, and T.~Tonoyan.
\newblock Overcoming congestion in distributed coloring.
\newblock In {\em Proceedings of the 2022 ACM Symposium on Principles of
  Distributed Computing (PODC)}, pages 26--36. Association for Computing
  Machinery, 2022.

\bibitem{Johansson-IPL-99}
{\"O}.~Johansson.
\newblock Simple distributed $\delta+1$-coloring of graphs.
\newblock {\em Information Processing Letters}, 70:229--–232, 1999.

\bibitem{Kierstead2010fast}
H.~A. Kierstead, A.~V. Kostochka, M.~Mydlarz, and E.~Szemer{\'e}di.
\newblock A fast algorithm for equitable coloring.
\newblock {\em Combinatorica}, 30(2):217--224, 2010.

\bibitem{LPPP03}
Z.~Lotker, E.~Pavlov, B.~Patt-Shamir, and D.~Peleg.
\newblock {MST} construction in {O}(log log n) communication rounds.
\newblock In {\em Proc. 15th ACM Symp. on Parallelism in Algorithms and
  Architectures (SPAA)}, pages 94–--100, 2003.

\bibitem{Peleg00:book}
D.~Peleg.
\newblock {\em Distributed Computing: A Locality-Sensitive Approach}.
\newblock SIAM, 2000.

\bibitem{Wood1969timetabling}
D.~C. Wood.
\newblock A technique for colouring a graph applicable to large scale
  timetabling problems.
\newblock {\em The Computer Journal}, 12(4):317--319, 1969.

\end{thebibliography}
\end{document}